\setlist{noitemsep,parsep=6pt,partopsep=0pt,topsep=0pt}
\theoremstyle{plain}
\numberwithin{equation}{section}
\newtheorem{Proposition}{Proposition}
\newtheorem{Theorem}{Theorem}
\newtheorem{Lemma}{Lemma}
\newtheorem{Corollary}{Corollary}
\newtheorem*{FEI}{Condition FEI (\emph{Full Effort Incentives})}
\newtheorem*{FBI*}{Condition FBI*}
\newtheorem{notation}{Notation}
\renewcommand{\epsilon}{\varepsilon}
\newcommand{\condref}[1]{\hyperref[#1]{Condition FEI}\xspace} 
  \def\condref#1{Condition FEI}%
\NewDocumentCommand{\citepos}{sm}{%
  \IfBooleanTF{#1}
    {\citeauthor*{#2}'s~\citeyearpar{#2}}  
    {\citeauthor{#2}'s~\citeyearpar{#2}}   
}
\newcommand{\R}{\mathbb{R}}
\newcommand{\Rpos}{\R_{\geq0}}
\newcommand{\Z}{\mathbb{Z}}
\newcommand{\Zpos}{\Z_{\geq0}}
\newcommand{\Zstrpos}{\Z_{>0}}
\newcommand{\N}{\Zstrpos}
\newcommand{\E}{\mathbb{E}}
\newcommand{\Prob}{\mathbb{P}}
\newcommand{\Act}{A} 
\newcommand{\heff}{1} 
\newcommand{\leff}{0} 
\newcommand{\mon}{f}
\newcommand{\upo}{u_P} 
\newcommand{\uvo}{u_V} 
\newcommand{\cost}{c} 
\newcommand{\ecost}{\kappa} 
\newcommand{\vdiscount}{\delta_V} 
\newcommand{\survprob}{\rho} 
\newcommand{\aprob}{\tilde a} 
\newcommand{\belief}{\pi}
\newcommand{\bupdate}{\beta}
\newcommand{\pbound}{\eta}
\newcommand{\lrat}{\lambda}
\newcommand{\bbound}{\bar\bupdate}
\newcommand{\ooption}{u_0}
\newcommand{\prior}{\belief_0}
\newcommand{\bprior}{\bar\belief_0}
\newcommand{\Hist}{\mathcal{H}}
\newcommand{\pstrat}{\sigma_P}
\newcommand{\vstrat}{\sigma_V}
\newcommand{\brac}[1]{\left[#1\right]}
\newcommand{\paren}[1]{\left(#1\right)}
\newcommand{\curlyb}[1]{\left\{#1\right\}}
\newcommand{\actrv}{\mathbf{a}}
\newcommand{\beliefrv}{\boldsymbol{\belief}}
\newcommand{\replacerv}{\mathbf{r}}
\newcommand{\replacetimerv}{{\boldsymbol{\tau}_\replacerv}}
\newcommand{\histrv}{\mathbf{h}}
\newcommand{\actprv}{\tilde{\mathbf{a}}}
\newcommand{\eeff}{\alpha} 
\newcommand{\cval}{v} 
\newcommand{\lemnode}[1]{\hyperref[#1]{L\ref*{#1}}}
\newcommand{\propnode}[1]{\hyperref[#1]{P\ref*{#1}}}
\newcommand{\cornode}[1]{\hyperref[#1]{C\ref*{#1}}}
\newcommand{\thmnode}[1]{\hyperref[#1]{T\ref*{#1}}}
\let \savenumberline \numberline
\def \numberline#1{\savenumberline{#1.}}
  \renewcommand\@seccntformat[1]{\csname the#1\endcsname.{\hskip.7em\relax}} 
\renewenvironment{proof}[1][\proofname] {\par\pushQED{\qed}\normalfont\topsep6\p@\@plus6\p@\relax\trivlist\item[\hskip\labelsep\bfseries#1\@addpunct{.}]\ignorespaces}{\popQED\endtrivlist\@endpefalse}
\newcommand{\mailto}[1]{\href{mailto:#1}{\texttt{#1}}} 
\let\oldfootnote\footnote
\renewcommand\footnote[1]{\oldfootnote{\hspace{.5mm}#1}}
\titlespacing\section{0pt}{10pt plus 2pt minus 2pt}{4pt plus 2pt minus 2pt} 
\titlespacing\subsection{0pt}{6pt plus 2pt minus 2pt}{2pt plus 2pt minus 2pt} 
\titlespacing\subsubsection{0pt}{6pt plus 2pt minus 2pt}{0pt plus 2pt minus 2pt} 
\titlespacing{\paragraph}{%
  0pt}{
  0.5\baselineskip}{
  1em}
\newcommand{\appendixref}[1]{\hyperref[#1]{Appendix \ref{#1}}}
\newcommand{\appendixlink}{\hyperref[sec:appendix]{Appendix}}
\newcommand{\secref}[1]{\S\ref{#1}}
\begin{document}

\title{\color{DarkRed}Replacement and Reputation\thanks{We thank Richard Van Weelden for discussions that led to this project, as well as Nageeb Ali, Sandeep Baliga, Nina Bobkova, Alessandro Bonatti, Joyee Deb, Jeffrey Ely, Mehmet Ekmekci, Drew Fudenberg, Rohit Lamba, David Levine, Qingmin Liu,
Alessandro Pavan, Carlo Prato, Doron Ravid, \href{https://www.refine.ink}{Refine.ink}, Jo\~ao Ramos, Larry Samuelson, Andrzej Skrzypacz,
Bruno Strulovici, Mike Ting, Juuso V\"{a}lim\"{a}ki, 
Alexander Wolitzky, and various audiences for helpful comments. Maxim Ventura and Cole Wittbrodt provided excellent research assistance.
}}
\author{Navin Kartik\thanks{Department of Economics, Yale University.  Email: \mailto{navin.kartik@yale.edu}.} \and Elliot Lipnowski\footnote{Department of Economics, Yale University. Email: \mailto{elliot.lipnowski@yale.edu}.} \and Harry Pei\footnote{Department of Economics, Northwestern University. Email: \mailto{harrydp@northwestern.edu}.}}
\date{\today}

\maketitle
\thispagestyle{empty}

\onehalfspacing
\setlength{\parskip}{6pt plus 1pt minus 1pt} 

\begin{abstract}
Does electoral replacement ensure that officeholders eventually act in voters' interests? We study a reputational model of accountability. Voters observe incumbents' performance and decide whether to replace them. Politicians may be ``good'' types who always exert effort or opportunists who may shirk. We find that good long-run outcomes are always attainable, though the mechanism and its robustness depend on economic conditions. In environments conducive to incentive provision, some equilibria feature sustained effort, yet others exhibit some long-run shirking. In the complementary case, opportunists are never fully disciplined, but selection dominates: every equilibrium eventually settles on a good politician, yielding permanent effort.

\end{abstract}

\newpage
\setcounter{page}{1}

\section{Introduction}
\label{sec:intro}

In various delegated decision-making settings, a principal’s ultimate tool is the power to replace an agent.  A firm’s board can fire a poorly performing CEO; patients can switch health-care providers after bad service; and ineffective bureaucrats or organizational leaders may be dismissed. But the most prominent context is democratic politics, where the primary instrument voters use to control officeholders is re-election. An influential body of work, starting with \citet{Barro73} and \citet{Ferejohn86}, has studied the consequences of voters' authority to replace incumbents. Such authority is essential not only to discipline politicians' behavior, but also to select good politicians \citep{Fearon99,Besley05}.

Indeed, much further back, \citet{Mill1859} stressed the importance of the \emph{long-run} implications of selection, writing that ``The worth of a State, in the long run, is the worth of the individuals composing it.'' Yet we know surprisingly little from existing formal analyses about the {long-run} consequences of replacement in environments with both adverse selection and moral hazard. Much of the theoretical political-accountability literature (elaborated subsequently) analyzes models with short horizons or with short term limits.\footnote{This contrasts with an empirical literature on the considerable length of political careers. See, for example, structural estimates by \citet{DKM05} and a study on legislative effectiveness by \citet{PMS06}, both highlighting how performance and retention of officeholders evolve significantly over long tenures.} A fundamental question thus remains: absent other frictions, does the replacement mechanism ensure that eventually voters can identify good politicians, and/or that officeholders will act in voters' interests?

This paper answers that question in a simple and stylized model that we view as a natural benchmark. Our model takes inspiration from not only prior work on accountability, but also the {reputation} literature \citep[e.g.,][]{FL92}. While the model is applicable more broadly, for concreteness we filter it through a political lens.

A pool of long-lived politicians interacts with a sequence of short-lived (or myopic) voters. In each period, the current voter observes the incumbent's past performance and decides whether to retain him or replace him with a fresh draw from the politician pool. The officeholder---old or new---then chooses, covertly, whether to exert effort (work or shirk), which maps stochastically into performance.  Politicians are one of two types: a \emph{good} type who always exerts effort, or an \emph{opportunistic} type who values office but dislikes effort. Voters simply want effort;\footnote{They may also incur a cost of replacing incumbents, but we suppress that for this introduction.} but this means they instrumentally benefit from good types, and politicians have an incentive to build a reputation. Apart from the retention decision, voters have no recourse to any other incentive instruments such as transfers.

In this setting, we obtain a sharp but nuanced answer to the motivating question of long-run outcomes. On the one hand, there is always at least {one} equilibrium that eventually has sustained effort from incumbents.\footnote{As elaborated in \autoref{sec: model}, our equilibrium notion is that of ``personal symmetric'' weak perfect Bayesian equilibrium. Roughly speaking, ``personal'' means that the interaction between every incumbent and the voters is self-contained, and ``symmetric'' means it is ex-ante identical across all the politicians.} On the other hand, whether this good long-run outcome arises in \emph{every} equilibrium depends on the economic environment. Intuitively, the environment is conducive to incentive provision if effort costs are low, monitoring (i.e., how performance reflects effort) is precise, and politicians are patient. Then there is an equilibrium in which even opportunistic incumbents exert effort in every period (even in the short run). But we show that in this case, there also exist equilibria in which, even in the long run, opportunistic politicians will periodically hold office and shirk. We identify a novel feature of all such equilibria that do not deliver good long-run outcomes: \emph{replacement despite a favorable reputation}. That is, voters sometimes replace an incumbent even when his past performance indicates that he is more likely to be a good type than the average politician. Voters do so when they (correctly) anticipate that an opportunistic incumbent with the favorable reputation is sufficiently more likely to shirk than a newly-installed opportunist. The upshot is that in environments conducive to incentive provision, equilibrium selection---which can be viewed as political norms---crucially matter for outcomes, even in the long run.

Now consider the complementary case when parameters are not conducive to incentive provision: there is no equilibrium that sustains effort in every period from opportunists. Paradoxically perhaps, this turns out to guarantee effort in the long run. The mechanism operates through selection.  We show that now, in every equilibrium, every opportunistic officeholder is eventually replaced, whereas eventually some good-type incumbent will be retained forever. The logic assuring long-run selection is subtle. To appreciate why, suppose good types are scarce. We establish that the value of a newly-installed politician (a voter's endogenous ``outside option'') is then \emph{arbitrarily low} across all equilibria, meaning that newly-installed officeholders almost certainly shirk. That makes voters reluctant to replace any incumbent. However, it turns out that opportunists are always driven to histories at which voters believe the incumbent is overwhelmingly likely to be an opportunist who will shirk---so much so that the low outside option is preferable. Conversely, every good type has a chance of reaching and then maintaining a high enough reputation to avoid replacement. 

The dichotomy in the two previous paragraphs is summarized in our main result, \autoref{thm: equivalence theorem}: there is no equilibrium that has ``full effort'' (incumbents exert effort in every period) if and only if all equilibria have ``eventual full effort'' (after some point, incumbents exert effort in all subsequent periods). The result crystallizes a sharp tradeoff between the \emph{possibility} of strong discipline (i.e., even in the short run) and the \emph{guarantee} of either discipline or selection in the long run.

\paragraph{Beyond Electoral Accountability.} 
\hypertarget{para:bureaucracy}{Although} we cast our analysis in terms of politicians and voters, our framework applies more broadly to other principal-agent settings with replacement. For example, consider the relationship between politicians and career bureaucrats. Earlier work has modeled bureaucrats as long-lived agents and politicians as short-lived principals subject to electoral turnover \citep{SpillerUrbiztondo94, AT07, HuberTing21}. 
Indeed, the horizon asymmetry in this relationship can be viewed as a consequence of politicians' electoral accountability: the pressure to satisfy voters transmits a short horizon to politicians in their bureaucratic relationships. More broadly, `short-lived' and `long-lived' are relative, defined with respect to the counterparty in a given accountability relationship. 

For this application, our results connect to two themes. First, they speak to the perils of bureaucratic politicization. In environments conducive to strong incentives, there can be bad equilibria characterized by persistent turnover and shirking. Such dynamics offer a cautionary perspective on 
anti-entrenchment efforts: bureaucratic replacement can become a self-fulfilling norm, destroying the reputational incentives that drive performance and resulting in a permanent cycle of mediocrity. This interpretation of our results is consistent with \citepos{AkhtariMoreiraTrucco22} empirical evidence linking bureaucratic replacement and poor performance to politician turnover. By contrast, when incentives are limited, we show that selection is powerful in the long run. This aligns with 
\citepos{BesleyGhatak05} and \citepos{Prendergast07} emphasis on the identification and retention of intrinsically motivated bureaucrats.

\paragraph{Related Literature.} Our paper primarily contributes to and sits at the intersection of two literatures: electoral accountability and reputation formation. 

In the accountability literature, the idea that replacement is an instrument for selection, rather than just controlling moral hazard, was popularized by \citet{Fearon99}.\footnote{We will not attempt to discuss the large literature on accountability with just moral hazard; see \citet{DM17survey} for a survey.} Much of the literature studying selection, surveyed by \citet{Ashworth12} and \citet{DM17survey}, uses models with either just two periods or a two-term limit; exceptions include \citet{BS93}, \citet{Duggan00}, \citet{BDH04}, and \citet{Schwabe09}. 
None of these papers, nor virtually any others, have results on selection leading to voter-optimal outcomes in the long run---in any equilibrium, much less all equilibria, as we find under some conditions. The only exception  we are aware of is {\citet{anesi2022making}.\footnote{In their extension that incorporates adverse selection, \citet{ALR25} study a model with an opportunistic type and a bad type who always shirks. They show that a voter-optimal equilibrium weeds bad politicians out from office; yet, because opportunistic politicians still face moral hazard, equilibria need not yield sustained effort in the long run. We discuss adding bad types to our model in \autoref{sec:discussion}.
Another paper concerned with long-run outcomes is \citet{DugganForand25}; 
they assume that an incumbent's type is publicly revealed after he assumes office, similar to one section of \citet{KVW19} that does not have term limits but assumes types are revealed after two periods in office. Such exogenous type revelation limits (in the latter paper) or even eliminates (in the former) the reputation-building horizon.  \citet{KVW19} focus on how their short-horizon reputation-building affects incumbency advantage. \citet{DugganForand25} focus on how incumbents can use current policies to manipulate future states and policies, and how patient voters may nevertheless attain good outcomes.} They construct one equilibrium of their model that is approximately first best when each politician's type follows an irreducible Markov chain and all agents---politicians and the voter---are almost fully patient. By contrast, we have fully-persistent politician types with fixed discounting, and short-lived (or impatient) voters. 
\citepos{anesi2022making} construction relies on patient voters using sophisticated punishment schemes that are themselves supported by ``grim-trigger'' strategies; in our setting, it is instead the very inability to discipline opportunists that guarantees eventual positive selection.

When our environment is conducive to incentive provision, our construction of an equilibrium in which incumbents shirk even in the long run uncovers a new mechanism for inefficiency. It owes to the dynamics of opportunistic incumbents' effort over their careers, and the resulting replacement despite a favorable reputation described earlier.\footnote{One can contrast this equilibrium feature with the literature on managerial turnover with dynamic contracting; see, for example, \citet{GP12} and references therein. Papers in that literature characterize ex-ante optimal retention and compensation policies for a long-lived principal with commitment power. Optimal contracts often exhibit ``entrenchment'' in the sense that retention standards become more lenient over time to reduce information rents. Such entrenchment also features in certain analyses of accountability, like \citet*{ALR25}.} In \autoref{subsec:bad-longrun} we explain why the mechanism is distinct from that of the inefficient equilibrium of \citet{Myerson06}, which owes to large replacement costs. The coexistence of good and bad equilibria in our incentive-conducive environments may also be reminiscent of folk theorems in repeated complete-information games \citep[e.g.,][]{fudenberg1986folk}. However, we have short-lived voters who may learn about politician types (so there is incomplete information), which makes the reputation literature discussed next---with quite different prior results---the more relevant comparison.

Most models in the reputation literature have a \emph{fixed} long-lived player interacting with a sequence of short-lived players.
\citet{FL89,FL92} derive tight bounds for the long-lived player's equilibrium payoffs as he becomes fully patient. \citet*{CMS04} show that when monitoring is imperfect, although the \citet{FL92} bounds apply ex ante, the long-lived player will eventually lose his reputation in the sense that his opponents will learn his type.\footnote{\label{fn:CMS-breaking}A number of subsequent papers propose models in which reputation can be sustained in the long run. The mechanisms include competition between multiple long-lived players \citep[e.g.,][]{Horner02,DebFanning25}, the long-lived player's type changing over time \citep*[e.g,][]{phelan2006public, wiseman08, EGW12}, and the short-lived players having limited memories \citep[e.g.,][]{liu2014limited,Pei2024}.} By contrast, our model has multiple long-lived players, and our focus is on their endogenous replacement. The possibility of replacement implies that the short-lived players may not learn  any given long-lived player's type, and because the replacement probabilities are endogenous, the long-lived players' effective discount factors are endogenous and evolving, possibly bounded away from full patience. We instead focus on outcomes from the short-lived players' vantage. 

Our replacement structure is reminiscent of \citet{fudenberg1987incomplete}, who study a Coasian bargaining model. Similar to our paper, their seller has an ``outside option'', whose value is determined endogenously, to restart the interaction afresh with a new buyer. Their analysis concerns trading dynamics that end with a sale; inter alia, they show how the switching option allows the seller to effectively commit to a price in some equilibria. By contrast, we study an infinitely repeated reputation model with moral hazard. We analyze how replacement motivates effort and sorts types, with a focus on characterizing when these forces assure good outcomes in the long run.

There is also prior work on reputation models with competition between long-lived players, notably \citet{Horner02} and, more recently, \citet{DebFanning25}. Besides other differences (e.g., in the behavior of committed types and, in the former paper, endogenous pricing), an important distinction is the nature of their short-lived players' outside option. 
In these competitive models, the consumers' outside option is the equilibrium value of switching to a rival firm (whether a previously established firm or a new entrant), which evolves with the history and produces alternative forces to simply restarting afresh.\footnote{Naturally, considerations of when to restart afresh are not present when the long-lived player exits according to some exogenous process \citep*[e.g.,][]{mailath2001wants,Tadelis02}.}

Finally, ongoing work by \citet{Ali25} studies a model related to ours. He has only two politicians (among whom voters may rotate), and studies a class of equilibria with a ``pure Markovian'' structure. 
While \citet{Ali25} also finds conditions under which all his equilibria have eventual full effort, the mechanism does not operate through selection; to the contrary, a full-effort equilibrium exists under his conditions. \autoref{sec:discussion} elaborates.

\section{Model}\label{sec: model}

\paragraph{Setting.} We consider an infinite-horizon game in discrete time, with periods indexed by $t\in \Zpos\equiv \{0,1,2,\ldots\}$.
There is a pool of ex-ante identical long-lived players,  whom we refer to as \textit{politicians}. Each period, only one long-lived player, whom we call the \textit{incumbent}, is active, or synonymously, holds office. An incumbent remains active until he is {replaced}, after which he is never active again.
There is also an infinite sequence of short-lived players, whom we call \textit{voters}. The voters arrive one in each period, and each voter plays the game only in the period they arrive.

In each period other than the initial period, the current voter decides whether to \emph{keep} the previous period's incumbent or \emph{replace} him with a new politician. In the initial period $0$ (there being no prior incumbent), the voter makes no choice.} (We will typically suppress the qualifier ``current'' hereafter.) 
Then, the incumbent chooses an action $a \in \Act:= \{\leff,\heff\}$. We refer to $a=0$ as \emph{shirking} and $a=1$ as, interchangeably, \emph{working} or \emph{exerting effort}. This action stochastically generates a public signal $s$ drawn from a finite set $S$ according to a monitoring structure $\mon:A\to\Delta S$. Let $\mon_a(s)$ denote the probability of signal $s$ conditional on action $a\in \Act$, and let $\mon_{\tilde a}(s):=\tilde a\mon_1(s)+(1-\tilde a)\mon_0(s)$ denote the aggregate probability of signal $s$ if the incumbent exerts effort with probability $\tilde a\in[0,1]$.

When a politician is active, his stage-game payoff from action $a$ is $\upo(a)$. 
Once a politician is replaced, he is never again active and receives a fixed payoff, which we normalize to $0$. 
The voter's payoff from action $a$ is $\uvo(a)$, and they further incur an additive cost $\cost \geq 0$ if they replace the incumbent.\footnote{We allow for a positive replacement cost to clarify that our analysis is not driven by excessive voter indifference, e.g., being indifferent to replacement when they expect the same incumbent behavior after retention or replacement. We will be interested in the case where the replacement cost is not too large.} We assume that the voter's preferred action is different from the incumbent politician's (myopically) favorite action, and that a politician's least favorite outcome is to be replaced. Without further loss of generality, we normalize payoffs so that $\uvo(a)=a$ and $\upo(a)=1-\ecost a$ for some $\ecost\in(0,1)$. We also assume the monitoring structure is informative, that is, $\mon_0(\cdot)\neq\mon_1(\cdot)$. Finally, assume no signal perfectly reveals the politician to be shirking, that is, $\mon_\heff(s)>0$ for every $s\in S.$

Following \citet{Besley06} and \citet{BesleySmart07}, among others, we take seriously that some politicians are intrinsically motivated. With probability $\prior \in (0,1)$, a given politician is a \textit{good type} who always plays $a=\heff$ when active. With complementary probability $1-\prior$, he is an \textit{opportunistic type} who maximizes the expectation of his discounted average payoff $(1-\delta) \sum_{t=\tau_0}^{\tau_1-1} \delta^{t-\tau_0} \upo(a_t)$, where $a_t$ is his period-$t$ action, $\tau_0$ is the period that he becomes active, $\tau_1$ is the period that he is replaced (which we label as $\infty$ if he is never replaced), and $\delta \in (0,1)$ is his discount factor. Each politician's type is his private information, and types are independent. Hence, after a voter replaces an incumbent, a newly drawn politician is the good type with probability $\prior$. We assume that the replacement cost (which can be zero) is small relative to this uncertainty: $\cost<\min\{\prior,1-\prior\}.$

\paragraph{Interpretations.} 
An incumbent's choice of effort can be interpreted broadly to capture various familiar conflicts of interest in political agency. Working $(a=1)$ can correspond to costly activities that benefit the electorate, such as the effective provision of public goods. Conversely, shirking $(a=0)$ can represent rent-seeking behavior, such as engaging in corruption or implementing inefficient pork-barrel projects that benefit special interests.

We model voters as short lived for two reasons. First, it is a way to capture an electorate composed of many long-lived but uncoordinated voters. Rather than a mass electorate coordinating on complex strategies guided by forward-looking considerations, we take the view that each voter simply votes for the alternative that maximizes their payoff in the current period. Such myopic behavior---which can also be viewed as an equilibrium refinement with many voters, none of whom is pivotal---is mathematically equivalent to a sequence of short-lived voters. Second, we are ruling out sophisticated forward-looking coordination schemes between politicians and voters. Ultimately, as in the canonical reputation literature, our assumption sharpens the focus on how opportunistic politicians' career concerns, rather than voters' long-term considerations, shapes outcomes. The assumption is also appropriate for other settings in which principals are transient, such as the relationship between elected officials and the bureaucracy discussed in the \hyperlink{para:bureaucracy}{Introduction}. We return to this issue in \autoref{sec:discussion}.

\paragraph{Strategies and equilibria.} We will focus on equilibria in which players' behavior depends only on the current officeholder's past outcomes. To that end, the \emph{(public) career history} of a given incumbent records the sequence of public signals he has generated in the past. Let $\Hist:=S^{<\infty}=\bigsqcup_{t=0}^\infty S^t$ denote the set of career histories.\footnote{The full public history---the full history available to voters---consists of an incumbent career history, all past replacement decisions, and the career histories associated with all previously replaced politicians.} Note that $\Hist\setminus\{\emptyset\}=\Hist\times S$. Our solution concept entails a politician strategy $\pstrat:\Hist\rightarrow[0,1]$, a voter strategy $\vstrat:\Hist\setminus\{\emptyset\}\rightarrow[0,1]$, and a voter belief map $\belief:\Hist \rightarrow[0,1]$. Here, $\pstrat(h)$ denotes the probability that an opportunistic incumbent works when his career history is $h$; $\vstrat(h)$ denotes the probability the current voter replaces an officeholder whose career history is $h\neq \emptyset$; and $\belief(h)$ is the probability voters assign to a politician being the good type (his \emph{reputation}) when his career history is $h$. Given such a triple, it is useful to record the voter-expected incumbent efforts and the opportunistic incumbents' continuation values. Formally, suppressing their dependence on $(\pstrat,\vstrat,\belief)$ for brevity, we define $\eeff:\mathcal H\to [0,1]$ via
$$\eeff(h):=\belief(h)+\brac{1-\belief(h)}\pstrat(h),$$
and $\cval:\mathcal H\to [0,1]$ as the unique solution to the recursive equation 
$$\cval(h)=(1-\delta)\brac{1-\ecost\pstrat(h)}+\delta\sum_{s\in S}
\mon_{\pstrat(h)}(s) \brac{1-\vstrat(h,s)} \cval(h,s),\ \forall h\in\Hist.
$$

We call a triple $(\pstrat,\vstrat,\belief)$ a \hypertarget{def:eqm}{\textbf{personal symmetric weak perfect Bayesian equilibrium}} (hereafter \textbf{equilibrium}) if it satisfies the following three properties: \begin{enumerate}
    \item Any $h\in\Hist$ has $$\pstrat(h)\in\arg\max_{a\in[0,1]}\curlyb{(1-\delta)(1-\ecost a) + \delta\sum_{s\in S}\mon_a(s)\brac{1-\vstrat(h,s)}\cval(h,s)}.$$
    \item Any $h\in\Hist\setminus\{\emptyset\}$ has $\vstrat(h)\in\arg\max_{x\in[0,1]}\curlyb{(1-x)\eeff(h) + x\brac{\eeff(\emptyset)-\cost}}$.
    \item \label{eqm3} We have $\belief(\emptyset)=\prior$, and any $h\in\Hist$ and $s\in S$ with either $h=\emptyset$ or $\vstrat(h)<1$ has $${\mon_{\eeff(h)}(s)}\belief(h,s)=
    {\belief(h)\mon_{\heff}(s)}.
    $$
\end{enumerate}

The first equilibrium condition is a recursive formulation of sequential rationality for incumbents.  
The second condition is for voters: being short-lived, they make replacement decisions that maximize 
the current effort they expect from an officeholder net 
of replacement cost. The third condition incorporates voters' Bayesian updating: it requires that (i) any newly-installed incumbent has reputation $\pi_0$, the prior probability of a good type; and (ii) reputation at any career history $(h,s)$ is derived from Bayes' rule applied to $h$ so long as either $h=\emptyset$ (so $s$ is the first signal generated by the incumbent) or the incumbent has been retained with positive probability at $h$.\footnote{We refer to our solution concept as ``weak'' PBE because beliefs can be arbitrary at certain non-initial career histories that are off path---specifically, at any $(h,s)$ with $h\neq \emptyset$ and $\sigma_V(h)=1$. It is still stronger than textbook definitions of weak PBE that impose no belief restrictions off path \citep[Definition 9.C.3, p.~285]{MWG95}. \autoref{subsec:main-theorem} elaborates on our concept vs.~``full'' PBE.}

Following \citepos{BS93} terminology, the ``personal'' nomenclature in our equilibrium concept reflects that the interaction between any officeholder and the voters depends only on the sequence of signals generated by that politician---not, for example, on the calendar time at which the politician took office nor on what transpired prior to when he took office.
\footnote{We have also imposed that the politician conditions only on his own \emph{public} career history, not on his own past actions. This restriction is for notational simplicity: Because the game has ``product monitoring'', standard arguments show this feature is without loss in terms of outcome equivalence.} The ``symmetric'' adjective reflects that all opportunistic officeholders respond to their career history in the same way and are treated identically by voters.  
Combined, these two properties reflect the premise that the role of replacement is to start the interaction with a new politician afresh; put differently, any equilibrium has a single (endogenously-determined) ``outside option''---a voter's value from replacement.
This approach is standard in the political-economy literature; besides \citet{BS93}, see, for example, \citet{Duggan00} or \citet{VanWeelden13}. It is also tantamount to \citepos*{fudenberg1987incomplete} ``stationarity'' restriction. As we shall see, it does not preclude rich dynamics within any 
incumbent's career.

We highlight that our setting is \emph{not} a ``trust game'' in which a principal's refusal to trust effectively ends the stage-game interaction. That game would trivially have a ``no trust, no effort'' equilibrium. By contrast, replacement in our model installs a new incumbent who makes an active effort choice. There is no ``always replace, never work'' equilibrium: the presence of good types implies that if opportunists were never to work, there would be learning and a voter would never replace an incumbent at a history with favorable reputation (i.e., at $h$ such that $\belief(h)>\prior$).

As a preliminary step, we ensure---non-constructively---equilibrium existence despite our refinements.

\begin{Proposition}\label{prop: eqm exists}
An equilibrium exists.
\end{Proposition}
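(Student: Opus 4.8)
The plan is the standard route for discounted games with countably many ``information sets'': exhibit an equilibrium as a fixed point of a best-reply-and-consistency correspondence on a compact convex domain, via Kakutani--Fan--Glicksberg. Let $\mu:=\min_{s\in S}\mon_\heff(s)$, which is strictly positive because $S$ is finite and no signal rules out effort, and let $D:=\prod_{h\in\Hist}\brac{\prior\mu^{|h|},1}$, where $|h|$ denotes the number of signals in $h$. I would look for an equilibrium in $X:=[0,1]^{\Hist}\times[0,1]^{\Hist\setminus\{\emptyset\}}\times D$, a compact, convex, metrizable subset of the locally convex space $\R^{\Hist}\times\R^{\Hist\setminus\{\emptyset\}}\times\R^{\Hist}$ with the product topology (compactness by Tychonoff, metrizability because $\Hist$ is countable). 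Confining the belief coordinates to $D$ is without loss for \emph{existence}: the equilibrium definition imposes nothing on beliefs at any $(h,s)$ with $h\neq\emptyset$ and $\vstrat(h)=1$, so a fixed point found in $X$ is still an equilibrium.

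Given a point $(\pstrat,\vstrat,\belief)\in X$ I would form the induced objects. Setting $\eeff(h):=\belief(h)+[1-\belief(h)]\pstrat(h)$ gives a map continuous in $(\belief,\pstrat)$. For $\cval$, the operator $(Tw)(h):=(1-\delta)[1-\ecost\pstrat(h)]+\delta\sum_{s}\mon_{\pstrat(h)}(s)[1-\vstrat(h,s)]\,w(h,s)$ maps $[0,1]^{\Hist}$ into itself and is a $\delta$-contraction in the sup norm, so it has a unique fixed point $\cval=\cval(\pstrat,\vstrat)$; since $T^nw$ converges to $\cval$ uniformly at a geometric rate independent of the profile while each coordinate $(T^nw)(h)$ depends continuously on finitely many profile coordinates, $\cval$ is continuous in $(\pstrat,\vstrat)$.

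I would then define $\Phi=\Phi_P\times\Phi_V\times\Phi_\pi$ on $X$. $\Phi_P$ sends a point to $\prod_{h}\arg\max_{a\in[0,1]}\curlyb{(1-\delta)(1-\ecost a)+\delta\sum_s\mon_a(s)[1-\vstrat(h,s)]\cval(h,s)}$; since $\mon_a(s)=a\mon_\heff(s)+(1-a)\mon_\leff(s)$ is affine in $a$, each factor is a nonempty compact subinterval of $[0,1]$, upper hemicontinuous by Berge's theorem (using continuity of $\cval$), so the product is uhc with nonempty convex compact values. $\Phi_V$ is analogous, built from the affine-in-$x$ objective $(1-x)\eeff(h)+x[\eeff(\emptyset)-\cost]$. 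For $\Phi_\pi$, I would send $(\pstrat,\vstrat,\belief)$ to the set of $\belief'\in D$ with $\belief'(\emptyset)=\prior$ and, for each $(h,s)$, $\belief'(h,s)=\belief(h)\mon_\heff(s)/\mon_{\eeff(h)}(s)$ whenever $h=\emptyset$ or $\vstrat(h)<1$, and $\belief'(h,s)$ free in $\brac{\prior\mu^{|h|+1},1}$ otherwise. On $D$ one has $\mon_{\eeff(h)}(s)\ge\eeff(h)\mon_\heff(s)\ge\prior\mu^{|h|}\mu>0$, so the Bayes ratio is well-defined, continuous in $(\belief,\pstrat)$, and (using $\eeff(h)\ge\belief(h)$ and $\mon_\heff\ge\mu$) lies in $\brac{\prior\mu^{|h|+1},1}=\brac{\prior\mu^{|(h,s)|},1}$; hence $\Phi_\pi$ has nonempty convex compact values in $D$, and it is uhc because as $\vstrat(h)\uparrow1$ the value is the singleton Bayes ratio while at $\vstrat(h)=1$ it is the whole interval containing that ratio --- the correspondence can only jump \emph{up}, which is compatible with upper hemicontinuity.

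Kakutani--Fan--Glicksberg then yields a fixed point $(\pstrat,\vstrat,\belief)$, at which the induced $\cval,\eeff$ are exactly the objects appearing in the equilibrium conditions: conditions 1 and 2 hold by the defining properties of $\Phi_P,\Phi_V$, and condition 3 holds because $\belief(\emptyset)=\prior$ and, at every $(h,s)$ with $h=\emptyset$ or $\vstrat(h)<1$, the fixed-point identity gives $\mon_{\eeff(h)}(s)\belief(h,s)=\belief(h)\mon_\heff(s)$, while the remaining off-path beliefs are unrestricted by the definition. I expect the main obstacle to be the construction of $\Phi_\pi$: the divisor $\mon_{\eeff(h)}(s)$ can vanish when reputation and effort jointly collapse to zero, and the Bayes requirement disappears discontinuously at $\vstrat(h)=1$. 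Restricting beliefs to $D$ neutralizes the first issue (at no cost for existence, as noted), and keeping $\belief$ as a fixed-point variable rather than solving it out as a function of $(\pstrat,\vstrat)$ neutralizes the second, since a set-valued update can absorb the discontinuity while remaining uhc. Everything else --- the contraction estimate, the Berge/Tychonoff bookkeeping, and the affineness that delivers convex argmax sets --- is routine.
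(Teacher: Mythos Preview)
Your proof is correct, but it takes a different route from the paper's. The paper does not attack the fixed point directly on the triple $(\pstrat,\vstrat,\belief)$. Instead it introduces an auxiliary one-politician game in which the first-period action is a continuous choice $\aprob_0\in[0,1]$ and the voter's payoff from replacement equals $\prior+(1-\prior)\aprob_0-\cost$; a semi-pure PBE of that auxiliary game corresponds exactly to a personal symmetric equilibrium of the original model. Existence in the auxiliary game is then obtained by the classical route of finite-horizon truncation plus trembles, applying Debreu--Fan--Glicksberg in agent form and passing to limits as in \citet{kreps1982sequential} and \citet{fudenberg1983subgame}. Your approach is arguably more direct: you keep the original game, treat beliefs as a fixed-point coordinate, and absorb the discontinuity at $\vstrat(h)=1$ into the set-valued map $\Phi_\pi$; the lower-bounding domain $D=\prod_h[\prior\mu^{|h|},1]$ is a clean device that keeps the Bayes denominator bounded away from zero while, as you note, costing nothing because weak PBE leaves off-path beliefs unrestricted. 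The paper's detour buys something your argument does not: by construction (trembles make every history reached), it actually delivers a \emph{full} PBE, i.e., the version of the solution concept in which the Bayesian condition is imposed even after certain replacement. For the proposition as stated---existence of an equilibrium in the weak-PBE sense---your argument suffices.
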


The proof in \appendixref{sec: proof that eqm exists} constructs an auxiliary game whose perfect Bayesian equilibria (PBE) are equilibria of our game.
The auxiliary game has only one politician. It mirrors the interaction in our game between a given politician and the voters, modifying it in two ways. First, the politician's first-period choice is $\aprob\in [0,1]$, representing the probability with which he chooses action $\heff$; in all other periods, the politician makes a choice from $A\equiv \{0,1\}$. While the politician can mix in subsequent periods,  we require him to not mix (over the action set $[0,1]$) at the initial history.
Second, the voter's payoff when they replace the politician is $\prior+\aprob(1-\prior)-\cost$.  Intuitively, our specification of the voter's payoff upon firing the politician captures the symmetry requirement in the original game. We require the politician not to mix in the first period to avoid voters updating beliefs as play progresses about their payoff from replacement (contrary to symmetry in the original game). Making the politician's first-period choice continuous secures the existence of such PBE in the auxiliary game, following standard arguments \`a la \citet{kreps1982sequential} and \citet{fudenberg1983subgame}.

\section{Results}
\label{sec:results}

Our goal is to study whether replacement ensures that incumbents exert effort. To that end, we introduce two criteria.  Using bold font for random variables, let $\actrv_t\in\{0,1\}$ indicate the incumbent's chosen effort in period $t$, which will in particular be equal to $1$ if the incumbent is the good type. We say that an equilibrium \textbf{attains full effort} if $\actrv_t=1$ almost surely for every $t\in\Zpos$, and it \textbf{attains  eventual full effort} if there almost surely exists some $\boldsymbol{\tau}\in\Zpos$ such that $\actrv_t=1$ for all $t\geq \boldsymbol{\tau}$. In words, an equilibrium attains full effort (FE) if incumbents always work (on the path of play), and it attains eventual full effort (EFE) if, eventually, incumbents work in every period.\footnote{One might be interested in weaker variants of both FE and EFE. Not only would our positive results obviously extend, but, as elaborated in Subsections \ref{subsec:FEIandFE} and \ref{subsec:bad-longrun}, so do our negative results. 
} The distinction between FE and EFE is thus one of timing; an equilibrium with FE also has EFE, but the converse is not generally true. In addition, one can also draw a distinction between whether, for any given parameters, either property holds in \emph{no} equilibrium, \emph{some} equilibrium, or \emph{all} equilibria.

A strong desideratum is that all equilibria attain FE; conversely, it would be troubling if no equilibrium attains even EFE. We will see that neither of these is possible, no matter the parameters.\footnote{One can show that absent good types, no equilibrium attains EFE when \condref{FEI} below fails. Even with a good type, no equilibrium attains EFE in the canonical noisy-monitoring product-choice game between a long-lived firm and short-lived consumers when there is no possibility of replacement \citep*{CMS04}. In a different model, \citet{Pei2024} presents a result in the vein of all equilibria attaining FE.} Instead, our main result is that parameters divide sharply into two cases: either there is an equilibrium that attains FE, but then there is also an equilibrium that fails even EFE; or there is no equilibrium that attains FE, but then all equilibria attain EFE. The following table summarizes.

\begin{table}[h!]
\centering
\begingroup
\setlength{\tabcolsep}{10pt}
\renewcommand{\arraystretch}{1.5}
\begin{tabular}{r @{\hspace{15pt}} @{}c c @{} c@{}}
\toprule
 & \text{Effort always (FE)} & \text{Effort eventually (EFE)} \\
\midrule
\text{Condition FEI} 
  & \textit{Some} equilibrium
  & \textit{Some} equilibrium \\
\text{$\lnot$ Condition FEI} 
  & \textit{No} equilibrium
  & \textit{All} equilibria \\
\bottomrule
\end{tabular}
\endgroup
\caption{The equilibrium possibilities. Full Effort Incentives (FEI) is a condition on parameters.}
\label{table:possibilities}
\end{table}

The table's \condref{FEI} on parameters is the following:

\begin{FEI}\label{FEI}
Some $\cval: S\to \Rpos$ exists such that
\begin{equation}\label{IC for FB}
 (1-\delta)(1-\ecost) +\delta \sum_{s \in S} \mon_1(s)\cval(s) \geq 
 (1-\delta) 1 +\delta\sum_{s \in S} \mon_0(s)\cval(s) \tag{$\text{IC}_{\text{FEI}}$}
\end{equation}
and
\begin{equation}\label{PK for FB}
 (1-\delta) (1-\ecost) +\delta\sum_{s \in S} \mon_1(s)\cval(s) \geq \max_{s \in S} \cval(s).\tag{$\text{PK}_{\text{FEI}}$} 
\end{equation}
\end{FEI}

\condref{FEI} is independent of politicians' initial reputation $\prior$ and voters' cost of replacement $\cost$ because the condition stems from contemplating an auxiliary game without good types and without a replacement cost. 
In the auxiliary game, if all incumbents always exert effort, then voter incentives are trivially satisfied. Standard recursive arguments then imply that under some conditions, there are equilibria in which appropriate voter behavior can incentivize incumbents to always exert effort. The associated incentive-compatibility and promise-keeping constraints are precisely those in \condref{FEI}, inequalities \eqref{IC for FB} and \eqref{PK for FB} respectively. \condref{FEI} is satisfied if and only if the politicians' discount factor $\delta$ is high enough, their effort cost $\ecost$ is small enough (recall we have normalized their benefits of holding office to $1$), and the monitoring structure $\mon$ is sufficiently informative.\footnote{We emphasize that \condref{FEI} is a joint condition on the parameters $(\delta,\ecost,\mon)$. So, for instance, the required discount factor is higher when the effort cost is higher or the monitoring structure is less informative in the \citet{Blackwell53} sense. As an illustration, with binary signals of precision $p\in (1/2,1)$ (that is, $S=\{0,1\}$ and $\mon_a(a)=p$), it can be verified using \autoref{lem: cutoff rule for first-best equilibrium}---which shows that, no matter the monitoring structure, there no loss in restricting to $\cval(s)\in\{0,\bar\cval\}$ for some $\bar\cval>0$---that \condref{FEI} reduces to $$\delta\geq \frac{\kappa}{p-(1-p)(1-\kappa)},$$
where the right-hand side is decreasing in $p$ and increasing in $\ecost$.} 
Parameters that satisfy \condref{FEI} thus correspond to environments that are conducive to incentive provision.

\subsection{Main Result}
\label{subsec:main-theorem}

We now present our main result.

\begin{Theorem}\label{thm: equivalence theorem}
The following are equivalent:

\begin{enumerate}
  \item \label{allEFE} All equilibria attain eventual full effort;
  \item \label{noneFB} No equilibrium attains full effort;
  \item \label{FBIfails} \condref{FEI} fails.
\end{enumerate}
\end{Theorem}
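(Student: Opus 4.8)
The plan is to establish the cycle $\lnot\ref{FBIfails}\Rightarrow\text{(some equilibrium attains FE)}\Rightarrow\lnot\ref{noneFB}$, then $\ref{noneFB}\Rightarrow\ref{FBIfails}$ (contrapositive: $\lnot\ref{FBIfails}\Rightarrow\lnot\ref{noneFB}$, which is the same implication just proven), and finally $\ref{FBIfails}\Rightarrow\ref{allEFE}$; together with the trivial $\ref{allEFE}\Rightarrow\ref{noneFB}$ (an equilibrium attaining FE also attains EFE, but more directly: if all equilibria attain EFE then no equilibrium can attain FE unless... actually the clean direction is $\ref{allEFE}\Rightarrow\ref{noneFB}$ is \emph{not} immediate, so I will instead close the triangle as $\ref{FBIfails}\Rightarrow\ref{allEFE}\Rightarrow\ref{noneFB}\Rightarrow\ref{FBIfails}$). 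Let me restate the logical skeleton: (i) $\lnot\ref{FBIfails}\Rightarrow\lnot\ref{noneFB}$, i.e. \condref{FEI} implies \emph{some} equilibrium attains FE; (ii) $\ref{FBIfails}\Rightarrow\ref{allEFE}$, i.e. failure of \condref{FEI} implies \emph{all} equilibria attain EFE; (iii) $\ref{allEFE}\Rightarrow\ref{noneFB}$, i.e. if all equilibria attain EFE then no equilibrium attains FE. Then (i)+(ii)+(iii) gives the full equivalence: $\ref{FBIfails}\Rightarrow\ref{allEFE}\Rightarrow\ref{noneFB}$, and $\lnot\ref{FBIfails}\Rightarrow\lnot\ref{noneFB}$ closes the loop.

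For step (i), I would construct an explicit FE equilibrium under \condref{FEI}. Since all incumbents (good and opportunistic) work on path, voter beliefs never update ($\belief(h)=\prior$ for all on-path $h$), so $\eeff(h)=1$ on path and the voter is willing to keep the incumbent (keeping yields $1$, replacing yields $1-\cost$). The opportunist's incentives are then exactly the auxiliary single-agent problem with no good type and no replacement cost: take the function $\cval:S\to\Rpos$ furnished by \condref{FEI}, use \autoref{lem: cutoff rule for first-best equilibrium} to assume $\cval(s)\in\{0,\bar\cval\}$, and build a continuation-value process on $\Hist$ where the voter replaces (off the "good" signals) with probabilities calibrated so that the opportunist's recursive value equation reproduces the promised $\cval$; \eqref{IC for FB} gives incumbent incentive-compatibility and \eqref{PK for FB} gives feasibility of the promised values (they never exceed the value of working). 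One must check the voter is sequentially rational at the histories where replacement occurs with interior probability — this is where $\eeff(h)$ must equal $1-\cost+\cost$... more precisely, the voter must be indifferent, which pins down $\eeff(h)=1-\cost$, hence pins down the required $\pstrat(h)<1$ at those histories; consistency of this with the opportunist's optimization is the delicate bookkeeping. I expect this to be routine given the cited lemma but it is the most calculation-heavy part.

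For step (ii), assume \condref{FEI} fails and fix an arbitrary equilibrium; I must show EFE holds almost surely. The key object is the endogenous outside option $\ooption:=\eeff(\emptyset)-\cost$ — the voter's value from replacement. The argument splits on the path of a given incumbent. An opportunistic incumbent generates a supermartingale-like reputation process; I would argue that along any career, either the incumbent is eventually replaced, or his reputation (and voter-expected effort $\eeff$) converges, and in the latter case a limiting-behavior argument combined with the failure of \condref{FEI} forces a contradiction unless the incumbent is, in the limit, working in every period — but an opportunist working forever violates sequential rationality unless the continuation structure degenerates, and the failure of \condref{FEI} rules out the continuation structure that would sustain permanent opportunist effort. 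Meanwhile a good-type incumbent works by definition, so on the event that some incumbent is never replaced, EFE holds trivially; and on the event of infinitely many replacements, one must show the good type eventually "catches" and is retained forever with probability one (a Borel–Cantelli / second-moment argument on the i.i.d.-across-replacements structure), so that almost surely only finitely many replacements occur and the terminal incumbent, if opportunistic, must be working forever — again contradicted unless the terminal incumbent is good. The subtle point flagged in the introduction — that when good types are scarce the outside option $\ooption$ is near zero, making voters reluctant to replace — has to be turned into the positive statement that opportunists are nonetheless driven to reputations so low that even the tiny $\ooption$ beats retaining them; this is the main obstacle and will require a careful potential/Lyapunov argument on the opportunist's value $\cval(h)$ versus $\ooption$.

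For step (iii), suppose all equilibria attain EFE but, for contradiction, some equilibrium attains FE; an FE equilibrium trivially attains EFE, so this is consistent and gives no contradiction directly — hence I instead prove the contrapositive of what is actually needed, namely I use (i) and (ii) to deduce (iii) as follows. If \condref{FEI} holds then by step (i) some equilibrium attains FE, so \ref{noneFB} fails; but I also need: if \condref{FEI} holds then not all equilibria attain EFE — this is the separate "bad equilibrium" construction (\emph{replacement despite a favorable reputation}) referenced in the introduction, showing $\lnot\ref{FBIfails}\Rightarrow\lnot\ref{allEFE}$, equivalently $\ref{allEFE}\Rightarrow\ref{FBIfails}$. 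Combined with (ii), $\ref{FBIfails}\Leftrightarrow\ref{allEFE}$; combined with (i), $\lnot\ref{FBIfails}\Leftrightarrow\lnot\ref{noneFB}$, i.e. $\ref{FBIfails}\Leftrightarrow\ref{noneFB}$; so all three are equivalent. Thus the genuinely new content is: the FE construction under \condref{FEI} (step i), the EFE-in-all-equilibria argument under $\lnot$\condref{FEI} (step ii, the hard part), and the bad-equilibrium construction under \condref{FEI} exhibiting failure of EFE. I would present (ii) as the technical core, likely decomposed into a lemma bounding $\ooption$ from below by a history-independent constant, a lemma showing every opportunist's value process $\cval$ hits the "replace" region almost surely, and a lemma showing the good type's retention probability per installation is bounded below — with the failure of \condref{FEI} invoked precisely to exclude an equilibrium continuation that would let a single opportunistic incumbent work forever on path.
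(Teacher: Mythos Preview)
Your overall logical structure is right in spirit---you correctly identify the four substantive ingredients (FE construction under \condref{FEI}, bad-equilibrium construction under \condref{FEI}, and EFE-for-all-equilibria under $\lnot$\condref{FEI}, plus the easy direction that $\lnot$\condref{FEI} rules out FE). But note a small bookkeeping slip: your claim ``combined with (i), $\lnot\ref{FBIfails}\Leftrightarrow\lnot\ref{noneFB}$'' is unjustified, since (i) only gives one direction. You need the converse ($\ref{FBIfails}\Rightarrow\ref{noneFB}$, i.e., if \condref{FEI} fails then no strategy profile can make full effort incentive-compatible), which is essentially immediate from the definition of \condref{FEI} via a self-generation argument; you never state it explicitly.

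Your step (i) is overcomplicated. The paper's FE construction has the voter use a \emph{pure} cutoff rule---replace after any ``Fail'' signal, keep otherwise---with no mixing. Your worry about voter indifference pinning down $\pstrat(h)<1$ on path is misplaced: at histories where the voter replaces, the incumbent never acts on path, and his off-path behavior is simply set to shirk (so $\eeff(h)=\belief(h)\leq\prior\leq 1-\cost$, making replacement optimal). There is no delicate bookkeeping.

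The substantive gap is in step (ii). Your proposed decomposition---(a) bound $\ooption$ below, (b) opportunist's value process $\cval$ hits a replace region via a Lyapunov argument, (c) good type's retention probability bounded below---does not match the paper's route and, more importantly, leaves the hard part unexplained. For the opportunist side, the paper does \emph{not} track $\cval(h)$ against $\ooption$; it runs a \citet*{CMS04}-style reputation argument: $\beliefrv_t$ is a bounded martingale that converges, and if the opportunist is never replaced, identification forces either $\beliefrv_\infty\in\{0,1\}$ or $\actprv_t\to 1$; each of these is then ruled out separately (the last two via a replacement-hazard bound, the $\beliefrv_\infty=0$ case using $\ooption\geq\prior>\cost$). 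For the good-type side, you simply assert retention probability is bounded below but give no mechanism. The paper's key step---which genuinely uses the failure of \condref{FEI}---is to show that on-path reputation can get \emph{arbitrarily close to $1$}. The argument: if the supremum of on-path beliefs were interior, then near that supremum the opportunist would have to work with probability close to $1$ (else some good signal pushes beliefs past the supremum); iterating over a bounded horizon yields a sequence of high-effort histories, but failure of \condref{FEI} implies such a sequence cannot persist (continuation values would have to grow unboundedly), forcing a history of certain shirking and hence a belief jump past the supposed supremum. Once reputation exceeds $\ooption-\cost<1$, a stopping-time argument gives positive permanent-retention probability. None of this structure appears in your plan, and your ``Lyapunov on $\cval$'' suggestion points in the wrong direction.
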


An implication is that no matter the parameters, there is always at least one equilibrium that attains EFE, but also at least one equilibrium without FE.

We view the main economic lesson from the theorem as the sharp contrast summarized in \autoref{table:possibilities}: environments conducive to incentive provision yield the \emph{possibility} of FE; those that are not conducive \emph {guarantee} EFE.  More precisely, when \condref{FEI} holds, although there are equilibria that attain FE, there are also equilibria that fail EFE.  By contrast, when \condref{FEI} fails, there is no FE equilibrium, but perpetual effort is inevitable after finite time in {every} equilibrium.  As we will elaborate, EFE is guaranteed through (eventual) selection of a good type; paradoxically, when \condref{FEI} holds, the very possibility of providing strong incentives undermines assuring good selection or assuring good long-run outcomes. To our knowledge, this is a novel tradeoff between discipline and selection.

The proof of \autoref{thm: equivalence theorem} is in \appendixref{sec: thm proof appendix}. The remainder of this section discusses the intuition and forces underlying the theorem. But before that, we comment on an aspect of using (personal, symmetric) \hyperlink{def:eqm}{\emph{weak} PBE} as our solution concept. Condition \eqref{eqm3} of the equilibrium definition allows for arbitrary beliefs {off path} immediately after a voter retains an incumbent who should have been certainly replaced. Dropping the condition's qualification that $\vstrat(h)<1$ would impose belief requirements off path that yield (personal, symmetric) PBE, without the ``weak'' moniker. \appendixref{sec: proof that eqm exists} assures existence for this solution concept as well. Our equilibrium constructions under \condref{FEI} benefit from the off-path latitude allowed by weak PBE, as flagged in \appendixref{sec: thm proof appendix}. At the same time, the weaker solution concept strengthens what we view as the most important conclusion of \autoref{thm: equivalence theorem}, namely that all equilibria attain EFE when \condref{FEI} fails.

\subsection{\condref{FEI} and Full Effort}
\label{subsec:FEIandFE}

The equivalence between \condref{FEI} and the existence of FE equilibria stems from there being no learning in an FE equilibrium. So there is an FE equilibrium if and only if there is one in an auxiliary game without commitment types. It turns out that for a small enough replacement cost, it is further equivalent to consider the auxiliary game without a replacement cost. As mentioned earlier, \condref{FEI} characterizes when there are FE equilibria in that auxiliary game; the logic follows standard recursive reasoning \citep{APS90}. In fact, when \condref{FEI} fails and replacement costs are sufficiently small, there is a positive lower bound (across all equilibria) on the probability with which opportunistic incumbents shirk in their first period in office; see \autoref{cor: equivalence theorem with extra condition} in the Appendix. This implies that when \condref{FEI} fails, there is some $\epsilon>0$ such that no equilibrium has $\mathbb{P}\left(\actrv_t=1\right)\geq 1-\epsilon \text{ for all } t\in\Zpos$; indeed, the inequality already fails for $t=0$.

\subsection{What Prevents Eventual Full Effort?}
\label{subsec:bad-longrun}

Our argument that there is an equilibrium without EFE when \condref{FEI} holds is constructive. Notice that when replacement and effort costs are small, and monitoring is very precise, a newly-installed opportunistic incumbent has to work with positive probability: if not, a voter would retain an incumbent if he has a favorable reputation (i.e., a posterior $\belief_t$ strictly above the prior $\prior$), which would incentivize newly installed incumbents to work toward such a reputation. At least when the replacement cost is zero, a newly-installed incumbent cannot work with certainty either: otherwise, voter incentives imply that any retained incumbent must be working, which implies EFE.

With that in mind, let us describe the non-EFE equilibrium we construct. For simplicity, suppose there are only two signals: a ``Pass'' signal and a ``Fail'' one, with the former having the higher likelihood ratio ${\mon_1(s)}/{\mon_0(s)}$. Opportunistic incumbents work with interior probability in their first period in office. Thereafter, if they have always generated a Fail signal, they are replaced with an interior probability and, if retained, work with an interior probability.\footnote{Since reputation is deceasing in the number of Fail signals, the opportunistic type's probability of work increases to keep voter incentives fixed. Meanwhile, the voter's replacement probability is constant to keep politician incentives fixed. Although the underlying forces are quite different, the idea that opportunistic incumbents misbehave less over time is reminiscent of the literature on the repeated prisoner's dilemma with voluntary separation \citep[e.g.,][]{GR96,Eeckhout06} and the ``starting-small'' literature \citep[e.g.,][]{watson1999starting}.} At any history with only Pass signals, the incumbent is retained and works. At histories where a Pass signal is followed by a Fail signal, the incumbent is replaced. See \autoref{fig: bad equilibrium construction} for an automaton depiction of the strategy profile. We defer to the \appendixlink \ the explanation of why this profile constitutes an equilibrium. Crucially, however, this equilibrium has the features that every incumbent is eventually replaced, and incumbents in their first period shirk with positive probability.  Hence, the equilibrium does not attain EFE; in fact, almost surely there is shirking infinitely often, and average voter welfare (even gross of replacement costs) is below first best:
$$\limsup_{T\to\infty} \E\brac{\frac{1}{T} \sum_{t=0}^{T-1} \actrv_t}<1.$$
 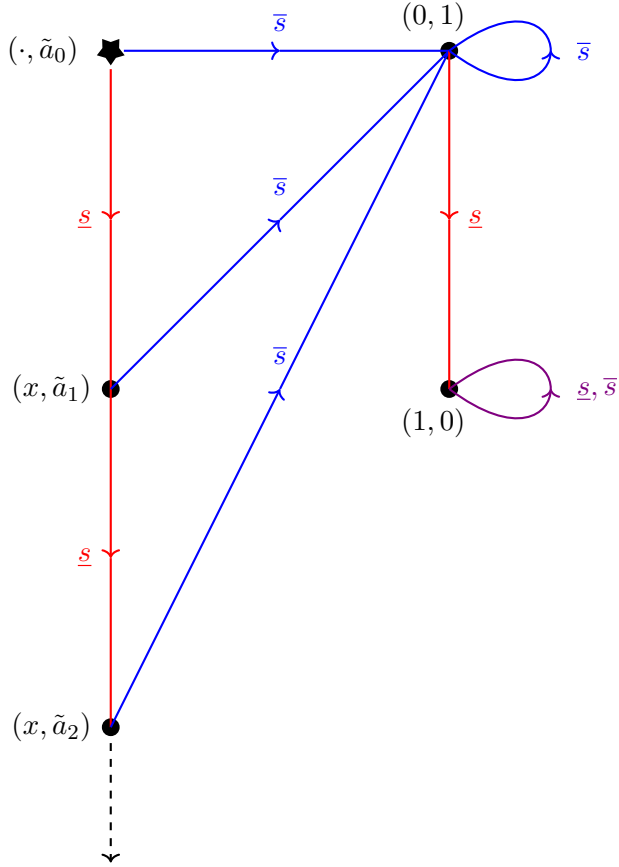
\begin{figure}
\begin{center}
\tikzset{
  blue loop arrow/.style={
    blue,
    postaction={decorate}
  },
  purple loop arrow/.style={
    violet,
    postaction={decorate}
  },
  decoration={markings, mark=at position 0.5 with {\arrow{>}}}
}

\begin{tikzpicture}[scale=0.9, every node/.style={font=\small}, every path/.style={thick}]
\node[star, star points=5, star point ratio=.6, fill=black, ultra thick] (start) at (0,0) {};
\coordinate (n11) at (5,0);
\coordinate (n01) at (0,-5);
\coordinate (n02) at (0,-10);
\coordinate (n00) at (5,-5);

\filldraw[black, ultra thick] (n11) circle [radius=0.1];
\filldraw[black, ultra thick] (n01) circle [radius=0.1];
\filldraw[black, ultra thick] (n02) circle [radius=0.1];
\filldraw[black, ultra thick] (n00) circle [radius=0.1];

\node[left=3pt of start] {$(\cdot,\aprob_0)$};
\node[above, xshift=-6pt, yshift=3pt] at (n11) {$(0,1)$};
\node[left=3pt] at (n01) {$(x,\aprob_1)$};
\node[left=3pt] at (n02) {$(x,\aprob_2)$};
\node[below, xshift=-6pt, yshift=-3pt] at (n00) {$(1,0)$};

\draw[->, blue] (start)--(2.5,0) node[above, yshift=3pt]{$\overline{s}$};
\draw[blue] (2.5,0)--(n11);
\draw[->, red] (start)--(0,-2.5) node[left=3pt]{$\underline{s}$};
\draw[red] (0,-2.5)--(n01);
\draw[->, red] (n01)--(0,-7.5) node[left=3pt]{$\underline{s}$};
\draw[red] (0,-7.5)--(n02);
\draw[dashed, ->] (n02)--(0,-12);
\draw[->, blue] (n01)--(2.5,-2.5) node[above=5pt]{$\overline{s}$};
\draw[blue] (2.5,-2.5)--(n11);
\draw[->, blue] (n02)--(2.5,-5) node[above=5pt]{$\overline{s}$};
\draw[blue] (2.5,-5)--(n11);
\draw[->, red] (n11)--(5,-2.5) node[right=3pt]{$\underline{s}$};
\draw[red] (5,-2.5)--(n00);

\draw[blue loop arrow] (n11) .. controls +(2,-1.5) and +(2,1.5) .. (n11) node[right=44pt]{$\overline{s}$};
\draw[purple loop arrow] (n00) .. controls +(2,-1.5) and +(2,1.5) .. (n00) node[right=44pt]{$\underline{s},\overline{s}$}; 
\end{tikzpicture}

\caption{Under \condref{FEI}, an equilibrium that does not attain eventual full effort when, for simplicity, there are only two signals, $\underline{s}$ and $\overline{s}$, with $\mon_1(\underline s)/\mon_0(\underline s)<\mon_1(\overline s)/\mon_0(\overline s)$. Each node (black star or dot) represents an automaton state, with the initial state---corresponding to a new incumbent---indicated by the star.  The vector at each node shows the corresponding strategy profile: the first entry is the probability that the incumbent is replaced (not defined at the initial state), and the second entry is the probability that an opportunistic incumbent exerts effort if retained.
Each edge represents the state transition following the corresponding signal. The equilibrium satisfies $0<\aprob_0<\aprob_1<\cdots<1$ and $x\in (0,1]$.
}
\label{fig: bad equilibrium construction}
\end{center}
\end{figure}

The mechanism underlying the above equilibrium is a novel rationale for undesirable long-run outcomes in accountability models. In particular, it ought to be contrasted with \citet{Myerson06}. For some parameters of his model, his Theorem~1 exhibits equilibria in which opportunistic incumbents always shirk.\footnote{\citepos{Myerson06} model for unitary democracy is like ours, except that he assumes a single long-lived and forward-looking voter who can perfectly observe the incumbents' effort choices, and he allows for large replacement costs. But the relevant part of his Theorem 1 also holds with short-lived voters who can only observe noisy signals, as in our model.}
Yet incumbents are never replaced (and hence EFE does not obtain) because the cost of replacement is large relative to the probability of a good type. We are interested, instead, in small replacement costs. In our non-EFE equilibrium, voters replace incumbents at some histories despite a favorable reputation (i.e., when the posterior $\belief_t$ strictly exceeds the prior $\prior$) because of the rational expectation that an opportunistic incumbent at such a history will shirk.

In fact, this phenomenon of replacing an incumbent despite a favorable reputation is a necessary feature of any equilibrium that does not attain EFE. Formally, we have:

\begin{Proposition}\label{prop: firing happens with a good reputation}
Any equilibrium that does not attain eventual full effort entails a positive probability of an incumbent being replaced with a reputation greater than the prior. 
\end{Proposition}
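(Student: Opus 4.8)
The plan is to argue the contrapositive: if every incumbent who is ever replaced is replaced only at histories with reputation at most the prior $\prior$, then the equilibrium attains eventual full effort. Fix such an equilibrium $(\pstrat,\vstrat,\belief)$ and consider the career of a single incumbent. Along his public career history, the voters' belief $\belief_t$ is a martingale on the event that he has not yet been replaced (this follows from equilibrium condition \eqref{eqm3}, since Bayes' rule applies at every retained history). Hence $\belief_t$ converges almost surely on the event that the incumbent is retained forever; moreover, by the martingale-convergence / merging logic standard in reputation arguments (cf.\ \citet{CMS04}), along almost every such path either the incumbent is eventually revealed to be the good type ($\belief_t\to 1$) or the opportunistic type's behavior eventually mimics the good type, i.e.\ $\pstrat(h_t)\to 1$; in either case $\eeff(h_t)\to 1$. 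So conditional on being retained forever, effort converges to $1$, which gives EFE along those sample paths.

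The remaining task is to rule out the bad event: an opportunistic incumbent who is retained forever yet shirks infinitely often. Here is where I would use the hypothesis. Suppose $\pstrat(h_t)$ does not converge to $1$ along some positive-probability path on which the incumbent is never replaced. On such a path $\belief_t$ converges to some limit $\belief_\infty<1$ (it cannot converge to $1$, else $\eeff(h_t)\to1$ and, because $\pstrat$ can't stay bounded away from $1$ while $\belief\to1$ forces $\eeff\to1$, effort would converge after all — I'll need to be careful but this is routine). Now I invoke the voter's sequential rationality (condition 2): the incumbent is retained at $h_t$, so $\eeff(h_t)\ge \eeff(\emptyset)-\cost$. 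The key observation is that if $\pstrat$ is bounded away from $1$ infinitely often and $\belief_\infty<1$, then $\eeff(h_t)$ is bounded away from $1$ infinitely often — yet the incumbent keeps being retained. That by itself is consistent; the contradiction has to come from combining this with the assumption that replacement never happens above the prior together with what sustaining such retention forever implies about the politician's incentives. Concretely, I would argue: if opportunists are retained forever with positive probability while shirking infinitely often, then the opportunistic type, who could be retained forever by shirking, has a continuation value arising purely from the flow $1-\ecost\pstrat$; but if $\pstrat$ were eventually $0$ on such a path the voter would not retain unless $\eeff(\emptyset)-\cost\le\belief_\infty$, and then a newly installed incumbent faces exactly the same situation, so $\eeff(\emptyset)$ is itself pinned down low — and iterating, one shows the only way to sustain retention-forever-with-shirking is for voters to be retaining at histories where reputation has in fact risen above $\prior$ (because conditioning on never having been replaced, and on the opportunist shirking, selects in favor of the good type), contradicting the hypothesis.

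The main obstacle, and the step I expect to require the most care, is precisely this last one: converting ``opportunist retained forever while shirking infinitely often'' into ``replacement occurs above the prior somewhere on a positive-probability event.'' The clean way to do it, I think, is to show that on the event ``incumbent never replaced and $\actrv_t=0$ infinitely often,'' the reputation process $\belief_t$ must exceed $\prior$ at infinitely many times with positive probability: intuitively, because a new incumbent is drawn fresh and an opportunist who is shirking is more likely than a fresh opportunist to be at such a ``trapped, shirking'' history only if the good type is over-represented there — so voters contemplating replacement at a *different* history, one from which transitions into this trapped region occur, must be comparing against $\eeff(\emptyset)$, and the value $\eeff(\emptyset)$ being attainable by shirking forces some on-path history to carry reputation above $\prior$ at which the voter strictly prefers to retain. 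I would formalize this by a ``last exit'' or optional-stopping argument on the first time reputation drops to $\prior$ and tracking the sign of the drift, or alternatively by contradiction with \autoref{thm: equivalence theorem}: if the hypothesis held and EFE failed, one could stitch together a full-effort equilibrium, contradicting that \condref{FEI} fails (since by \autoref{thm: equivalence theorem}, failure of EFE in *some* equilibrium already implies \condref{FEI} holds, so FE equilibria exist but that doesn't immediately close it — so the self-contained martingale argument is the safer route). I will therefore present the direct martingale-plus-sequential-rationality argument as primary.
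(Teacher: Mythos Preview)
Your proposal has a real gap: you never exploit the one-line optional-stopping argument that does almost all the work, and your attempted substitute is incomplete.

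The paper's proof runs as follows. Let $\replacetimerv$ be the first officeholder's replacement time. If $\Prob\{\replacetimerv=\infty\}>0$, then by symmetry and Borel--Cantelli some officeholder is almost surely never replaced; \autoref{lem: opportunistic types are replaced} says that officeholder is almost surely a good type, so EFE holds. If instead $\replacetimerv<\infty$ almost surely, then since $(\beliefrv_t)_t$ is a bounded martingale started at $\prior$, Doob's optional stopping gives $\E[\beliefrv_{\replacetimerv}]=\prior$. Hence either $\beliefrv_{\replacetimerv}=\prior$ almost surely---which forces no on-path learning and hence full effort---or $\Prob\{\beliefrv_{\replacetimerv}>\prior\}>0$, the desired conclusion. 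That is the entire argument.

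Your approach diverges in two places, and both create problems. First, for the retention-forever case you invoke \citet{CMS04}-style merging to conclude $\eeff(h_t)\to 1$, but (a) you have silently dropped the possibility $\beliefrv_\infty=0$, and (b) even granting $\eeff(h_t)\to1$, that does \emph{not} give EFE, which requires $\actrv_t=1$ for all large $t$, not merely $\Prob\{\actrv_t=1\}\to 1$. The clean handling is a single appeal to \autoref{lem: opportunistic types are replaced}. Second, and more seriously, you never treat the complementary case in which the first officeholder is almost surely replaced. Your ``main obstacle'' paragraph is still chasing a subcase of retention-forever; it says nothing about what happens when every incumbent is eventually fired. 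That is precisely where optional stopping on $\beliefrv_{\replacetimerv}$ closes the argument in one step, and your sketched alternatives---a last-exit argument, or bootstrapping off \autoref{thm: equivalence theorem}---do not.
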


An intuition for the argument is as follows. As we will see shortly (\cref{prop: differential replacement}, part \ref{prop: opportunistic replaced}), any opportunistic incumbent is eventually replaced. Hence, if EFE fails, then any good incumbent must also eventually be replaced---otherwise, eventually some good incumbent will stay in office forever, yielding EFE. Moreover, at least absent a replacement cost, there must be some learning in an incumbent's first period.\footnote{Otherwise, a new incumbent must be working, which implies FE when there is no replacement cost.} It follows that incumbents must be replaced at some histories with favorable reputation.

\subsection{What Guarantees Eventual Full Effort?}
\label{subsec:good-longrun}

We now turn to why a failure of \condref{FEI} implies that all equilibria attain EFE. The key is that precisely because the environment is not conducive to incentive provision, selection is now ineluctable. The following result, a key input to \cref{thm: equivalence theorem},  formalizes the point.

\begin{Proposition}\label{prop: differential replacement}
Consider any equilibrium.
\begin{enumerate}
    \item \label{prop: opportunistic replaced} Every opportunistic officeholder is eventually replaced.
    \item If \condref{FEI} fails, then every good officeholder has a positive probability of being retained forever. 
\end{enumerate}
    Hence, if \condref{FEI} fails, then some good officeholder is retained forever.
\end{Proposition}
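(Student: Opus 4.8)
The plan is to prove the two numbered claims from martingale properties of a single incumbent's reputation and then deduce the ``Hence'' clause by a geometric argument over the succession of officeholders. Fix an equilibrium, track one incumbent, write $\beliefrv_t:=\belief(\histrv_t)$ for his reputation after $t$ periods in office, and set $\ooption:=\eeff(\emptyset)-\cost$. By the voter's optimality condition, replacement at a career history $h$ has positive probability only if $\eeff(h)\le\ooption$; since $\eeff(h)\ge\belief(h)$ and $\cost<\prior$ forces $\ooption>0$, an incumbent whose reputation ever strictly exceeds $\ooption$ is retained for sure that period. Condition~\ref{eqm3} together with $\mon_{\eeff(h)}=\belief(h)\mon_\heff+(1-\belief(h))\mon_{\pstrat(h)}$ gives the posterior-odds recursion $\frac{\beliefrv_{t+1}}{1-\beliefrv_{t+1}}=\frac{\beliefrv_t}{1-\beliefrv_t}\cdot\frac{\mon_\heff(s_{t+1})}{\mon_{\pstrat(\histrv_t)}(s_{t+1})}$, so under the opportunist's law the odds form a nonnegative supermartingale (hence $\beliefrv_t$ converges a.s.) with expected log-increment $-D_{\mathrm{KL}}(\mon_{\pstrat(\histrv_t)}\|\mon_\heff)\le 0$, strictly negative and bounded away from $0$ whenever $\pstrat(\histrv_t)$ is bounded away from $1$; under the good type's law $\beliefrv_t$ is a bounded submartingale with expected log-odds increment $D_{\mathrm{KL}}(\mon_\heff\|\mon_{\pstrat(\histrv_t)})\ge 0$. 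Decomposing the log-odds into a bounded-increment martingale plus this drift yields, along any never-replaced path, a dichotomy: either the reputation converges to the truth (to $0$ for an opportunist, to $1$ for a good type) or $\pstrat(\histrv_t)\to 1$.

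For part~(1), suppose toward a contradiction that an opportunist is never replaced on a positive-probability event $W$, on which $\eeff(\histrv_t)\ge\ooption$ for all $t$. By the dichotomy, a.s.\ on $W$ either $\pstrat(\histrv_t)\to 1$ or $\beliefrv_t\to 0$. In the first case one extracts a reachable history $h$ from which the opportunist works at every subsequent reachable history, so there the expected effort is $1$ and he is never replaced, giving continuation value $1-\ecost$; but a one-shot deviation to shirking at $h$ yields $(1-\delta)+\delta(1-\ecost)=1-\delta\ecost>1-\ecost$, contradicting sequential rationality (this uses only $\delta<1$). In the second case, $\eeff(\histrv_t)\ge\ooption$ with $\beliefrv_t\to 0$ forces $\pstrat(\histrv_t)$ to stay above a positive level, so for large $t$ the opportunist strictly mixes and his binding incentive constraint reads $\delta\sum_{s\in S}(\mon_\heff(s)-\mon_\leff(s))[1-\vstrat(\histrv_t,s)]\cval(\histrv_t,s)=(1-\delta)\ecost$; one then derives a contradiction by showing that the degenerating reputation drives the relevant continuation values together, collapsing the left-hand side toward $0$. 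Hence $W$ is null, and the ``personal, symmetric'' structure of the equilibrium transfers the conclusion to every officeholder.

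For part~(2), assume \condref{FEI} fails and take a good officeholder. It suffices to show that with positive probability his reputation strictly exceeds $\ooption$ at some finite time while he is in office: from the first such time $\sigma$, optional stopping for the submartingale $\beliefrv_t$ (stopped when the reputation drops to $\ooption$ or he is replaced) bounds below by $(\beliefrv_\sigma-\ooption)/(1-\ooption)>0$ the probability of never being replaced, and on that event $\eeff(\histrv_t)\ge\beliefrv_t>\ooption$ for all $t$. Suppose instead the reputation stays weakly below $\ooption$ throughout his (possibly finite) tenure. On the never-replaced part, the dichotomy forces $\pstrat(\histrv_t)\to 1$; but then along such a path the profiles $w_t(\cdot):=[1-\vstrat(\histrv_t,\cdot)]\cval(\histrv_t,\cdot)\in[0,1]^S$ satisfy \eqref{IC for FB} (the opportunist is willing to work there) and, in the limit, \eqref{PK for FB}, so a subsequential limit of $(w_t)$ in the compact set $[0,1]^S$ certifies \condref{FEI}, a contradiction; hence the good type is a.s.\ replaced in finite time. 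The submartingale property then gives $\prior=\beliefrv_0\le\E[\lim_t\beliefrv_{\tau\wedge t}]\le\ooption$ (the reputation at replacement is $\le\ooption$), so $\prior\le\ooption$; the residual sub-case $\prior\le\ooption$ is closed by the same two ingredients applied along the all-best-signal branch, along which the reputation is nondecreasing and hence either eventually exceeds $\ooption$ (contradiction) or forces $\pstrat(\histrv_t)\to 1$ and the \condref{FEI} contradiction again. By symmetry the resulting success probability $q>0$ is common to all good officeholders.

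Finally, condition on \condref{FEI} failing and index the officeholders $1,2,\dots$, officeholder $i$ being the one installed at the $(i-1)$-th replacement. By the ``personal'' structure, conditional on reaching officeholder~$i$ his type is good independently with probability $\prior$ and his continuation game replicates officeholder~$1$'s, so by part~(2) he is ``good and retained forever'' with probability at least $\prior q>0$, independently of the past; by part~(1) every opportunist is replaced in finite time, so if no officeholder were ever retained forever there would be infinitely many officeholders, an event of probability at most $\lim_n(1-\prior q)^n=0$. Hence almost surely some officeholder is retained forever, and by part~(1) he must be the good type. The main obstacle is the pair of ``collapse'' steps: in part~(1), showing a degenerating opportunist reputation cannot sustain the effort the voters require; and in part~(2), the compactness argument that persistent near-certain opportunist effort would violate the failure of \condref{FEI} (and the bookkeeping along the best-signal branch in the $\prior\le\ooption$ case). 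Both require carefully controlling continuation values as the reputation or the effort degenerates.
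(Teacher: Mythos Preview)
Your proof has the correct high-level architecture---the martingale dichotomy, the optional-stopping step, the Borel--Cantelli conclusion---but the two ``collapse'' steps you yourself flag as obstacles are in fact genuine gaps, and a third step (Part~1, Case~1) that you do not flag is also incorrect.

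In Part~1, Case~1, you claim that $\pstrat(\histrv_t)\to 1$ along a never-replaced opportunist's path lets you ``extract a reachable history $h$ from which the opportunist works at every subsequent reachable history,'' and then compute his continuation value as $1-\ecost$. This extraction is illegitimate: convergence $\pstrat(\histrv_t)\to 1$ along one random path says nothing about $\pstrat$ at histories $(\histrv_t,s)$ for $s\ne s_{t+1}$, and $\pstrat(\histrv_t)<1$ is compatible with the limit being $1$. The continuation value $\cval(\histrv_t)$ depends on the full subtree, not just the realized branch, so it need not equal or approach $1-\ecost$, and your one-shot-deviation calculation does not apply. The paper instead proves a uniform replacement-hazard bound: for some fixed $T$ and $\pbound>0$, at \emph{any} history with $\pstrat(h)>0$ the opportunist is replaced within $T$ periods with probability at least $\pbound$ (because if replacement were too unlikely, perpetual shirking would be profitable; full-support monitoring converts this into an on-path bound). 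Borel--Cantelli then forces replacement whenever $\pstrat(\histrv_t)>0$ eventually, handling your Case~1 directly.

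In Part~1, Case~2, the step ``degenerating reputation drives the relevant continuation values together'' has no content as written: $\beliefrv_t\to 0$ places no constraint on the \emph{differences} $[1-\vstrat(\histrv_t,s)]\cval(\histrv_t,s)-[1-\vstrat(\histrv_t,s')]\cval(\histrv_t,s')$, which are what enter the binding IC. The paper again routes through the hazard bound: your (correct) observation that $\pstrat(\histrv_t)$ is eventually bounded away from $0$ triggers the same Borel--Cantelli argument; and if instead $\pstrat(\histrv_t)=0$ at some late $t$, then $\eeff(\histrv_t)=\beliefrv_t<\prior-\cost<\ooption$ and the voter strictly replaces.

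In Part~2, the claim that a subsequential limit $w^*$ of $w_t(\cdot):=[1-\vstrat(\histrv_t,\cdot)]\cval(\histrv_t,\cdot)$ satisfies \eqref{PK for FB} is unjustified and in general false. \eqref{PK for FB} would require $\cval(\histrv_t)\gtrsim\max_s w_t(s)$, but nothing prevents $w_t(s^\star)$ from exceeding $\cval(\histrv_t)$ for some $s^\star$. Indeed, the paper shows precisely the opposite holds when \condref{FEI} fails: any $w$ satisfying \eqref{IC for FB} has $(1-\delta)(1-\ecost)+\delta\mon_1\cdot w<\max_s w(s)-\tfrac1T$ \emph{uniformly} (compactness is applied to this inequality, not to your sequence $w_t$). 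The paper then exploits this uniform gap along the argmax-continuation-value signal sequence to show $\pstrat=0$ must occur within $T$ steps, and uses that to prove on-path reputation has supremum $1$ via a delicate analysis of play near a putative interior supremum. Your best-signal-branch argument inherits the same \eqref{PK for FB} gap, and additionally needs (but does not supply) the fact that some good-news signal always leads to an on-path successor history---which the paper proves as a separate lemma.
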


The last statement of \autoref{prop: differential replacement} follows from the first two parts and an application of the Borel-Cantelli lemma.

The logic for the first part of the proposition is reminiscent of \citet*{CMS04}. Here is the  intuition. Since any incumbent's reputation is a martingale, it converges almost surely. If an opportunistic incumbent is not eventually replaced, then because signals statistically identify effort, eventually either (i) he must be exerting effort---behaving like the good type, so that belief updating stops---or (ii) his type must be revealed. Case (i) contradicts him not being eventually replaced, because it is only the possibility of replacement that provides effort incentives. The same logic applies if we are in case (ii) and yet the incumbent eventually still exerts effort with positive probability. Finally, if we are in case (ii) and the incumbent is certain to shirk, then it is optimal for voters to replace him.

Now consider the second part of \autoref{prop: differential replacement}. When \condref{FEI} fails, full-effort incentives cannot be sustained; in particular, as noted earlier, a newly installed incumbent shirks with positive probability. Thus, once an incumbent's reputation is sufficiently high, a voter strictly prefers retention to replacement, since replacement exposes her to some chance of shirking. Put differently, there is a reputation threshold above which the voter never replaces. The key step is that these high no-replacement beliefs are reached with positive probability. In fact, we show that reputation can get arbitrarily high. The logic is that if the supremum of on-path beliefs were interior, then at beliefs close to that supremum an opportunistic incumbent would have to exert effort with high probability---otherwise some signal would push the posterior above the supposed supremum, contradicting that being a supremum. But such effort is intuitively at odds with the failure of \condref{FEI},\footnote{This is only an intuition because \condref{FEI} does not actually preclude the possibility of effort with certainty at some histories. The formal argument is substantially more involved.} and so the supremum must equal one. Now, once beliefs enter the high region where there is no replacement, martingale arguments imply that there is a positive probability of never leaving that region. Combined with the first part of the proposition, it follows that every good-type officeholder is retained forever with positive probability when \condref{FEI} fails.

\autoref{prop: differential replacement} implies that when \condref{FEI} fails, every equilibrium asymptotically yields voters their first best welfare, even net of replacement cost. More precisely:

\begin{Corollary}\label{cor: long run voter welfare}
Suppose \condref{FEI} fails.  
In any equilibrium, the period-$t$ voter's expected payoff converges to first best as $t\to\infty$.
\end{Corollary}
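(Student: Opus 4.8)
The plan is to obtain the corollary almost directly from \autoref{prop: differential replacement}, whose final statement already does the heavy lifting: when \condref{FEI} fails, in any equilibrium there is almost surely a last replacement, i.e.\ some good officeholder eventually takes office and is retained forever. The remaining task is purely to translate this path-wise statement about realized play into the claimed convergence of the sequence of ex-ante expected voter payoffs, and in doing so to make sure that ``retained forever'' controls replacement costs and not merely effort (this is why \autoref{prop: differential replacement} is needed rather than just eventual full effort: the latter yields $\E[\actrv_t]\to 1$ but, a priori, permits perpetual costly replacement).

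Concretely, I would fix an equilibrium and let $\boldsymbol{T}$ denote the (random) period in which the good officeholder who is never subsequently replaced first takes office; by \autoref{prop: differential replacement}, $\boldsymbol{T}<\infty$ almost surely. Write $W_t$ for the realized period-$t$ voter payoff, i.e.\ $W_t=\actrv_t$ less $\cost$ if the period-$t$ voter replaces and less $0$ otherwise. For every $t>\boldsymbol{T}$ the incumbent is that good type, so $\actrv_t=1$, and no replacement occurs, so the cost term vanishes; hence $W_t=1$ for all $t>\boldsymbol{T}$, and in particular $W_t\to 1$ almost surely. Since $\actrv_t\in[0,1]$ and the replacement cost is the fixed constant $\cost\ge 0$, the $W_t$ are uniformly bounded (they lie in $[-\cost,1]$), so the bounded convergence theorem gives $\E[W_t]\to 1$. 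As the period-$t$ voter's first-best payoff is $\uvo(1)=1$, attained when the incumbent exerts effort and is not replaced, this is exactly the asserted convergence to first best.

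I do not anticipate a genuine obstacle; the substance is entirely in \autoref{prop: differential replacement}. The only points requiring care are (i) handling the timing convention so that ``the good officeholder is retained forever from period $\boldsymbol{T}$ on'' indeed yields both $\actrv_t=1$ and zero replacement cost for all $t>\boldsymbol{T}$ (there may be a one-time cost at $\boldsymbol{T}$ itself if that officeholder enters via replacement, but this washes out in the limit), and (ii) invoking the almost-sure finiteness of $\boldsymbol{T}$ so that bounded convergence applies. Everything else is routine.
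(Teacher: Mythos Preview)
Your proposal is correct and follows essentially the same approach as the paper: both invoke \autoref{prop: differential replacement} to obtain almost-sure eventual permanent retention of a good type, then translate this into convergence of the period-$t$ voter's expected payoff to $1$. The paper phrases the translation via the increasing events $B_t=\{\boldsymbol{T}\le t\}$ and continuity of measure, whereas you phrase it via $W_t\to 1$ almost surely and bounded convergence; these are the same argument in slightly different packaging.
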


We note that the selection logic of \autoref{prop: differential replacement} holds even if the good type were to exert effort with probability less than $1$. 
The welfare conclusion of \autoref{cor: long run voter welfare} would, of course, be attenuated as that probability decreases.

Although \autoref{prop: differential replacement} shows that selection is extremely powerful in the long run in every equilibrium (when \condref{FEI} fails), it prompts a question about the short run. To assure that opportunists are eventually replaced, what makes replacement attractive to voters in the first place---especially when good types may be rare? One might guess that the voters' endogenous \textbf{outside option}---their equilibrium payoff from replacing an incumbent, gross of replacement cost---must be high for selection to operate effectively.\footnote{More precisely, we define the outside option as the probability of effort from a newly-installed incumbent, which can be written as $\E[\actrv_0]=\eeff(\emptyset)=\prior+(1-\prior)\pstrat(\emptyset)$.} In fact, the following result says that the exact opposite is true when good types are scarce and replacement costs are small.

\begin{Proposition}\label{prop: outside option can be super low}
Suppose \condref{FEI} fails. For any $\varepsilon>0$, there exist $\bar\cost>0$ and $\bprior\in(0,1)$ such that, if $\cost\leq\bar\cost$ and $\prior\leq\bprior$, then every equilibrium has voters' outside option below $\varepsilon$.\footnote{Recall that \condref{FEI} is a condition only on $(\ecost,\delta,\mon)$, not on $(\cost,\prior)$. We show that for any $(\ecost,\delta,\mon)$ violating \condref{FEI}, an appropriate pair $(\bar\cost,\bprior)$ can be found to deliver the conclusion.}
\end{Proposition}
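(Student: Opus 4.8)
The plan is to argue by contradiction and reduce the claim to a bound on the opportunist's first-period effort. Since the outside option is $\eeff(\emptyset)=\prior+(1-\prior)\pstrat(\emptyset)$, the $\prior$ term is absorbed by taking $\bprior<\varepsilon$, so the work is to show $\pstrat(\emptyset)$ is small. Assume some equilibrium has $\eeff(\emptyset)\geq\varepsilon$; we may take $\varepsilon\leq1$, else there is nothing to prove. Because $\prior<\varepsilon$, this forces $\pstrat(\emptyset)>0$, so sequential rationality makes working a best response for a newly-installed opportunist.

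The engine is a ``promised-value'' vector attached to each career history: $w_h(s):=[1-\vstrat(h,s)]\,\cval(h,s)\in[0,1]$. First I would record two facts valid at any $h$ with $\pstrat(h)>0$: by the recursion for $\cval$ and linearity of the incumbent's objective, $\cval(h)=(1-\delta)(1-\ecost)+\delta\sum_{s}\mon_1(s)w_h(s)$, i.e.\ $\cval(h)$ is the left-hand side of \eqref{PK for FB} evaluated at $w_h$; and the optimality of working at $h$ is exactly the statement that $w_h$ satisfies \eqref{IC for FB}. Since \condref{FEI} fails and $w_h$ has nonnegative entries, \eqref{PK for FB} fails at $w_h$; and because the set of $w\in[0,1]^S$ satisfying \eqref{IC for FB} is compact and the relevant functional continuous, the violation is uniform: there is $\mu>0$ depending only on $(\ecost,\delta,\mon)$ with $\max_s w_h(s)\geq\cval(h)+\mu$ whenever $\pstrat(h)>0$. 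Picking a maximizing signal $s^\ast$ then yields $\vstrat(h,s^\ast)<1$ and $\cval(h,s^\ast)\geq w_h(s^\ast)\geq\cval(h)+\mu$. (If no $w\in[0,1]^S$ satisfies \eqref{IC for FB}, then $\pstrat\equiv0$ in every equilibrium and $\eeff(\emptyset)=\prior<\varepsilon$; so assume otherwise, fixing $\mu$ and $N:=\lceil 1/\mu\rceil$.)

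I would then iterate from $h_0:=\emptyset$: given $h_j$ with $\pstrat(h_j)>0$, choose $s^\ast_{j+1}$ as above and set $h_{j+1}:=(h_j,s^\ast_{j+1})$, so $\vstrat(h_{j+1})<1$ and $\cval(h_{j+1})\geq\cval(h_j)+\mu$. As $\cval\leq1$, the process reaches, after $k\leq N$ steps, a history $h^\ast:=h_k$ with $\pstrat(h^\ast)=0$. There only the good type works, so $\eeff(h^\ast)=\belief(h^\ast)$; and since $\vstrat(h^\ast)<1$, voter optimality gives $\belief(h^\ast)=\eeff(h^\ast)\geq\eeff(\emptyset)-\cost\geq\varepsilon-\cost$. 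But $\vstrat(h_j)<1$ for each $j\in\{1,\dots,k\}$, so voter optimality also gives $\eeff(h_j)\geq\varepsilon-\cost$, and the Bayesian-updating requirement, $\belief(h_{j+1})=\belief(h_j)\mon_1(s^\ast_{j+1})/\mon_{\eeff(h_j)}(s^\ast_{j+1})\leq\belief(h_j)/\eeff(h_j)$, forces geometric decay over these finitely many periods: $\belief(h^\ast)\leq\prior/(\varepsilon-\cost)^{N}$. Combining the two estimates of $\belief(h^\ast)$ gives $(\varepsilon-\cost)^{N+1}\leq\prior$. It suffices to choose $\bar\cost\in(0,\varepsilon)$ and $\bprior\in(0,\min\{\varepsilon,(\varepsilon-\bar\cost)^{N+1}\})$: for $\cost\leq\bar\cost$ and $\prior\leq\bprior$ the displayed inequality is contradicted, so every equilibrium has $\eeff(\emptyset)<\varepsilon$.

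The main obstacle I anticipate is establishing the uniform gap $\mu$ and, through it, a \emph{history-independent} iteration length $N$. One must verify that \condref{FEI} failing yields not just a violation of \eqref{PK for FB} but a strict one at \emph{every} admissible promised-value vector in $[0,1]^S$, then upgrade this to a uniform bound via compactness; the point is that $N$ depends only on $(\ecost,\delta,\mon)$, so the Bayesian decay $\belief(h^\ast)\leq\prior/(\varepsilon-\cost)^N$ occurs over a \emph{bounded} number of periods and thus dominates as $\prior,\cost\to0$. Secondary care is needed to confirm that the Bayesian-updating requirement genuinely applies at each $h_j$ (which holds because $\vstrat(h_j)<1$ along the chain) and that $\eeff(h_j)>0$ throughout, legitimating the likelihood-ratio bound $\mon_1(s)/\mon_{\eeff(h_j)}(s)\leq 1/\eeff(h_j)$.
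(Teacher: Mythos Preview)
Your proof is correct and follows essentially the same skeleton as the paper: a uniform gap from the failure of \condref{FEI} (the paper's \cref{lem: failure of FB is uniform}), a bounded-length iteration along ``best'' signals that must terminate at a history where the opportunist shirks (the paper's \cref{lem: continuation value can't keep growing}), and a contradiction obtained by comparing a lower bound on $\belief(h^\ast)$ from voter optimality with an upper bound from Bayesian updating along the chain.

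Where you differ is in the belief bookkeeping. The paper (via \cref{lem: belief upper bound is well-behaved} and \cref{lem: bound on outside option}) tracks an explicit upper bound $\bbound_\pbound^t(\prior)$ on reputation and proves by induction that $\pstrat(h_t)\geq\pbound$ throughout, contradicting the existence of a shirking history. You instead go directly to the shirking history $h^\ast$ and control beliefs with the crude but sufficient inequality $\mon_1(s)/\mon_{\eeff(h_j)}(s)\leq 1/\eeff(h_j)$, which is simpler and avoids the separate lemma. The paper's route yields an explicit parametric bound on the outside option (used later in \cref{cor: equivalence theorem with extra condition}); your route gives a cleaner contradiction tailored to the proposition as stated. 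Both are valid, and your argument is a mild streamlining for this particular result.
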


In other words, assuming \condref{FEI} fails, the combination of scarce good types and low replacement costs implies that voters' outside option is arbitrarily low---or equivalently, newly-installed opportunists are almost certainly shirking. So, replacement can look arbitrarily unattractive in the short run, despite culminating powerfully in the long run. While we defer the logic of \autoref{prop: outside option can be super low} to the \appendixlink, why does a low outside option not undermine eventual selection? It is because opportunists are assured to reach histories at which voters believe the incumbent is overwhelmingly likely to be an opportunist who will shirk---so much so that even the low outside option is attractive enough to trigger replacement, given small replacement costs.

\section{Discussion}
\label{sec:discussion}

This section discusses some extensions of our model.

\paragraph{Bad types.} The possibility that some politicians are good types---intrinsically motivated to exert effort in office---is essential to our results. If all politicians are opportunistic, then although \condref{FEI} still characterizes when full effort can be attained, there is always an equilibrium (regardless of parameters) in which incumbents always shirk. 

What if we allow for the coexistence of good types with  \emph{bad} types who always shirk in office? In addition to suitably adapting our maintained assumption of replacement costs being small enough, \condref{FEI} must also be strengthened for the existence of an equilibrium in which  opportunistic incumbents always exert effort. The reason is that now some sequences of signals will convince voters that the incumbent is likely a bad type and must be replaced, which makes it more difficult to sustain effort from the opportunistic type along such sequences. 
More interestingly however,
when \condref{FEI} fails it remains true that all equilibria attain eventual full effort. The mechanism still operates via selection of the good type, with an argument analogous to that of \autoref{prop: differential replacement}. In particular, following the logic of \citet*{CMS04}, not only will every opportunistic officeholder eventually be replaced, but so too every bad type.

\paragraph{Exogenous turnover.} We have assumed that politicians remain eligible for office indefinitely. Plainly, this is essential for \autoref{prop: differential replacement}'s conclusion that when \condref{FEI} fails, some good type will eventually stay in office forever. 

Suppose now that incumbents exit office exogenously with probability $1-\survprob$ in each period (e.g., due to retirement or death), for some survival probability $\survprob\in (0,1)$. Beyond implying that every incumbent leaves office at some time almost surely, politicians' discount factors are also effectively reduced to $\delta \survprob$. Naturally, then, the relevant version of \condref{FEI} now replaces $\delta$ with $\delta \rho$ in \eqref{IC for FB} and \eqref{PK for FB}. Our full-effort equilibrium construction remains valid when this version of \condref{FEI} holds. However, when this version of \condref{FEI} fails, eventual full effort cannot be attained in any equilibrium---the eventual exit of every incumbent creates a friction even after adjusting for the lower effective discount factor. Nevertheless, our conclusion regarding the power of selection is robust in the following sense. Holding the effective discount factor $\delta \rho$ fixed---so, in particular, not altering the aforementioned version of \condref{FEI}---we can ask what happens as $\rho \to 1$. Because voters are short-lived and politician incentives depend only on the effective discount factor, the set of equilibrium strategies does not change. Thus, the equilibrium distributions of the ``search time'' for a good type to take office and never be replaced (barring exogenous exit) is unchanging. As the survival probability $\rho \to 1$, in every equilibrium the system spends a fraction of time approaching $1$ in the hands of a good type, and the long-run average effort probability converges to $1$.

\paragraph{Recalling incumbents.} 
We have assumed that an incumbent who is replaced can never hold office again. This ``no-recall'' assumption is commonplace in the literature, dating back to \citet{BS93}; it was also discussed by \citet{Ferejohn86} in his pure moral-hazard setting. The assumption substantially simplifies tracking the voters' outside option by making it history independent (given the focus on \hyperlink{def:eqm}{personal equilibria}). Even if recall is allowed, \condref{FEI} still characterizes when there is a full-effort equilibrium, and when the condition holds, our construction of an equilibrium without eventual full effort goes through as well. While one might conjecture that allowing for recall would also not change our result that all equilibria attain eventual full effort when \condref{FEI} fails, we do not have a proof.

A related point is that we assume an infinite pool of politicians. With a finite pool, one would have to consider recall. Moreover, for selection to be assured in the long run, one would have to assume the finite pool contains at least one good type---so politician types could not be independent. Such an analysis is beyond the scope of the current paper.\footnote{In ongoing work, \citet{Ali25} studies a model similar to ours, except that he has only two politicians. He identifies conditions under which every equilibrium within a class attains eventual full effort, and shows some of those equilibria attain full effort. By contrast, in our setting \autoref{thm: equivalence theorem} says these two desiderata hold only under disjoint conditions. \citepos{Ali25} eventual full effort logic is not one of selection, unlike ours. One reason for the difference is \citepos{Ali25} focus on pure Markov strategies for incumbents. Our equilibrium that fails eventual full effort entails incumbents mixing; indeed, \autoref{prop: firing happens with a good reputation} implies that every equilibrium of our model in which politicians use pure strategies yields eventual full effort. We see no compelling reason to exclude mixed strategies from consideration; inter alia, as in many models of reputation, one cannot otherwise assure equilibrium existence for all parameters---specifically,  when our \condref{FEI} fails.}

\paragraph{Long-lived voter.} As discussed in \autoref{sec: model}, there are multiple reasons for our assumption of short-lived voters. But now consider instead only one long-lived voter who maximizes her discounted average payoff with a discount factor $\vdiscount \in [0,1)$, which can be different from the politicians' discount factor $\delta$.

When $(\delta,\ecost,\mon)$ satisfies \condref{FEI} and the replacement cost $\cost$ is small, there is still an equilibrium that attains full effort as well as one that fails eventual full effort. 
The former requires no change from our full-effort equilibrium construction.  However, in our equilibrium that fails eventual full effort (\autoref{fig: bad equilibrium construction}), a politician's strategy now needs to be modified to account for the voter's discount factor $\vdiscount$.  Specifically, at non-initial career histories consisting only of Fail signals, the probabilities with which the opportunistic type works in order to keep the voter indifferent depend on $\vdiscount$; but suitable probabilities exist.

When \condref{FEI} is violated, there is still no equilibrium with full effort. However, it is unclear whether eventual full effort attains in all, or even some, equilibria. In particular, it is unclear whether equilibria have a positive probability of retaining each good-type officeholder in the long run, because we cannot rule out that equilibrium reputation is bounded away from $1$.\footnote{In more detail: with a long-lived voter, it is no longer the case that the incumbent needs to be retained with positive probability following some signal $s$ with $\mon_1(s)>\mon_0(s)$. As an example, consider a career history $h$ at which the incumbent shirks: $\pstrat(h)=0$. Unlike when the voter is short-lived, the incumbent being retained with positive probability at $h$ no longer implies that he will be retained with probability $1$ at any career history $h'$ with higher reputation ($\belief(h')>\belief(h)$), as a long-lived voter's discounted average payoff from keeping this incumbent can be lower at $h'$ even when the voter's stage-payoff from doing so is higher at $h'$.}
This echoes the difficulty of obtaining tight predictions on equilibrium payoffs in reputation models without replacement but with two long-lived players \citep{cripps1997reputation,chan2000non}.

\section{Conclusion}
\label{sec:conclusion}

Our motivating question is whether the power of replacement ensures that officeholders act in voters' interests in the long run. Inspired by previous models of accountability and reputation, we have studied the question in a simple and stylized model, one that we believe is a natural benchmark. Our analysis yields a sharp, dichotomous answer: when there are some intrinsically-motivated politicians, replacement makes good long-run outcomes always possible, but only guaranteed in environments that are \emph{not} conducive to incentive provision. 

Our main result, \autoref{thm: equivalence theorem}, provides a new perspective on the tension between sanctions and selection that has been noted by others \citep[e.g.,][]{BS93,Fearon99}. In environments conducive to incentive provision, there is an equilibrium with perpetual effort from incumbents. But politics can alternatively fall into a trap of excessive turnover, driven by voters' rational replacement of incumbents whose favorable past performance now makes them shirk, which bounds 
long-run outcomes away from full effort. In these environments, it is not the underlying conditions that determine long-run outcomes; instead political norms---self-fulfilling expectations---do, even absent any intertemporal voter coordination.

Conversely, in environments where strong incentives cannot be provided (i.e., there is no equilibrium with full effort), the very weakness of accountability becomes the engine of its long-run success. Selection is now inexorable: opportunistic types are weeded out, and voters are assured to eventually secure a good politician who works in their interest.  To return to the view of J.S. Mill quoted in the \hyperref[sec:intro]{Introduction}, in our model it is precisely when opportunistic officeholders cannot be fully disciplined that the worth of the State is eventually determined by the worth of the individuals composing it.

\newpage
\appendix

\section{Proof Appendix}
\label{sec:appendix}

In our proofs below, we do \emph{not} maintain the assumption made in the main text that $\cost<\min\{\prior,1-\prior\}$. Instead, to clarify which results require what (if any) assumption on the replacement cost $\cost$, we state the requisite assumption when needed. 

\subsection{Equilibrium Existence}\label{sec: proof that eqm exists}
As discussed in \autoref{subsec:main-theorem}, our solution concept can be strengthened to a \textbf{personal symmetric perfect Bayesian equilibrium (PBE)} by dropping the qualification of $\sigma_V(h)<1$ in condition \eqref{eqm3} of our \hyperlink{def:eqm}{equilibrium definition}. The proof below shows that that such a PBE exists in our game.

\begin{proof}[Proof of \autoref{prop: eqm exists}]
Let $\Hist$ be as defined for our original game. For any $\varepsilon\in[0,\tfrac12]$ and $T\in\mathbb N\cup\{\infty\}$, define an auxiliary game $\Gamma(\varepsilon,T)$ with a voter in every period and a single politician---who will be a good type with probability $\prior$ and an opportunistic type with probability $1-\prior$. 

We begin by describing $\Gamma(0,\infty)$. The game form is almost exactly as in our original game before the first time a voter replaces the politician, 
except that at the initial history $h=\emptyset$, the politician chooses a continuous action 
$\aprob_0 \in [0,1]$, which generates a signal from $\mon_{\aprob_0}\in\Delta S$. Say that a politician \emph{semi-pure} strategy is $\pstrat:\Hist\to[0,1]$, with the interpretation that the politician is not mixing at the initial history but could be mixing (over the binary action set $\{0,1\}$) at other histories. Voters' (mixed) strategies are given by $\vstrat:\Hist\times S\to[0,1]$. Furthermore, in this modified game, replacing the politician ends the game. The payoffs of the politician are exactly the same as those of the initial officeholder in our original model. If the politician is not replaced, the payoffs of voters in our auxiliary game are exactly the same as those in our original model. But if a voter replaces the politician, then her payoff is equal to $\prior+\paren{1-\prior}\aprob_0$, a function of the unobserved initial choice of the politician.

A key feature of the game $\Gamma(0,\infty)$ is that the politician's choice $\aprob_0\in[0,1]$ directly enters the payoff of the voter who replaces the politician. That is why we treat the politician's choice at the initial history as a continuous action rather than a randomization over the actions $0$ and $1$.\footnote{If the politician were to mix between actions $0$ and $1$ (or any other actions) at the initial history, subsequent voters' incentives would depend on the signal $s_0$, because it would convey information about the realized initial action, and so the value of replacing the politician.}  
With that caveat, we note that every PBE of the auxiliary game $\Gamma(0,\infty)$ in which the politician uses a semi-pure strategy---a semi-pure PBE, for short---corresponds exactly to a personal symmetric PBE of our original model.\footnote{To be precise, our notion of PBE in the auxiliary game is what \citet{Mailath18} refers to as an ``almost perfect Bayesian equilibrium''---although \citet{Mailath18} restricts attention to finite games, which our auxiliary game is not, the definition can be applied essentially verbatim because we have a finite set of signals.
}

So it remains to see that $\Gamma(0,\infty)$ has a semi-pure PBE. The remainder of the proof now follows familiar lines (\`a la \citet{fudenberg1983subgame} and \citet{kreps1982sequential}), but we summarize such an argument for the sake of completeness because the game $\Gamma(0,\infty)$ is not an instance of the models covered in those papers.

To show existence, define for every time horizon $T\in \Zstrpos$ the game $\Gamma(\varepsilon,T)$ by further constraining players' behavior as follows. Before time $T$, all players (specifically, all voters and opportunistic-type politicians) choose from $[\varepsilon,1-\varepsilon]$ at every history rather than from $[0,1]$; and starting at time $T$, all players choose from $\{\tfrac12\}$.

First, consider the case of $\varepsilon>0$ and $T<\infty$. Let us view the game $\Gamma(\varepsilon,T)$ in agent form (with a politician agent and voter agent at each public history). Observe that each agent chooses from a compact interval, only finitely many agents make a nontrivial choice, payoffs are continuous in the strategy profile, and each agent's payoff is affine in their own action. Therefore, by the Debreu-Fan-Glicksberg theorem, a Nash equilibrium $(\pstrat^{\varepsilon,T},\vstrat^{\varepsilon,T})$ exists for $\Gamma(\varepsilon,T)$. Because every history is reached with strictly positive probability under this profile, a unique belief map $\belief^{\varepsilon,T}$ exists that satisfies the Bayesian property given $\pstrat^{\varepsilon,T}$. Also because every history is reached with strictly positive probability, the Nash equilibrium property implies sequential optimality (in the restricted game).

Now, fixing $\varepsilon>0$, consider the sequence $(\pstrat^{\varepsilon,T},\vstrat^{\varepsilon,T},\belief^{\varepsilon,T})_{T=1}^\infty$ from the compact metrizable space $\brac{\paren{[\varepsilon,1-\varepsilon]\times [\varepsilon,1-\varepsilon]^S\times[0,1]}^\Hist}^3$. By compactness, some subsequence converges to some triple $(\pstrat^{\varepsilon},\vstrat^{\varepsilon},\belief^{\varepsilon}).$

Appealing again to compactness, $(\pstrat^{\varepsilon},\vstrat^{\varepsilon},\belief^{\varepsilon})_{\varepsilon\in\left(0,\tfrac12\right)}$ also has some limit point $(\pstrat,\vstrat,\belief)$ as $\varepsilon\to0$.
Continuity of payoffs in the strategy profile (given that the game is continuous at infinity) and continuity of the Bayesian condition (which is easy to see by clearing the denominator) imply $(\pstrat,\vstrat,\belief)$ is a semi-pure PBE of $\Gamma(0,\infty)$. \end{proof}

\subsection{The Main Theorem}\label{sec: thm proof appendix}

\subsubsection{Proof Overview}\label{sec: proof sketch}

As the main theorem's proof is somewhat involved, we first provide an overview of the arguments. To help parse the logical flow of these arguments, \autoref{fig: lemma flowchart} details which results are inputs to which other results in the paper.

\begin{figure}
\resizebox{\textwidth}{!}{%
\begin{tikzpicture}[
  node distance=1cm and 1.5cm,
  box/.style={rectangle, draw, rounded corners, minimum width=1.5cm, minimum height=1cm, text centered},
  bbox/.style={box, fill=blue!20},
  pbox/.style={box, fill=violet!20},
  gbox/.style={box, fill=green!20},
  ybox/.style={box, fill=yellow!30},
  arrow/.style={-{Latex}, thick}
]


\node (L16) [gbox] {\lemnode{lem: bound on outside option}};
\node (L15) [gbox, left=of L16] {\lemnode{lem: belief upper bound is well-behaved}};

\node (L1) [ybox, above=3cm of L15] {\lemnode{lemma: FEI captures effort incentives}};

\node (L2) [ybox, right=2cm of L1] {\lemnode{lem: failure of FB is uniform}};
\node (L3) [ybox, below=of L2] {\lemnode{lem: continuation value can't keep growing}};

\node (L17) [gbox, below=of L16, opacity=0] {\lemnode{lem: bound on outside option}};

\node (L6) [bbox, right=of L17] {\lemnode{lem: learning or firing}};
\node (L7) [bbox, below=of L6] {\lemnode{lem: opportunistic types are replaced}};
\node (L4) [bbox, above left=of L6, xshift=2cm] {\lemnode{lem: firing hazard bounded below}};
\node (L5) [bbox, above right=of L6, xshift=-2cm] {\lemnode{lem: martingale convergence}};

\node (L10) [bbox, right=of L3, xshift=3cm] {\lemnode{lem: beliefs reach arbitrarily high on path}};

\node (L9) [bbox] at (L5 |- L2) {\lemnode{lem: good news doesn't uniformly cause firing}};

\node (L11) [bbox, below=of L10] {\lemnode{lem: good types can survive}};
\node (L8)  [bbox, above right=of L11] {\lemnode{lem: high reputation gives chance of tenure}};

\node (P3) [bbox, below=of L7] {\propnode{prop: differential replacement}};
\node (T1) [box, below=of P3] {\thmnode{thm: equivalence theorem}};

\node (L12) [pbox, right=of L11] {\lemnode{lem: cutoff rule for first-best equilibrium}};
\node (L13) [pbox, below left=of L12, xshift=2cm] {\lemnode{lem: first best equilibrium}};
\node (L14) [pbox, below right=of L12, xshift=-2cm] {\lemnode{lem: bad equilibrium construction}};


\node (P1) [box, below=7cm of L12] {\propnode{prop: eqm exists}};

\node (P4) [gbox] at (L15 |- L17) {\propnode{prop: outside option can be super low}};

\node (P2) [box] at (L17 |- L7) {\propnode{prop: firing happens with a good reputation}};
\node (C1) [box] at (L17 |- P3) {\cornode{cor: long run voter welfare}};
\node (C2) [box] at (L17 |- T1) {\cornode{cor: equivalence theorem with extra condition}};


\draw[arrow] (L2) -- (L3);
\draw[arrow] (L2) -- (L10);
\draw[arrow] (L3) -- (L16);
\draw[arrow] (L15) -- (L16);

\draw[arrow] (L16) -- (P4);

\draw[arrow] (L7) -- (P2);

\draw[arrow] (L3) -- (L10);
\draw[arrow] (L9) -- (L10);
\draw[arrow] (L10) -- (L11);
\draw[arrow] (L8) -- (L11);

\draw[arrow] (L11) to[out=255, in=55] (P3);
\draw[arrow] (L7) -- (P3);
\draw[arrow] (L6) -- (L7);
\draw[arrow] (L4) -- (L6);
\draw[arrow] (L5) -- (L6);
\draw[arrow] (L4) to[out=265, in=135] (L7);

\draw[arrow] (P3) -- (T1);

\draw[arrow]
  (L1.south west)
    to[out=240, in=235, looseness=1.1]
      (T1.south);

\draw[arrow]
  (L1)
    to[out=42, in=25, looseness=1.2]
      ($(L11.north)!0.5!(L11.north east)$);

\draw[arrow] (P3) -- (C1);

\draw[arrow]
  (L16.south)
    to[out=225, in=90]
      ([xshift=-1cm]C1.north west)
    to[out=-90, in=135]
      (C2.north west);

\draw[arrow] (L12) -- (L13);
\draw[arrow] (L12) -- (L14);
\draw[arrow] (L13) -- (T1);
\draw[arrow] (L14) to[out=235, in=25] (T1);

\end{tikzpicture}
}

\vspace{-9em}
\begin{center}
\begin{tikzpicture}[
  node distance=1cm and 1.5cm,
  box/.style={rectangle, draw, rounded corners, minimum width=1.5cm, minimum height=1cm, text centered},
  bbox/.style={box, fill=blue!20},
  pbox/.style={box, fill=violet!20},
  gbox/.style={box, fill=green!20},
  ybox/.style={box, fill=yellow!30},
  arrow/.style={-{Latex}, thick}
]
\node[ybox] (y) {\hyperref[sec: proof preliminaries]{Preliminary calculations}};
\node[bbox, right=0.4cm of y] (b) {\hyperref[sec: proof EFE]{Eventual full effort}};
\node[pbox, right=0.4cm of b] (p) {\hyperref[sec: proof constructions]{Equilibrium constructions}};
\node[gbox, right=0.4cm of p] (g) {\hyperref[sec: proof outside option]{Outside option}};
\end{tikzpicture}
\end{center}

\caption{This flow chart depicts the logical relationships between every lemma (L), theorem (T), corollary (C), and proposition (P). 
An arrow pointing from one result to another indicates that the former is a direct input to the latter. Different colors correspond to different sections of the proof appendix.}
\label{fig: lemma flowchart}
\end{figure}

\paragraph{Preliminary calculations.}

\cref{sec: proof preliminaries} begins with some preliminary calculations. Adapting standard repeated-games arguments, \cref{lemma: FEI captures effort incentives} confirms that 
\condref{FEI} exactly captures the ability to incentivize politicians to exert effort whenever in office. Using a compactness argument, \cref{lem: failure of FB is uniform} notes that, if \condref{FEI} fails, it fails uniformly. More specifically, the lemma finds a lower bound on how much a politician's continuation value must be allowed to grow (following some signal realizations) in order to make working incentive compatible. Applying this bound, \cref{lem: continuation value can't keep growing} finds a finite sequence of signals during which the incumbent must strictly prefer to shirk at least once.

\paragraph{Eventual full effort.}

In \cref{sec: proof EFE}, we establish that a failure of \condref{FEI} generates eventual full effort when replacement costs are low. The key mechanism is one of selection (\cref{prop: differential replacement}): opportunistic types are always replaced, whereas good types have a positive probability of being kept forever, meaning the latter is sure to happen eventually. (In \cref{sec: proof extra}, we observe that this selection easily implies a welfare statement---\cref{cor: long run voter welfare}, which says the time-$t$ voter's payoff converges to first best as $t\to\infty$.)

The argument for why opportunistic types are eventually replaced has four steps. First, \cref{lem: firing hazard bounded below} yields a finite time horizon and a lower bound on the probability that an opportunistic incumbent is replaced over that horizon, starting from any history at which he exerts effort. Intuitively, if replacement were arbitrarily unlikely over all long horizons, permanent shirking would be profitable. Full-support monitoring allows us to convert the implied lower bound on the replacement rate under such a deviation to a lower bound on the on-path replacement rate. 

Second, \cref{lem: martingale convergence} shows that any officeholder is eventually replaced, or has his type asymptotically revealed, or asymptotically behaves indistinguishably from the good type. The logic is as in \cite*{CMS04}. By the martingale convergence theorem, reputation converges almost surely. Since monitoring statistically identifies effort, beliefs can have an interior limit with infinitely many signals only if effort is asymptotically uninformative about type—that is, only if the opportunistic type behaves like the good type in the limit.

Third, \Cref{lem: learning or firing} shows that that any given officeholder is in fact eventually replaced or has his type asymptotically revealed. Given \cref{lem: martingale convergence}, we need only rule out that the officeholder is never replaced and yet asymptotically works with probability approaching $1$ even if opportunistic. But in the latter event, effort is eventually incentive compatible in every period, and \cref{lem: firing hazard bounded below} implies the officeholder is eventually replaced. 

Fourth, when the replacement cost is low, \cref{lem: opportunistic types are replaced} completes the argument that an opportunistic type is eventually replaced. Because of \Cref{lem: learning or firing}, we need only prove that an officeholder cannot be retained forever while voters become asymptotically certain that he is opportunistic. We rule that possibility out via two cases. If the opportunistic type is eventually expected to work with positive probability, then \cref{lem: firing hazard bounded below} guarantees eventual replacement. In the complementary case, the officeholder eventually has sufficiently low reputation and is expected to shirk if opportunistic that voters strictly prefer replacement.

We next argue through four lemmas that if \condref{FEI} fails, good types have a positive probability of never being replaced. \Cref{lem: high reputation gives chance of tenure} shows that reaching a sufficiently high on-path reputation—one at which a voter would retain the incumbent even if an opportunistic type were expected to surely shirk—implies a positive probability of permanent retention.\footnote{Importantly, this positive probability is unconditional on type; indeed, \cref{lem: opportunistic types are replaced} implies it is zero conditional on the opportunistic type.} Since beliefs are a martingale, Doob’s optional stopping theorem implies that once reputation exceeds this threshold, it has a positive probability of remaining there forever, in which event the incumbent is never replaced. 

\cref{lem: good news doesn't uniformly cause firing} shows that any on-path career history is followed by some immediately subsequent on-path history with an additional ``good’’ signal (one strictly indicative of effort). The proof considers two cases. If the incumbent surely shirks at the current history, then learning implies that after any good signal the next voter has a strict incentive to retain the incumbent. If the incumbent works with positive probability at the current history, his incentive compatibility implies that he must be rewarded with a positive probability of retention following some good signal.

\cref{lem: beliefs reach arbitrarily high on path} then says an incumbent's reputation can become arbitrarily good at some on-path histories if \condref{FEI} fails. The key to establishing this fact is to consider play at on-path histories with beliefs very close to a supremum that is putatively interior. As \cref{lem: good news doesn't uniformly cause firing} says some good signal results in a positive probability of retention, the 
hypothesis that we are near an interior supremum means opportunistic types work with high probability. But then, there is little learning no matter the signal, so next period's belief is again close to that supremum. Iterating this argument over the (finite) horizon identified by \cref{lem: continuation value can't keep growing} generates a history at which reputation is close to the interior supremum and yet the incumbent shirks, a contradiction. 

\cref{lem: good types can survive} finally concludes that any good officeholder has a positive probability of perpetual retention when \condref{FEI} fails. In light of \cref{lem: beliefs reach arbitrarily high on path} (and because Bayesian updating precludes only opportunists obtaining arbitrarily high reputations), the result is immediate as long as the threshold required in \cref{lem: high reputation gives chance of tenure} is strictly below $1$. It obviously is when the replacement cost is strictly positive. With a $0$ replacement cost, we just need 
that the voters' outside option is strictly less than $1$ in any equilibrium. But an equilibrium with outside option $1$ would have incumbents working at every on-path history, which cannot happen when \condref{FEI} fails, by \cref{lemma: FEI captures effort incentives}.

\paragraph{Equilibrium constructions.}

\cref{sec: proof constructions} constructs two kinds of equilibria when \condref{FEI} holds. By \cref{lemma: FEI captures effort incentives}, \condref{FEI} is equivalent to the existence of a voter strategy (ignoring voter incentives) that induces an officeholder to exert effort in every period until replacement. \cref{lem: cutoff rule for first-best equilibrium} shows that when such a strategy exists, there is one with a simple structure: a time- and history-independent likelihood-ratio test in which all “good” signals (those indicative of effort), and possibly some “bad” signals, pass the test.

The logic underlying the proof of \cref{lem: cutoff rule for first-best equilibrium} is as follows.\footnote{The proof itself is more algebraic, expressed in terms of the function $v$ witnessing \condref{FEI}.} Given any voter strategy that incentivizes full effort---which without loss gives the highest possible continuation value to a new officeholder---we can transform it without weakening incentives. First, rewarding all good signals with the highest continuation value relaxes both contemporaneous and past incentive constraints. Second, a replication argument implies that treating signals with the same likelihood ratio preserves incentives. Third, rearranging continuation values to be monotone in the likelihood ratio, while holding the on-path value fixed, only strengthens incentives. The resulting voter strategy takes the form of a likelihood-ratio cutoff test: replace the incumbent for signals below the cutoff, retain him with constant high probability above it, and (at most) mix between these two behaviors at the cutoff. Finally, we show that incentive constraints are relaxed by eliminating mixing and, more straightforwardly, by lowering the replacement rate after ``passing’’ signals (which amounts to scaling up continuation values). Hence, we can further simplify to the incumbent being replaced only after a failing signal.

With \cref{lem: cutoff rule for first-best equilibrium} in hand, we can explicitly construct some equilibria under \condref{FEI}. \cref{lem: first best equilibrium} constructs an equilibrium that attains full effort. On path, the officeholder always works, and the voter uses the strategy detailed by \cref{lem: cutoff rule for first-best equilibrium}. If any voter was supposed to have replaced a politician but did not, then
this politician always shirks thereafter, her reputation is permanently zero, and all future voters replace him.\footnote{\label{fn:wPBE-use1}When the replacement cost is positive ($c>0$), the belief at the first off-path history in this construction is valid because we are using \emph{weak} PBE. When $c=0$, our construction can be readily modified to be a PBE.} As long as the replacement cost is low, voters find it incentive compatible to replace the incumbent when they should, and every other incentive constraint holds by construction.

The final input to \autoref{thm: equivalence theorem} is \cref{lem: bad equilibrium construction}, which constructs an equilibrium that does not attain eventual full effort. The construction builds on the likelihood ratio test of \cref{lem: cutoff rule for first-best equilibrium}. Call a signal a Pass if it passes the test and a Fail otherwise; all Fail signals (and possibly some Pass signals) are bad in the sense of being strictly indicative of shirking. For a given small enough replacement cost, our construction---illustrated in \autoref{fig: bad equilibrium construction} for the binary-signal case---is as follows.

After the first Pass signal, continuation play proceeds as in \cref{lem: first best equilibrium}'s full-effort construction: on path, the officeholder keeps exerting effort and is replaced the first time he produces a Fail signal. Prior to the first Pass signal, voters replace the incumbent with a fixed probability that makes the opportunistic type indifferent between working and shirking; such a probability exists by (\cref{lem: cutoff rule for first-best equilibrium} and) the definition of Pass and Fail signals. The opportunistic type’s initial effort probability is strictly below $1$, but high enough that for the relevant range of reputations, voters who expect opportunistic incumbents to surely shirk are willing to incur the replacement cost. Before the first Pass signal, effort is determined recursively by keeping voters indifferent between retaining and replacing the incumbent at every history; because Fail signals are bad, the incumbent’s reputation declines over time, which can be exactly offset by increasing the opportunistic type’s effort probability. This construction does not attain eventual full effort because every officeholder is eventually replaced and voters are therefore perpetually exposed to the strictly positive initial shirking probability.\footnote{As mentioned in \autoref{subsec:bad-longrun}, this also implies that average voter welfare is strictly below first best, even gross of the replacement cost.}

\subsubsection{Preliminary Calculations}\label{sec: proof preliminaries}

\cref{table:notation} summarizes some of the key notation used in the formal arguments that follow.

We begin by confirming that \condref{FEI} indeed captures incumbents' incentives to exert full effort. Although this fact follows directly from adapting standard self-generation arguments \citep*{APS90}, we include a proof for completeness.
\begin{Lemma}\label{lemma: FEI captures effort incentives}
The following are equivalent: 
\begin{enumerate}
    \item\label{lemma: FEI captures effort incentives, contract} Some strategy profile  attains full effort and satisfies politician incentive compatibility.
    \item\label{lemma: FEI captures effort incentives, equilibrium} Some equilibrium  of the modified game with $(\prior,\cost)=(0,0)$ attains full effort.
    \item\label{lemma: FEI captures effort incentives, self-generation} \condref{FEI} holds.
\end{enumerate}
\end{Lemma}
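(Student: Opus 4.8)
The plan is to establish a cycle of implications $(\ref{lemma: FEI captures effort incentives, contract})\Rightarrow(\ref{lemma: FEI captures effort incentives, self-generation})\Rightarrow(\ref{lemma: FEI captures effort incentives, equilibrium})\Rightarrow(\ref{lemma: FEI captures effort incentives, contract})$, treating the $\prior=\cost=0$ game so that there is genuinely no learning and the ``good type'' is absent. Throughout I will use the fact that in a full-effort profile the voter-expected effort $\eeff(h)$ is identically $1$, so by the Bayesian condition \eqref{eqm3} the reputation stays at its initial value (here $0$), and hence the incumbent's continuation value $\cval(\cdot)$ is the only endogenous object that matters. The key translation is that a voter strategy $\vstrat$ induces, for each career history $h$, a ``retention-weighted continuation value'' $w(h):=\sum_{s\in S}\mon_1(s)[1-\vstrat(h,s)]\cval(h,s)$, and full-effort incentive compatibility at $h$ is exactly $(1-\delta)(1-\ecost)+\delta w(h)\ge (1-\delta)+\delta\sum_s\mon_0(s)[1-\vstrat(h,s)]\cval(h,s)$, while the recursion pins down $\cval(h)=(1-\delta)(1-\ecost)+\delta w(h)$.

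For $(\ref{lemma: FEI captures effort incentives, contract})\Rightarrow(\ref{lemma: FEI captures effort incentives, self-generation})$: take a full-effort incentive-compatible profile and let $\bar\cval:=\sup_{h}\cval(h)$, which is finite (values lie in $[0,1]$) and in fact at most $1$. Fix any history $h^\ast$ whose value is within $\e$ of $\bar\cval$, and define the candidate function $\cval^\ast:S\to\Rpos$ by $\cval^\ast(s):=[1-\vstrat(h^\ast,s)]\cval(h^\ast,s)$. Inequality \eqref{IC for FB} for $\cval^\ast$ is then precisely incumbent incentive compatibility at $h^\ast$. For \eqref{PK for FB}: the left-hand side equals $\cval(h^\ast)\ge\bar\cval-\e$, while $\max_s\cval^\ast(s)\le\max_s\cval(h^\ast,s)\le\bar\cval$. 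So \eqref{PK for FB} holds up to an $\e$ slack. A clean way to remove the slack is to observe that the set of pairs $(v,\text{slack})$ achievable is, by the affine/recursive structure, actually attained — or, more simply, to note that since all continuation histories themselves satisfy IC, a standard self-generation/APS argument \citep{APS90} shows the supremum value $\bar\cval$ is itself achieved by some (possibly non-history-dependent) choice, i.e.\ there is a genuine $\cval:S\to\Rpos$ with $(1-\delta)(1-\ecost)+\delta\sum_s\mon_1(s)\cval(s)=\bar\cval\ge\max_s\cval(s)$ and the IC inequality. That $\cval$ witnesses \condref{FEI}.

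For $(\ref{lemma: FEI captures effort incentives, self-generation})\Rightarrow(\ref{lemma: FEI captures effort incentives, equilibrium})$: given $\cval$ witnessing \condref{FEI}, I would build a stationary profile in the $\prior=\cost=0$ game. Normalize by replacing $\cval$ with $\cval-\min_s\cval(s)$ if helpful; then set the on-path value to be $V:=(1-\delta)(1-\ecost)+\delta\sum_s\mon_1(s)\cval(s)$, let the voter retain the incumbent after signal $s$ with probability $1-\vstrat(s)$ chosen so that $[1-\vstrat(s)]\,V=\cval(s)$ — feasible because \eqref{PK for FB} guarantees $\cval(s)\le V$ and $\cval(s)\ge 0$ — and have every incumbent work on path. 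Since $\prior=0$ there is no reputational updating to worry about, voters with $\cost=0$ and expected effort $1$ from both retention and replacement are indifferent (so the prescribed $\vstrat$ is a best response), and incumbent IC is exactly \eqref{IC for FB}. Off path I let shirking be punished by permanent replacement (value $0$), which is trivially sequentially rational for the incumbent and optimal for voters since a new opportunist then also shirks; this respects the weak-PBE belief latitude. Hence a full-effort equilibrium of the modified game exists.

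The implication $(\ref{lemma: FEI captures effort incentives, equilibrium})\Rightarrow(\ref{lemma: FEI captures effort incentives, contract})$ is immediate, since an equilibrium's strategy profile attains full effort and satisfies incumbent IC by definition. The step I expect to require the most care is the first one, $(\ref{lemma: FEI captures effort incentives, contract})\Rightarrow(\ref{lemma: FEI captures effort incentives, self-generation})$: one must either run the APS-style self-generation argument carefully to see the supremum value is attained by a stationary continuation specification, or argue directly that the $\e$-slack in \eqref{PK for FB} can be driven to zero by a limiting argument exploiting compactness of $[0,\bar\cval]^S$ and continuity of the (affine) constraints. Everything else is a routine matching of the recursive equilibrium conditions to the two inequalities defining \condref{FEI}.
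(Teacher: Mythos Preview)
Your approach matches the paper's: the same cycle $(\ref{lemma: FEI captures effort incentives, contract})\Rightarrow(\ref{lemma: FEI captures effort incentives, self-generation})\Rightarrow(\ref{lemma: FEI captures effort incentives, equilibrium})\Rightarrow(\ref{lemma: FEI captures effort incentives, contract})$, with the same constructions. For $(\ref{lemma: FEI captures effort incentives, contract})\Rightarrow(\ref{lemma: FEI captures effort incentives, self-generation})$ the paper does exactly the compactness route you mention at the end---restricting to on-path histories $\Hist^*$, taking a sequence of pairs $\bigl(\cval(h^n),\,([1-\vstrat(h^n,s)]\cval(h^n,s))_{s}\bigr)$ with $\cval(h^n)\to\bar\cval$, passing to a convergent subsequence in $[0,1]^{1+|S|}$, and checking both inequalities survive the limit---rather than invoking APS self-generation abstractly.

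Two small points to tighten. First, your supremum should be over \emph{on-path} histories: ``attains full effort'' is an on-path statement, so off-path $h$ may have $\pstrat(h)=0$ and your IC-at-$h^\ast$ step would fail there. Second, your off-path clause in $(\ref{lemma: FEI captures effort incentives, self-generation})\Rightarrow(\ref{lemma: FEI captures effort incentives, equilibrium})$ is unnecessary and internally inconsistent (``a new opportunist then also shirks'' contradicts the full-effort property you are constructing). The paper's construction is fully stationary---$\pstrat(h)=1$ and $\vstrat(h,s)=1-\cval(s)/\cval_0$ at \emph{every} history---so there is no off-path regime to specify; since monitoring is imperfect, a shirking deviation never generates an identifiable off-path public history anyway.
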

\begin{proof}

Let us first show condition~\ref{lemma: FEI captures effort incentives, contract} implies condition~\ref{lemma: FEI captures effort incentives, self-generation}. To that end, suppose the strategy profile $(\pstrat,\vstrat)$ attains full effort and satisfies politician incentive compatibility. Let $\cval:\Hist\to[0,1]$ capture the  politician's continuation values under this strategy profile, and let $\Hist^* \subseteq \Hist$ denote the set of on-path histories.  
Consider the set $$V:=\curlyb{\paren{\cval(h), \ \paren{\brac{1-\vstrat(h,s)}\cval(h,s)}_{s\in S}}:\ h\in\Hist^*}\subseteq [0,1-\ecost]\times[0,1-\ecost]^S,$$ and let $V_0\subseteq[0,1]$ denote the projection of $V$ onto its first coordinate. These sets are nonempty because $\emptyset\in\Hist^*$. So let $\bar\cval_0:=\sup V_0\in[0,1]$. 
By definition, $V$ contains some sequence $\paren{\cval^n_0,(\cval^n(s))_{s\in S}}_{n=1}^\infty$ such that $\lim_{n\to\infty} \cval^n_0= \bar\cval_0$. Dropping to a subsequence, we may assume (because $[0,1]$ is compact) that $\cval^n=(\cval^n(s))_{s\in S}$ also converges to some $\bar\cval\in[0,1]^S$ as $n\to\infty$. Let us show that $\bar\cval$ witnesses \condref{FEI}. Because it is incentive compatible for the politician to work at every history in $\Hist$, every $n$ has $\cval^n_0=(1-\delta)(1-\ecost)+\delta\mon_1\cdot\cval^n\geq(1-\delta)+\delta\mon_0\cdot\cval^n,$
so that taking limits yields $$\bar\cval_0=(1-\delta)(1-\ecost)+\delta\mon_1\cdot\bar\cval\geq(1-\delta)+\delta\mon_0\cdot\bar\cval.$$ Meanwhile, any $h\in\Hist^*$ and $s\in S$ have either $\vstrat(h,s)=1$ or $(h,s)\in\Hist^*$ by definition, hence $\brac{1-\vstrat(h,s)}\cval(h,s)$ is a convex combination of $0$ and something in $V_0$. Thus, every $n\in\N$ and $s\in S$ have $\cval^n(s)\leq\bar\cval_0$, so that taking limits yields $\bar\cval\in[0,\bar\cval_0]^S$. Hence, $\bar\cval_0=\mon_1\cdot\bar\cval\geq \mon_0\cdot\bar\cval$ and $\bar\cval_0\geq\max\bar\cval(S)$, witnessing \condref{FEI}.

To see that condition~\ref{lemma: FEI captures effort incentives, self-generation} implies condition~\ref{lemma: FEI captures effort incentives, equilibrium}, suppose \condref{FEI} holds, witnessed by $\cval:S\to\Rpos$. Letting $\cval_0:=(1-\delta) (1-\ecost) +\delta\sum_{s \in S} \mon_1(s)\cval(s) \geq \max_{s \in S} \cval(s),$ we can construct a full-effort equilibrium as follows. At every history $h$, the politician chooses $\pstrat(h)=1$, the voter's belief is $\belief(h)=0$, and following each signal $s\in S$ the voter chooses $\vstrat(h,s)=1-\tfrac{\cval(s)}{\cval_0}\in[0,1]$. The Bayesian property, and the incentives of the voter (who has payoff $1$ for either re-election or replacement) are trivial. Toward politician incentives, note that the constant function $h\mapsto\cval(h)=\cval_0$ satisfies the recursive equation for politician continuation payoffs. Because a unique such bounded function $\Hist\to\R$ exists (by the Banach fixed point theorem), it must be that the politician's continuation value is $\cval_0$ at every history. But then, \eqref{IC for FB} tells us the politician finds it optimal to work at every history, confirming equilibrium.

Finally, condition~\ref{lemma: FEI captures effort incentives, equilibrium} obviously implies condition~\ref{lemma: FEI captures effort incentives, contract}, delivering the lemma.
\end{proof}

The following lemma says that if \condref{FEI} fails, it fails in a uniform way.

\begin{Lemma}\label{lem: failure of FB is uniform}
If $(\ecost,\delta,\mon)$ do not satisfy \condref{FEI}, then some $T\in\N$ exists such that every $\cval\in\Rpos^S$ satisfying \eqref{IC for FB} has 
$(1-\delta) (1-\ecost) +\delta\sum_{s \in S} \mon_1(s)\cval(s) < \max_{s \in S} \cval(s)-\frac1{T}$.
\end{Lemma}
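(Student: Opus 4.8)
I would phrase everything through the ``promise-keeping slack'' function $g\colon\Rpos^S\to\R$,
\[
g(\cval)\ :=\ \max_{s\in S}\cval(s)\ -\ \Bigl[(1-\delta)(1-\ecost)+\delta\sum_{s\in S}\mon_1(s)\cval(s)\Bigr],
\]
and the set $C:=\{\cval\in\Rpos^S:\ \cval\text{ satisfies }\eqref{IC for FB}\}$, a closed half-space intersected with $\Rpos^S$. This $C$ is nonempty: since $\mon_0\ne\mon_1$ there is a signal $s^*$ with $\mon_1(s^*)>\mon_0(s^*)$, and the $\cval$ supported on $s^*$ with $\cval(s^*)=t$ satisfies \eqref{IC for FB} once $t$ is large. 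The failure of \condref{FEI} is exactly the statement that \eqref{PK for FB} fails for every $\cval$ satisfying \eqref{IC for FB}; that is, $g(\cval)>0$ for every $\cval\in C$. Since the conclusion sought is equivalent to $\inf_{\cval\in C}g(\cval)>0$ --- given $\e_0:=\inf_{C}g>0$, take any $T\in\N$ with $1/T<\e_0$ --- the whole task is to upgrade pointwise positivity of $g$ on $C$ to uniform positivity.

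\textbf{Step 1 (coercivity).} I would first record that, since $\mon_1$ is a probability vector, $\sum_{s\in S}\mon_1(s)\cval(s)\le\max_{s\in S}\cval(s)$, so for every $\cval\in\Rpos^S$
\[
g(\cval)\ \ge\ (1-\delta)\bigl(\max_{s\in S}\cval(s)-(1-\ecost)\bigr),
\]
which, using $\delta<1$, tends to $+\infty$ as $\max_{s\in S}\cval(s)\to\infty$. Fixing any $\cval^{(0)}\in C$ and then $M:=\max\{\,1-\ecost+\tfrac1{1-\delta},\ \max_{s}\cval^{(0)}(s)\,\}$, this estimate gives $g(\cval)\ge 1$ whenever $\cval\in C$ and $\max_{s}\cval(s)>M$, while $K:=\{\cval\in C:\ \max_{s}\cval(s)\le M\}$ is nonempty (it contains $\cval^{(0)}$).

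\textbf{Step 2 (compact part, then conclude).} $K$ is closed and bounded, hence compact; on it $g$ is continuous and, by the failure of \condref{FEI}, strictly positive, so $\e_1:=\min_{K}g>0$ is attained. Set $\e_0:=\min\{\e_1,1\}>0$. Then $g\ge\e_0$ on all of $C$: on $K$ by the definition of $\e_1$, and outside $K$ because there $g\ge 1\ge\e_0$ by Step 1. Any $T\in\N$ with $1/T<\e_0$ then delivers the lemma, since for every $\cval$ obeying \eqref{IC for FB} we get $g(\cval)\ge\e_0>1/T$, i.e.\ $(1-\delta)(1-\ecost)+\delta\sum_{s\in S}\mon_1(s)\cval(s)<\max_{s\in S}\cval(s)-1/T$.

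\textbf{Main obstacle.} The only genuine difficulty is that $C$ is unbounded, so one cannot apply compactness to it directly; the coercivity bound of Step 1 --- which is where ``$\mon_1\in\Delta S$'' and ``$\delta<1$'' are used --- is exactly what localizes the minimization of $g$ to the compact set $K$. The rest (closedness and nonemptiness of $C$, continuity of $g$, and the reading of ``$\lnot$\,\condref{FEI}'' as pointwise positivity of $g$ on $C$) is routine.
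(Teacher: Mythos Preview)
Your proof is correct and takes essentially the same approach as the paper: both split the domain into a bounded piece (handled by compactness and continuity) and an unbounded piece (handled by the coercivity estimate $g(\cval)\ge(1-\delta)\bigl(\max_s\cval(s)-(1-\ecost)\bigr)$, which the paper phrases as the bound $(1-\delta)\ecost$ for $\max_s\cval(s)\ge 1$). Your version is slightly more explicit---you verify $C\neq\emptyset$ and use a general threshold $M$ rather than the paper's cutoff at $1$---but these are cosmetic differences, not a different route.
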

\begin{proof}
Suppose \condref{FEI} fails.

The set $V$ of $\cval\in[0,1]^S$ satisfying \eqref{IC for FB} is compact, and the function $\cval\mapsto \max \cval(S)- \brac{(1-\delta) (1-\ecost) +\mon_1\cdot\cval}$ is continuous on it. Failure of \condref{FEI} then implies this function is strictly positive on $V$, and so bounded below by a strictly positive number. 

Meanwhile, any $\cval\in\Rpos^S$ with $\bar\cval:=\max\cval(S)\geq 1$ has 
$$(1-\delta) (1-\ecost) +\delta\mon_1\cdot\cval\leq (1-\delta) (\bar\cval-\ecost) + \delta \bar\cval=\bar\cval - (1-\delta)\ecost,$$
so that $$\inf_{v\in \Rpos^S:\ \cval \text{ satisfies } \eqref{IC for FB}} \curlyb{ \max\cval(S)- \brac{(1-\delta) (1-\ecost) +\delta\mon_1\cdot\cval}}>0.$$
Thus $\tfrac1{T}$ is strictly below this 
 infimum value for $T$ large enough.
\end{proof}

The next lemma finds, when \condref{FEI} fails, a sequence of signals and a duration over which an incumbent cannot be incentivized to work.
\begin{Lemma}\label{lem: continuation value can't keep growing}
Suppose \condref{FEI} fails, let $T$ be as delivered by \cref{lem: failure of FB is uniform}, and fix any equilibrium and any $h_0\in\Hist$ with $\cval(h_0)>0$. 
Recursively define $s_0,\ldots,s_{T-1}\in S$ and $h_1,\ldots,h_{T}\in \Hist$ by letting
$s_t\in\arg\max_{s\in S} \curlyb{\brac{1-\vstrat(h_t,s)}\cval(h_t,s)}$ and $h_{t+1}=(h_{t},s_{t})$
for $t\in\{0,\ldots,T-1\}$.
Then some $t\in\{0,\ldots,T-1\}$ has $\pstrat(h_t)=0$.
\end{Lemma}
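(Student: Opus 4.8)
The plan is a proof by contradiction. Suppose, contrary to the claim, that $\pstrat(h_t)>0$ for every $t\in\{0,\ldots,T-1\}$. I will show that this forces the opportunistic incumbent's continuation value to grow by strictly more than $\tfrac1T$ between consecutive histories along the chain $h_0,h_1,\ldots,h_T$, so that $\cval(h_T)>\cval(h_0)+1>1$, contradicting the fact that $\cval$ is $[0,1]$-valued.

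The crux is to convert the bare hypothesis $\pstrat(h_t)>0$ into the incentive inequality \eqref{IC for FB}, together with an identity for $\cval(h_t)$. Fix $t\in\{0,\ldots,T-1\}$ and write $w_t(s):=[1-\vstrat(h_t,s)]\cval(h_t,s)\ge 0$ for the retention-weighted continuation values that follow $h_t$, so that $w_t=(w_t(s))_{s\in S}\in\Rpos^S$. By equilibrium condition~1, $\pstrat(h_t)$ maximizes over $a\in[0,1]$ the map $g_t(a):=(1-\delta)(1-\ecost a)+\delta\sum_{s\in S}\mon_a(s)w_t(s)$, which is affine in $a$ with slope $\delta\sum_{s\in S}[\mon_1(s)-\mon_0(s)]w_t(s)-(1-\delta)\ecost$. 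Since $\pstrat(h_t)>0$ maximizes an affine function over $[0,1]$, that slope must be nonnegative; equivalently, $w_t$ satisfies \eqref{IC for FB}. A nonnegative slope also makes $a=1$ a maximizer of $g_t$, so $\cval(h_t)=g_t(\pstrat(h_t))=\max_a g_t(a)=g_t(1)=(1-\delta)(1-\ecost)+\delta\sum_{s\in S}\mon_1(s)w_t(s)$.

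Now apply \cref{lem: failure of FB is uniform} to $w_t\in\Rpos^S$: since \condref{FEI} fails and $T$ is the integer it provides, \eqref{IC for FB} forces $(1-\delta)(1-\ecost)+\delta\sum_{s\in S}\mon_1(s)w_t(s)<\max_{s\in S}w_t(s)-\tfrac1T$, that is, $\cval(h_t)<\max_{s\in S}w_t(s)-\tfrac1T$. By construction $s_t\in\arg\max_{s\in S}w_t(s)$ and $h_{t+1}=(h_t,s_t)$, so $\max_{s\in S}w_t(s)=[1-\vstrat(h_{t+1})]\cval(h_{t+1})\le\cval(h_{t+1})$. Hence $\cval(h_{t+1})>\cval(h_t)+\tfrac1T$ for every $t\in\{0,\ldots,T-1\}$; telescoping these $T$ inequalities yields $\cval(h_T)>\cval(h_0)+1$, which exceeds $1$ because $\cval(h_0)>0$—contradicting $\cval(h_T)\le 1$.

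The only genuinely delicate step is the conversion in the second paragraph: extracting \emph{both} \eqref{IC for FB} for the retention-weighted vector $w_t$ and the value identity $\cval(h_t)=(1-\delta)(1-\ecost)+\delta\sum_{s\in S}\mon_1(s)w_t(s)$ from the single fact $\pstrat(h_t)>0$. Both drop out at once once one exploits linearity of the incumbent's problem in $a$: a positive maximizer of an affine map on $[0,1]$ forces nonnegative slope, which in turn makes $a=1$ optimal and pins down $\cval(h_t)$ as the work-value. Everything else is a telescoping sum, the bookkeeping identities $\vstrat(h_t,s_t)=\vstrat(h_{t+1})$ and $\cval(h_t,s_t)=\cval(h_{t+1})$ (immediate from $h_{t+1}=(h_t,s_t)$), and a direct appeal to \cref{lem: failure of FB is uniform}.
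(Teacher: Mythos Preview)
Your proof is correct and follows essentially the same approach as the paper: a contradiction argument showing that if $\pstrat(h_t)>0$ throughout, then each $w_t$ satisfies \eqref{IC for FB}, so \cref{lem: failure of FB is uniform} forces $\cval(h_{t+1})>\cval(h_t)+\tfrac1T$, and telescoping yields a value exceeding~$1$. You spell out more carefully than the paper why $\pstrat(h_t)>0$ delivers both \eqref{IC for FB} for $w_t$ and the identity $\cval(h_t)=(1-\delta)(1-\ecost)+\delta\sum_s\mon_1(s)w_t(s)$ via the affine structure of the incumbent's problem, but the argument is otherwise identical.
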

\begin{proof}
Assume for a contradiction that every $t\in\{0,\ldots,T-1\}$ has $\pstrat(h_t)>0$. For each such $t$, the fact that the incumbent willingly exert effort tells us the vector $\cval^t:=\paren{\brac{1-\vstrat(h_t,s)}\cval(h_t,s)}_{s\in S} \in [0,1]^S$ satisfies \eqref{IC for FB}. Hence, \cref{lem: failure of FB is uniform}
says $$\cval(h_t)=(1-\delta) (1-\ecost) +\delta\sum_{s \in S} \mon_1(s)\cval^t(s) < \cval^t(s_t)-\frac1{T}\leq\cval(h_{t+1})-\frac1{T}.$$ Therefore, 
$$0\leq \cval(h_0)< \cval(h_T)-T\frac1{T}\leq 1 -T\frac1{T} = 0,$$
a contradiction.
\end{proof}

For the analysis that follows, we introduce some notation to describe voters' belief updating.

\begin{notation}
For any belief $\belief\in(0,1]$, work probability $a\in[0,1]$ for the opportunistic type, and signal $s\in S$, let $$\bupdate_a(\belief|s):=\frac{\belief\mon_1(s)}{\brac{\belief+(1-\belief)a}\mon_1(s) + (1-\belief)(1-a)\mon_0(s)}\in(0,1]$$
denote the updated Bayesian belief that a voter would have after observing signal $s$.

For any $\belief\in(0,1]$ and bound $\pbound\in[0,1]$, let $$\bbound_\pbound(\belief):=\sup_{a\in[\pbound,1],\ s\in S} \bupdate_a(\belief|s)\in(0,1]$$
denote the highest level beliefs can grow to in a period when the opportunistic-type incumbent's expected effort is at least $\pbound$.

For any $t\in\mathbb Z$ and $\pbound\in[0,1]$, let $\bbound_\pbound^t:(0,1]\to(0,1]$ denote the $t$-fold composition of $\bbound_\pbound$.
\end{notation}

\subsubsection{Eventual Full Effort}\label{sec: proof EFE}

Primarily for the argument that a failure of \condref{FEI} implies all equilibria attain eventual full effort, we will also use the following notation.

\begin{notation}
Given any equilibrium, we define the following random variables. 

For any $t\in\Zpos$, let $\replacerv_t\in\{0,1\}$ be the random variable indicating whether the first officeholder has been replaced as of the beginning of period $t$; and let $\replacetimerv:=\inf\{t\in\Zpos:\ \replacerv_t=1\}\in\Zpos\cup\{\infty\}$ denote the first replacement time. 
For any $t\in\Zpos$ with $t\leq\replacetimerv$, let $\histrv_t\in\Hist_t$ be the random variable denoting the first officeholder's public career history up to time $t$. For any $t\in\Zpos$, let $\beliefrv_t:=\belief(\histrv_{t\wedge\replacetimerv})\in[0,1]$ denote the first officeholder's reputation, and let $\actprv_t:=(1-\replacerv_t)\pstrat(\histrv_{t\wedge\replacetimerv})\in[0,1]$ denote the expected effort of the initial officeholder conditional on being opportunistic. Let $\replacerv_\infty$ [resp., $\beliefrv_\infty$, $\actprv_\infty$] denote the limit as $t\to\infty$ of $\replacerv_t$ [resp., $\beliefrv_t$, $\actprv_t$] in the event that this limit exists.

Any equilibrium induces a joint law over the stochastic process $(\actrv_t,\beliefrv_t,\replacerv_t)_{t=0}^\infty$. We take all probabilistic statements to be with respect to this law unless stated otherwise (and we suppress the underlying probability space and the specific equilibrium in our notation). 
\end{notation}

For ease of notation, we work throughout with the above random variables concerning only the first officeholder. Given our focus on personal symmetric equilibria, results we prove about the first officeholder apply without change to any officeholder.

We now show that politicians' incentives imply a lower bound on the opportunistic type's (time-averaged) hazard rate of being replaced.
\begin{Lemma}\label{lem: firing hazard bounded below}
Some $T\in\N$ and $\pbound\in(0,1)$ exist such that, in any equilibrium and at any history $h\in\Hist$ with $\pstrat(h)>0$, an opportunistic-type officeholder has a probability of at least $\pbound$ of being replaced in the next $T$ periods. 
\end{Lemma}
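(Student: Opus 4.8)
The plan is to show that if, at a history where the opportunistic type works with positive probability, the opportunistic type's chance of replacement over long horizons were small, then permanent shirking would be a profitable deviation, contradicting incentive compatibility. The first step is to work with the deviation in which the opportunistic incumbent shirks forever. Under this deviation, starting from a given history $h$ with $\pstrat(h)>0$, let $q_T$ denote the (deviation-induced) probability that the incumbent is still in office $T$ periods later, i.e.\ has not been replaced at any of the intervening histories. Incentive compatibility at $h$ means the equilibrium continuation value $\cval(h)$---which is at most $1$, and in fact does not exceed the deviation value plus a margin---must weakly exceed the shirking-deviation payoff. The deviation payoff is bounded below by $(1-\delta)\sum_{t\geq 0}\delta^t q_t$ (the shirker collects the per-period flow $1=\upo(0)$ in each period he remains in office, and $0$ thereafter). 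Since $\cval(h)\le 1$ and working costs $\ecost$ per period, comparing the two gives an inequality of the form $1 - (1-\delta)\sum_{t\geq 0}\delta^t q_t \geq (1-\delta)\ecost\cdot(\text{something})$ that forces $\sum_{t\ge 0}\delta^t q_t$ to be bounded away from $\tfrac{1}{1-\delta}$; equivalently, $\sum_{t\ge0}\delta^t(1-q_t)$ is bounded below by a strictly positive constant depending only on $(\ecost,\delta)$. Because $1-q_t$ is nondecreasing in $t$ (once replaced, always replaced) and $\delta<1$, this yields a fixed finite $T$ and a fixed $\underline{q}>0$, depending only on $(\ecost,\delta)$, with $1-q_T\geq \underline q$, i.e.\ the opportunistic type who shirks forever is replaced within $T$ periods with probability at least $\underline q$.

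The second step converts this lower bound on the replacement probability \emph{under the shirking deviation} into a lower bound on the replacement probability \emph{on path}, i.e.\ under the equilibrium strategy $\pstrat$. The point is that voters' replacement decisions depend only on the realized signal history, and the full-support assumption $\mon_\heff(s)>0$ for all $s\in S$ (together with $\mon_0(\cdot)\ne\mon_1(\cdot)$, so all signal probabilities under either action are bounded away from $0$ across the finite set $S$) means the equilibrium signal distribution over any fixed horizon $T$ and the shirking-deviation signal distribution over that horizon are mutually absolutely continuous, with Radon--Nikodym derivative bounded above by a constant $L$ depending only on $\mon$ and $T$. Concretely, for any length-$\le T$ signal sequence, the ratio of its probability under ``always work'' (the most favorable equilibrium play for the incumbent) to its probability under ``always shirk'' is at most $\bigl(\max_{s}\mon_1(s)/\mon_0(s)\bigr)^T$. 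The event ``replaced within $T$ periods'' is a union of cylinder events determined by signal prefixes, so its equilibrium probability is at least $L^{-1}$ times its shirking-deviation probability, which is at least $\underline q$. Hence on path the opportunistic type is replaced within $T$ periods with probability at least $\pbound:=\underline q/L\in(0,1)$, and both $T$ and $\pbound$ depend only on the primitives $(\ecost,\delta,\mon)$ and not on the equilibrium or the history $h$.

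The main obstacle, and the step requiring care, is the second one: making precise the change-of-measure between equilibrium play and the shirking deviation when the opportunistic incumbent may be mixing at intermediate histories, and handling the presence of the good type (whose behavior is unaffected by the deviation). One clean way around the latter is to condition on the incumbent being the opportunistic type throughout---the statement is about ``an opportunistic-type officeholder,'' so all relevant probabilities can be taken conditional on that type---which removes the good-type component and leaves a genuine single-agent comparison between the opportunist's equilibrium path and his shirking deviation. Along the equilibrium path the opportunist's signal law is a mixture (over his realized action sequence) of laws each of which is absolutely continuous with respect to the all-shirk law with density bounded by $\bigl(\max_s \mon_1(s)/\mon_0(s)\bigr)^T$; mixtures preserve this bound, so the Radon--Nikodym argument goes through uniformly. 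I would also double-check the first step's bookkeeping of $\cval(h)$ versus the deviation payoff, since $\cval(h)$ is an equilibrium continuation value that already accounts for future replacement, and one must be careful that the inequality extracted is $\sum_t \delta^t(1-q_t)\ge \text{const}>0$ rather than something vacuous; writing $\cval(h)\le 1-(1-\delta)\ecost\,\pstrat(h)+\delta(\ldots)$ and unrolling, or more simply bounding $\cval(h)$ crudely by noting the opportunist's equilibrium payoff cannot exceed $1$ minus the discounted effort cost he is about to incur, suffices.
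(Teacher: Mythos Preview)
Your two-step strategy---first bound the replacement probability under the permanent-shirking deviation via incentive compatibility at $h$, then transfer this bound to on-path play by a change-of-measure argument---is exactly the paper's approach. The first step is fine; unrolling $\cval(h)\leq(1-\delta)(1-\ecost)+\delta$ against the shirking payoff $(1-\delta)\sum_{t\ge0}\delta^t q_t$ yields $\sum_{t\ge0}\delta^t(1-q_t)\ge\ecost$, and monotonicity of $1-q_t$ then gives a uniform $T$ and $\underline q$. (The paper does this slightly more directly by choosing $T,\varepsilon$ with $(1-\varepsilon)(1-\delta^T)>(1-\delta)(1-\ecost)+\delta$ and arguing that replacement probability under shirking below $\varepsilon$ would make shirking strictly profitable.)

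Your second step, however, has the Radon--Nikodym inequality pointing the wrong way. You bound the density of the equilibrium (or ``always work'') signal law with respect to the all-shirk law from \emph{above} by $L=\bigl(\max_s\mon_1(s)/\mon_0(s)\bigr)^T$, which yields $P_{\mathrm{eq}}(A)\le L\cdot P_{\mathrm{shirk}}(A)$---the opposite of what you need. To conclude $P_{\mathrm{eq}}(\text{replaced})\ge \text{const}\cdot P_{\mathrm{shirk}}(\text{replaced})$, you must bound the density from \emph{below}: for any signal $s\in\mathrm{supp}(\mon_0)$ and any $\tilde a\in[0,1]$, one has $\mon_{\tilde a}(s)/\mon_0(s)\ge\min\{1,\mon_1(s)/\mon_0(s)\}\ge\lambda:=\min_{s\in\mathrm{supp}(\mon_0)}\mon_1(s)/\mon_0(s)\in(0,1)$, so any length-$T$ signal sequence in the support of the shirking law has equilibrium probability at least $\lambda^T$ times its shirking probability. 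This gives $\eta=\lambda^T\underline q$, as in the paper. Relatedly, your assertion that ``all signal probabilities under either action are bounded away from $0$'' is not justified (the model assumes only $\mon_1$ has full support, not $\mon_0$) and is not needed: the lower bound above only requires $\mon_1(s)>0$ for $s\in\mathrm{supp}(\mon_0)$, which follows from the maintained full-support assumption on $\mon_1$.
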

\begin{proof}
Take some large enough $T\in\N$ and small enough $\varepsilon\in(0,1)$ to ensure that 
\begin{equation}
    (1-\varepsilon)(1-\delta^T)>(1-\delta)(1-\ecost)+\delta. \label{e:firing harzard inequality}
\end{equation}
Denoting $S_0:=\mathrm{supp}(\mon_0)$, let $\lrat:=\min_{s\in S_0} \tfrac{\mon_1(s)}{\mon_0(s)}\in(0,1)$ and $\pbound:=\lrat^T\varepsilon$.

First, observe that at any history $h\in\Hist$ with $\pstrat(h)>0$, incentive compatibility requires that an opportunistic incumbent has a probability at least $\varepsilon$ of being replaced in the next $T$ periods, conditional on always shirking starting from $h$. The reason is that if this probability of replacement were strictly less than $\epsilon$, then the continuation payoff from deviating to always shirking would exceed $(1-\epsilon)(1-\delta^T)$, which by \eqref{e:firing harzard inequality} strictly exceeds the maximum feasible continuation payoff from working at $h$.

Next, observe that for any sequence of $T$ signals, its probability of arising (conditional on the incumbent not being replaced) under any politician strategy is at least $\lrat^T$ times as high as its probability of arising if the politician were to always shirk. Therefore, the incumbent at $h$ will be replaced in the next $T$ periods with probability at least $\lrat^T\varepsilon=\pbound$.
\end{proof}

We now record a lemma that says either an officeholder's type will be revealed or learning will halt. The reasoning behind this lemma is essentially the same as that of \citet*{CMS04}.

\begin{Lemma}\label{lem: martingale convergence}
In any equilibrium, the stochastic process $(\replacerv_t,\actprv_t,\beliefrv_t)_{t=0}^\infty$ satisfy the following: \begin{enumerate}
    \item The sequence $(\replacerv_t,\beliefrv_t)_t$ converges a.s.~to a random variable $(\replacerv_\infty,\beliefrv_\infty)$.
    \item We a.s.~have $\replacerv_\infty=1$ (that is, $\replacetimerv<\infty$) or $\beliefrv_\infty\in\{0,1\}$ or $(\actprv_t)_t$ converges to 1.
\end{enumerate}
\end{Lemma}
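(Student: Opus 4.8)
The plan is to prove the two claims of \cref{lem: martingale convergence} in sequence, essentially following the \citet*{CMS04} template adapted to our replacement setting.

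\paragraph{Part 1: convergence.} First I would argue that $(\beliefrv_t)_t$ is a bounded martingale with respect to the filtration generated by the public signals (and replacement events) of the first officeholder. This is the standard consequence of Bayesian updating: condition \eqref{eqm3} of the equilibrium definition ensures that at any on-path history where the incumbent is retained with positive probability, the posterior $\belief(h,s)$ satisfies Bayes' rule, and the signal $s$ is drawn according to the voter-expected effort $\mon_{\eeff(h)}$; hence $\E[\beliefrv_{t+1}\mid \histrv_{t}]=\beliefrv_t$ on the event $\{t<\replacetimerv\}$. Once the officeholder is replaced, $\beliefrv_t$ is frozen at $\belief(\histrv_{\replacetimerv})$ by the definition of $\histrv_{t\wedge\replacetimerv}$, so the martingale property persists trivially. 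The martingale convergence theorem then gives a.s.~convergence of $\beliefrv_t$ to some $\beliefrv_\infty$. For $\replacerv_t$: it is nondecreasing in $t$ and bounded by $1$, so it converges a.s.~(pointwise, and in fact the limit is $\indic\{\replacetimerv<\infty\}$). Jointly, $(\replacerv_t,\beliefrv_t)\to(\replacerv_\infty,\beliefrv_\infty)$ a.s.

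\paragraph{Part 2: the trichotomy.} Now I would work on the event $E:=\{\replacerv_\infty=0,\ \beliefrv_\infty\in(0,1)\}$ and show that on $E$ we a.s.~have $\actprv_t\to 1$. On $E$, the officeholder is never replaced, so $\actprv_t=\pstrat(\histrv_t)$ for all $t$, and infinitely many signals are generated. The key is that convergence of the martingale $\beliefrv_t$ to an interior limit forces the per-period belief movement to vanish, which — since monitoring statistically identifies effort ($\mon_0\neq\mon_1$) — forces the opportunistic type's behavior to become indistinguishable from the good type's, i.e.~$\pstrat(\histrv_t)\to 1$. Concretely, I would use the Bayes update formula in the \texttt{notation} block: given $\beliefrv_t=\belief$ and opportunistic-effort $\actprv_t=a$, the realized posterior after signal $s$ is $\bupdate_a(\belief\mid s)$. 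The aggregate signal distribution is the mixture $\mon_{\eeff(h)}=\eeff(h)\mon_1 + (1-\eeff(h))\mon_0$ where $\eeff(h)=\belief+(1-\belief)a$. A short computation shows $|\bupdate_a(\belief\mid s)-\belief|$ is bounded below by a continuous function of $(\belief,a,s)$ that is zero only when $a=1$ (given $\belief$ bounded away from $0,1$ and $\mon_1(s)\neq\mon_0(s)$ for the signals that discriminate). Since $\sum_t \E[(\beliefrv_{t+1}-\beliefrv_t)^2]<\infty$ for an $L^2$-bounded martingale (telescoping of $\E[\beliefrv_{t+1}^2]-\E[\beliefrv_t^2]$), we get $\beliefrv_{t+1}-\beliefrv_t\to 0$ a.s., and hence, on $E$, the discriminating signals' realized belief jumps vanish, which pins down $\actprv_t\to 1$.

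\paragraph{Execution details and main obstacle.} I would make the last step rigorous by a contradiction/measure-theoretic argument: suppose on a positive-probability subset of $E$ we have $\limsup_t \pstrat(\histrv_t)\leq 1-\xi$ for some $\xi>0$ along a subsequence; combined with $\beliefrv_t$ lying in a compact interval $[\beliefrv_\infty-\zeta,\beliefrv_\infty+\zeta]\subset(0,1)$ eventually, and $\mon_1\neq\mon_0$ (so there is a signal $s^*$ with $\mon_1(s^*)\neq\mon_0(s^*)$, which arises with probability bounded below), the realized posterior after $s^*$ would differ from $\beliefrv_t$ by at least some $\varepsilon(\xi,\zeta)>0$ infinitely often — contradicting $\beliefrv_{t+1}-\beliefrv_t\to 0$. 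The main obstacle I anticipate is purely bookkeeping: carefully handling the conditioning so that the "signal $s^*$ arises with probability bounded below infinitely often" claim is a legitimate almost-sure statement (e.g.~via a conditional Borel–Cantelli / Lévy extension argument on the event $E$), and making sure the martingale difference bound $|\bupdate_a(\belief\mid s^*)-\belief|\geq \varepsilon$ is uniform over the relevant compact range of $(\belief,a)$. None of this is conceptually hard, but the quantifiers and the restriction to the event $E$ (where $\replacerv$ has stabilized and beliefs are interior) need to be threaded carefully; this is exactly the point where \citet*{CMS04} do the analogous work, so I would mirror their argument.
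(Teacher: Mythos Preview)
Your proposal is correct and follows essentially the same approach as the paper: Part~1 via pathwise monotonicity of $\replacerv_t$ plus the martingale convergence theorem for $\beliefrv_t$, and Part~2 via the \citet*{CMS04} observation that interior convergence of beliefs forces the opportunistic type's effort to approach~$1$, proved by a uniform lower bound on belief movement when effort is bounded away from~$1$ together with a conditional Borel--Cantelli argument. One incidental remark: your detour through the $L^2$ increment bound is unnecessary, since a.s.\ convergence of $(\beliefrv_t)_t$ already gives $\beliefrv_{t+1}-\beliefrv_t\to 0$ a.s.\ directly; the paper's phrasing (``probability of exiting the neighborhood is bounded below'') is the same contradiction packaged slightly differently.
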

\begin{proof}
The first point follows from pathwise monotonicity of $\replacerv$ and the martingale convergence theorem for the bounded martingale $\beliefrv$. 

For the second point, consider the event in which $\replacerv_\infty\neq1$ and $\beliefrv_\infty\notin\{0,1\}$---so $\replacerv_\infty=0$ (that is, $\replacetimerv=\infty$) and $0<\beliefrv_\infty<1$.  For any $\belief\in(0,1)$, we have $\min_{s\in S,\ \tilde a\in[0,1]}\mon_{\belief+(1-\belief)\tilde a}(s)>0$; and some $s\in S$ has $\mon_1(s)\neq\mon_0(s)$. Therefore, when a belief lies in some small enough open neighborhood of $\belief$ and the officeholder's expected effort is bounded away from $1$, the probability of the belief exiting said neighborhood is bounded below. Thus, in the event that $0<\beliefrv_\infty<1$, we must have $\actprv_t\to1$. 
\end{proof}

We now prove that in any equilibrium, an officeholder is either eventually replaced or his type is asymptotically revealed.

\begin{Lemma}\label{lem: learning or firing}
Consider any equilibrium, and let $(\replacerv_\infty,\beliefrv_\infty)$ be as in the statement of \cref{lem: martingale convergence}. We a.s.~have $\replacerv_\infty=1$ (that is, $\replacetimerv<\infty$) or $\beliefrv_\infty\in\{0,1\}$.
\end{Lemma}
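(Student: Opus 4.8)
The plan is to combine \cref{lem: martingale convergence} with \cref{lem: firing hazard bounded below} to eliminate the one remaining case left open by the former. By \cref{lem: martingale convergence}, almost surely we have $\replacerv_\infty=1$, or $\beliefrv_\infty\in\{0,1\}$, or $(\actprv_t)_t\to1$. So it suffices to show that, on the event $A:=\{\replacerv_\infty=0,\ \beliefrv_\infty\notin\{0,1\},\ \actprv_t\to1\}$, we reach a contradiction---i.e., $\Prob(A)=0$. On $A$ the officeholder is never replaced ($\replacetimerv=\infty$), is opportunistic (since $\beliefrv_\infty<1$ rules out the limit belief being $1$, and more to the point $\actprv_t$ is defined as the opportunistic type's effort weighted by survival, so $\actprv_t\to1$ with $\replacerv_t\to0$ forces the realized type to be opportunistic on this event up to a null set), and eventually works with probability approaching $1$.

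The key step is: on $A$, there is (almost surely) a time after which the opportunistic incumbent's continuation strategy prescribes $\pstrat(\histrv_t)>0$ at some---indeed infinitely many---histories on the realized path. First I would argue that $\actprv_t\to1$ with $\replacerv_t\to0$ means $\pstrat(\histrv_t)\to1$ along the realized career history; in particular $\pstrat(\histrv_t)>0$ for all large $t$. Then I would invoke \cref{lem: firing hazard bounded below}: there are $T\in\N$ and $\pbound\in(0,1)$ such that at any history $h$ with $\pstrat(h)>0$, an opportunistic officeholder is replaced within the next $T$ periods with probability at least $\pbound$. Applying this at times $t, t+T, t+2T,\ldots$ (using, say, the conditional Borel--Cantelli lemma / Lévy's extension, since the events ``replaced in $(\tau,\tau+T]$'' each have conditional probability $\geq\pbound$ on the non-replacement history) yields that conditional on never having been replaced by time $t$ and on $\pstrat$ staying positive, replacement occurs eventually with probability $1$. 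Hence $\Prob(\replacetimerv=\infty,\ \actprv_t\to1)=0$, so $\Prob(A)=0$, which is the desired contradiction.

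To make this rigorous I would work on the event $A_n:=A\cap\{\pstrat(\histrv_t)>0\ \forall t\geq n\}$ for each $n\in\N$; since $\actprv_t\to1$ and $\replacerv_t\to0$ on $A$, we have $A=\bigcup_n A_n$ up to a null set, so it is enough to show $\Prob(A_n)=0$ for each $n$. On $A_n$, for every $k$ the conditional probability (given the history through time $n+kT$) of replacement occurring in periods $n+kT+1,\ldots,n+(k+1)T$ is at least $\pbound$ by \cref{lem: firing hazard bounded below} (noting $\histrv_{n+kT}$ has $\pstrat>0$ on this event, and that \cref{lem: firing hazard bounded below} is a statement about the opportunistic type, whose identity is pinned down on $A_n$). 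A standard second-Borel--Cantelli argument for conditionally-independent-enough events with uniformly-bounded-below conditional probabilities then gives $\Prob(\replacetimerv=\infty\mid A_n)=0$, contradicting $\replacerv_\infty=0$ on $A_n$ unless $\Prob(A_n)=0$.

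The main obstacle I anticipate is the bookkeeping around conditioning and measurability: \cref{lem: firing hazard bounded below} is phrased for a fixed history $h$ with $\pstrat(h)>0$ and about the opportunistic type, so I need to patch together these per-history, per-type statements into a single almost-sure statement about the stochastic process, being careful that (i) on the event $A$ the type is indeed opportunistic (this follows because $\actprv_t$ already encodes ``opportunistic and not replaced,'' and its limit being $1$ is incompatible with the good type only if... actually the good type always works, so I should instead note that $\actprv_t\to 1$ is automatic for the good type and the real content is $\replacetimerv=\infty$; the contradiction with \cref{lem: firing hazard bounded below} must be run conditional on the opportunistic type, using that $\Prob(\text{opportunistic})>0$ and that a positive-probability event of ``opportunistic, never replaced, eventual full effort'' is what we are excluding), and (ii) the $T$-step replacement events can be stacked. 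Everything else is a routine application of the cited lemmas and the martingale/Borel--Cantelli machinery.
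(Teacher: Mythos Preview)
Your overall skeleton---combine \cref{lem: martingale convergence} with \cref{lem: firing hazard bounded below} via a Borel--Cantelli argument to kill the event $A=\{\replacerv_\infty=0,\ \beliefrv_\infty\in(0,1),\ \actprv_t\to1\}$---is the right one, but there is a genuine gap in how you handle the type, and the paper closes it by a different device than the one you gesture at.

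First, two confusions to clear up. The random variable $\actprv_t=(1-\replacerv_t)\pstrat(\histrv_{t\wedge\replacetimerv})$ is a function of the \emph{public} career history only; it does not depend on the incumbent's realized type. So neither ``$\actprv_t\to1$ forces the realized type to be opportunistic'' nor ``$\actprv_t\to1$ is automatic for the good type'' is right. The event $A$ lies in the public $\sigma$-algebra and says nothing about the realized type.

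Second, the gap. You correctly note that \cref{lem: firing hazard bounded below} bounds the replacement hazard only for the \emph{opportunistic} type, and you propose to run Borel--Cantelli conditional on that type. Done carefully (your $D_k'$-style decreasing events work), this gives $\Prob(A\cap\{\text{opportunistic}\})=0$. But the lemma requires $\Prob(A)=0$, and you never explain the passage. Your appeal to ``$\Prob(\text{opportunistic})>0$'' is a statement about the \emph{prior} and does not rule out $\Prob(A\cap\{\text{good}\})>0$. The missing observation is that $A$ is public-measurable and, on $A$, $\beliefrv_\infty=\Prob(\text{good}\mid\mathcal F_\infty)<1$; hence $\Prob(\text{opportunistic}\mid A)>0$ whenever $\Prob(A)>0$, so $\Prob(A\cap\{\text{opportunistic}\})=0$ forces $\Prob(A)=0$. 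With that sentence added, your route is valid.

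The paper instead avoids conditioning on type altogether. It \emph{transfers} the firing-hazard bound to the good type: because $\actprv_t\to1$, the opportunistic type's behavior---and hence the distribution of signals over the next $T$ periods---eventually approximates the good type's, so the good type also has replacement probability bounded below (the paper writes $\pbound/2$). A single \emph{unconditional} Borel--Cantelli application then gives $\replacerv_\infty=1$ almost surely on $\{\beliefrv_\infty>0,\ \actprv_t\to1\}$, which immediately yields $\Prob(A)=0$. Either route works; the paper's is shorter because it never needs the posterior/absolute-continuity step, while yours keeps \cref{lem: firing hazard bounded below} on its native (opportunistic-type) footing at the price of that extra measure-theoretic observation.
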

\begin{proof}
Consider the event in which $\beliefrv_\infty>0$ and $\actprv_\infty:=\lim_{t\to\infty}\actprv_t=1$. In this event, we in particular have $\actprv_t>0$ for sufficiently large $t\in\Zpos$. \cref{lem: firing hazard bounded below} delivers some $T\in\N$ and $\pbound\in(0,1)$ such that the probability of replacement in the next $T$ periods, conditional on the opportunistic-type officeholder, is eventually at least $\pbound$ in this event. But then, because $\actprv_\infty=1$, the good type also eventually has at least a $\tfrac{\pbound}{2}$ probability of replacement in the next $T$ periods. Hence, by the Borel-Cantelli Lemma, with probability $1$ conditional on this event, we have $\replacerv_\infty=1$.

We have argued that, excluding a null event, if $\beliefrv_\infty>0$ and $\actprv_\infty=1$, then $\replacerv_\infty=1$. Combining this observation with \cref{lem: martingale convergence} delivers the lemma.
\end{proof}

The next lemma says that, when the replacement cost is small, an opportunistic officeholder is eventually replaced.

\begin{Lemma}\label{lem: opportunistic types are replaced}
If $\cost<\prior$, then with probability $1$ conditional on being an opportunistic type, an officeholder is replaced.
\end{Lemma}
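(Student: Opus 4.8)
The goal is to show that conditional on being opportunistic, the officeholder is replaced with probability one. By \cref{lem: learning or firing}, conditional on never being replaced ($\replacerv_\infty = 0$), we must almost surely have $\beliefrv_\infty \in \{0,1\}$. Since we are conditioning on the opportunistic type, Bayesian updating forces $\beliefrv_\infty = 0$ on this event: an opportunistic officeholder's likelihood ratio of being the good type is a nonnegative supermartingale under the opportunist's own law (each good signal is, in expectation under the opportunist, down-weighted), so it cannot converge to $1$ with positive probability conditional on the opportunistic type.\footnote{Equivalently: the event $\{\beliefrv_\infty = 1\}$ has probability zero conditional on the opportunistic type, since on that event the posterior odds on the good type diverge, which a martingale-type argument under the opportunist's measure rules out.} So it suffices to rule out the event $E$ in which the opportunistic officeholder is never replaced and $\beliefrv_t \to 0$.

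On the event $E$, I would split into two cases according to the asymptotic behavior of $\actprv_t$ (the opportunist's expected effort). \textbf{Case 1:} along some subsequence $\actprv_{t_k} > 0$ infinitely often. Then \cref{lem: firing hazard bounded below} applies: there are fixed $T \in \N$ and $\pbound \in (0,1)$ such that whenever $\pstrat(\histrv_t) > 0$, the opportunist is replaced within the next $T$ periods with probability at least $\pbound$. Since this happens infinitely often on $E$ (or rather, whenever the current history has $\pstrat > 0$), a Borel--Cantelli / conditional Borel--Cantelli argument (using that these events recur and each has conditional probability bounded below) forces replacement almost surely on $E$, contradicting $\replacerv_\infty = 0$ there. \textbf{Case 2:} eventually $\actprv_t = 0$, i.e., the opportunist is certain to shirk from some time on. Here I use voter incentives (equilibrium condition 2). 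Since $\beliefrv_t \to 0$ and $\pstrat(\histrv_t) = 0$ eventually, the expected effort from the current incumbent $\eeff(\histrv_t) = \beliefrv_t \to 0$. The voter's outside option $\eeff(\emptyset) - \cost$ is at least $\prior - \cost > 0$ by assumption ($\cost < \prior$). Hence for $t$ large, $\eeff(\histrv_t) < \prior - \cost \le \eeff(\emptyset) - \cost$, so the voter strictly prefers to replace: $\vstrat(\histrv_t) = 1$, contradicting $\replacerv_\infty = 0$ on $E$.

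Putting the cases together: on $E$, either Case 1 or Case 2 must occur (they are exhaustive — Case 2 is the complement of Case 1 up to a null set), and each yields a contradiction with $\replacerv_\infty = 0$. Therefore $\Prob(E) = 0$, and combined with the martingale/Bayesian observation that $\beliefrv_\infty = 1$ has conditional probability zero, \cref{lem: learning or firing} gives that conditional on the opportunistic type, $\replacetimerv < \infty$ almost surely.

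\textbf{Main obstacle.} The delicate point is making the Case-1 Borel--Cantelli argument rigorous: \cref{lem: firing hazard bounded below} gives a lower bound on replacement probability only at histories where $\pstrat(h) > 0$, and one needs to argue that on the event $E \cap \{\actprv_t > 0 \text{ i.o.}\}$ these favorable histories recur often enough (and with enough independence across disjoint blocks of length $T$) to trigger a conditional second Borel--Cantelli lemma (via Lévy's extension, applied to the filtration generated by $(\histrv_t)$). One must be careful that "$\actprv_t>0$ infinitely often" does indeed imply $\pstrat(\histrv_t)>0$ at infinitely many times $t$ with $t < \replacetimerv$, which holds because $\actprv_t = (1-\replacerv_t)\pstrat(\histrv_{t\wedge\replacetimerv})$ and we are on $\{\replacerv_\infty = 0\}$. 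The Case-2 argument is comparatively routine once one notes $\eeff(h) = \beliefrv_t$ when $\pstrat(\histrv_t) = 0$.
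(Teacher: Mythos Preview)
Your proposal is correct and follows essentially the same architecture as the paper's proof: invoke \cref{lem: learning or firing} to reduce to $\beliefrv_\infty\in\{0,1\}$ on the non-replacement event, dispose of $\beliefrv_\infty=1$ conditional on the opportunistic type by a standard likelihood-ratio (super)martingale observation, and then rule out $\{\replacerv_\infty=0,\ \beliefrv_\infty=0\}$ by a two-case split on asymptotic opportunist effort, one case handled via voter incentives and $\cost<\prior$, the other via \cref{lem: firing hazard bounded below} plus (conditional) Borel--Cantelli.

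The only noteworthy difference is the case split itself. The paper partitions according to whether $\actprv_t=0$ for \emph{some} $t\geq\tau$ (where $\tau$ is a time after which $\beliefrv_t$ is small) versus $\actprv_t>0$ for \emph{all} $t\geq\tau$. This is slightly cleaner than your ``$\actprv_t>0$ infinitely often'' versus ``$\actprv_t=0$ eventually'': in the paper's first case, a \emph{single} time with $\actprv_t=0$ and small $\beliefrv_t$ already forces $\vstrat(\histrv_t)=1$ by voter incentives, so replacement is immediate and no asymptotic argument is needed; and in the paper's second case, \cref{lem: firing hazard bounded below} applies at \emph{every} $t\geq\tau$, so the Borel--Cantelli step is over consecutive blocks rather than along a subsequence. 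Your version still works (L\'evy's conditional Borel--Cantelli handles the subsequence case exactly as you note), but the paper's split avoids the delicacy you flag as your main obstacle.
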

\begin{proof}
Suppose $\cost<\prior$. Let us first show that (excepting a zero probability event) $\replacerv_\infty=1$ if $\beliefrv_\infty=0$. Indeed, if $\beliefrv_\infty=0$, then some $\tau\in\Zpos$ exists such that $\beliefrv_t<\min\{\cost,\tfrac12\}$ for every $t\geq\tau$. Let us now show $\replacerv_\infty=1$ in two different cases. First, if $\actprv_t=0$ for some $t\geq\tau$, then voter incentives require $\replacerv_t=1$ so that $\replacerv_\infty=1$ too. Second, if $\actprv_t>0$ for every $t\geq\tau$, then \cref{lem: firing hazard bounded below} delivers some $T\in\N$ and $\pbound\in(0,1)$ such that the probability of replacement in the next $T$ periods is at least $\tfrac{\pbound}{2}$ from time $\tau$ onward. Then, by the Borel-Cantelli Lemma, we have $\replacerv_\infty=1$.

\cref{lem: learning or firing} tells us $\replacerv_\infty=1$ or $\beliefrv_\infty\in\{0,1\}$ a.s., and the argument above rules out the case in which $\replacerv_\infty=\beliefrv_\infty=0$ with positive probability. Therefore, we a.s.~have $\replacerv_\infty=1$ or $\beliefrv_\infty=1$. The lemma follows because the probability of $\beliefrv_\infty=1$ conditional on the officeholder being an opportunistic type is necessarily zero.
\end{proof}

The next lemma shows that if an officeholder attains a high enough reputation on-path that the voter would currently retain him irrespective of opportunistic-type behavior, then he has a positive probability (unconditional on his type) of never being replaced.

\begin{Lemma}\label{lem: high reputation gives chance of tenure}
Fix an equilibrium, and let $\ooption$ denote the voters' outside option.
For any $t\in\Zpos$ and history $h\in\Hist$ is such that $\Prob\curlyb{\histrv_t=h \text{ and }\replacerv_t=0}>0$ and $\belief(h)>\ooption-\cost$, we have $\Prob\curlyb{\histrv_t=h,\ \beliefrv_\infty\geq\belief(h), \text{ and }\replacerv_\infty=0}>0$.
\end{Lemma}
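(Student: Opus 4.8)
The key structural facts are that the first officeholder's reputation $\beliefrv$ is a bounded martingale (as used in \cref{lem: martingale convergence}) and that a high reputation shields the incumbent from replacement. Write $q:=\belief(h)$ and $A:=\{\histrv_t=h\ \text{and}\ \replacerv_t=0\}$, and note $\Prob(A)>0$, $q>\ooption-\cost$. Since on $A$ we have $t\leq\replacetimerv$, it holds that $\beliefrv_t=\belief(\histrv_{t\wedge\replacetimerv})=\belief(h)=q$ there, and $A$ belongs to the time-$t$ information. The first observation I would record is that the event $\{\beliefrv_\infty\geq q\}$ already forces $\replacerv_\infty=0$: if instead $\replacetimerv<\infty$, then almost surely on that event the period-$\replacetimerv$ voter, facing $\histrv_{\replacetimerv}$, replaces with positive probability, so voter optimality (condition~2 of the equilibrium definition) gives $\eeff(\histrv_{\replacetimerv})\leq\eeff(\emptyset)-\cost=\ooption-\cost$; combined with the elementary inequality $\belief(\histrv_{\replacetimerv})\leq\eeff(\histrv_{\replacetimerv})$ and $\beliefrv_\infty=\belief(\histrv_{\replacetimerv})$, this yields $\beliefrv_\infty\leq\ooption-\cost<q$, a contradiction.

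\textbf{The martingale step.} It therefore suffices to show $\Prob\{A,\ \beliefrv_\infty\geq q\}>0$. Here I would use that a bounded martingale is uniformly integrable, hence closed, so $\beliefrv_t=\E[\beliefrv_\infty\mid\mathcal F_t]$; since $A$ lies in $\mathcal F_t$ and $\beliefrv_t=q$ on $A$, this gives $\E[\beliefrv_\infty\mathbbm{1}_A]=\E[\beliefrv_t\mathbbm{1}_A]=q\,\Prob(A)$. If the desired event were null --- that is, $\beliefrv_\infty<q$ almost surely on $A$ --- then $(q-\beliefrv_\infty)\mathbbm{1}_A$ is nonnegative and strictly positive on the positive-probability set $A$, forcing $\E[\beliefrv_\infty\mathbbm{1}_A]<q\,\Prob(A)$, a contradiction. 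Combining this with the first observation, $\Prob\{A,\ \beliefrv_\infty\geq q,\ \replacerv_\infty=0\}=\Prob\{A,\ \beliefrv_\infty\geq q\}>0$, and since $A\subseteq\{\histrv_t=h\}$ and $q=\belief(h)$, this event is contained in $\{\histrv_t=h,\ \beliefrv_\infty\geq\belief(h),\ \replacerv_\infty=0\}$, giving the claim.

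\textbf{Main obstacle.} I do not expect a serious difficulty: the argument is a short two-step martingale computation. The points needing care are (i) invoking the martingale property of $\beliefrv$ and its $L^1$-convergence correctly, including verifying that $A$ is measurable with respect to the relevant (first-officeholder) filtration and that $\beliefrv_t$ genuinely equals $\belief(h)$ on $A$; and (ii) the observation that $\beliefrv_\infty\geq q$ rules out replacement, which is where voter optimality and the strict inequality $\belief(h)>\ooption-\cost$ enter. Everything else is routine.
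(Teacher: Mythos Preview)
Your proof is correct and follows essentially the same approach as the paper: both combine the bounded-martingale property of $\beliefrv$ with voter optimality to conclude that high reputation persists with positive probability and precludes replacement. The only cosmetic difference is that the paper phrases the martingale step via Doob's optional stopping theorem applied to the hitting time $\boldsymbol{\tau}:=\inf\{\tau\geq t:\beliefrv_\tau\leq\ooption-\cost\}$, whereas you work directly with $L^1$-closure and the limit $\beliefrv_\infty$; the two are equivalent here.
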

\begin{proof}
Define the stopping time $\boldsymbol{\tau}:=\inf\curlyb{\tau\in\Zpos:\ \tau\geq t \text{ and }\beliefrv_\tau\leq\ooption-\cost}$. Note that the random variable $\beliefrv_{\boldsymbol{\tau}}$ is well defined a.s.~(even if $\boldsymbol{\tau}=\infty$) by the martingale convergence theorem. Doob's optional stopping theorem implies $\E[\beliefrv_{\boldsymbol{\tau}}\ |\ \histrv_t=h]=\belief(h)$. Therefore, $\Prob\curlyb{\beliefrv_{\boldsymbol{\tau}}\geq \belief(h) \ |\ \histrv_t=h}>0.$
But if $\histrv_t=h$ and $\beliefrv_{\boldsymbol{\tau}}\geq \belief(h)>\ooption-\cost$, then the voters' beliefs remain in $(\ooption-\cost,1]$ from time $t$ onward---and hence, by voter incentives, they do not replace the officeholder from time $t$ onward. 
So $$\Prob\curlyb{\replacerv_\infty=0 \text{ and } \beliefrv_\infty\geq\belief(h)\ |\ \histrv_t=h \text{ and } \replacerv_t=0}\geq
\Prob\curlyb{\beliefrv_{\boldsymbol{\tau}}\geq \belief(h)\ |\ \histrv_t=h \text{ and } \replacerv_t=0}
>0.$$ 
Because $\Prob\curlyb{\histrv_t=h \text{ and } \replacerv_t=0}>0$ by hypothesis, the result follows.
\end{proof}

The following lemma says that, if a career history is consistent with an officeholder not having been replaced yet, then some good news (that is, some signal indicative of effort) exists that does not lead to immediate firing.

\begin{Lemma}\label{lem: good news doesn't uniformly cause firing}
In any equilibrium, any on-path history $h$ admits some $s\in S$ with $\mon_1(s)>\mon_0(s)$ and $(h,s)$ on path.
\end{Lemma}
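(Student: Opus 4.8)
The plan is to fix an equilibrium and an on-path history $h$, and to argue by cases on the opportunistic type's effort probability $\pstrat(h)$ that some effort-indicative signal $s$ (one with $\mon_1(s)>\mon_0(s)$) leaves $(h,s)$ on path. Since the monitoring structure is informative, $\mon_1\neq\mon_0$, so the set $S^+:=\{s\in S:\mon_1(s)>\mon_0(s)\}$ is nonempty; moreover every $s\in S^+$ arises with positive probability conditional on reaching $h$ (because $\mon_1(s)>0$ for all $s$, so $\mon_{\eeff(h)}(s)>0$). Hence $(h,s)$ fails to be on path, for $s\in S^+$, exactly when $\vstrat(h,s)=1$, i.e.\ the voter replaces with certainty after $s$. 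So it suffices to show that the voter does not replace with certainty after \emph{every} signal in $S^+$.

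\textbf{Case 1: $\pstrat(h)=0$.} Then $\eeff(h)=\belief(h)$, and Bayesian updating (equilibrium condition \ref{eqm3}, applicable since $h$ is on path so $\vstrat(h)<1$ or $h=\emptyset$) gives, for each $s\in S^+$, a posterior $\belief(h,s)=\belief(h)\mon_1(s)/\mon_{\belief(h)}(s)>\belief(h)$, since $\mon_1(s)>\mon_{\belief(h)}(s)$ when $s\in S^+$ and $\belief(h)<1$. (If $\belief(h)=1$ the incumbent is surely a good type, contradicting $\pstrat(h)=0$ being a choice that matters; but even then one can take any $s\in S^+$ and note the voter strictly prefers to retain.) Now I would pick any $s^*\in S^+$; the voter's stage payoff from retaining at $(h,s^*)$ is $\eeff(h,s^*)\geq\belief(h,s^*)>\belief(h)=\eeff(h)$. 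I need this to strictly exceed the value of replacing, $\eeff(\emptyset)-\cost$. This requires an auxiliary observation: because $h$ is on path and the first officeholder was retained up to $h$, voter incentives along the path already imply $\eeff(h)\geq\eeff(\emptyset)-\cost$ — indeed either $h=\emptyset$, where $\eeff(\emptyset)\geq\eeff(\emptyset)-\cost$ trivially, or the voter at the predecessor history weakly preferred (in fact the retention at $h$'s parent was optimal) to retain, which after iterating gives $\eeff(h)\geq\eeff(\emptyset)-\cost$. Combining, $\eeff(h,s^*)>\eeff(h)\geq\eeff(\emptyset)-\cost$, so retaining is strictly optimal at $(h,s^*)$, forcing $\vstrat(h,s^*)=0$ and hence $(h,s^*)$ on path.

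\textbf{Case 2: $\pstrat(h)>0$.} Here I use the incumbent's incentive compatibility at $h$ (equilibrium condition 1): since the opportunistic type is willing to put positive weight on working, working must be (weakly) optimal, i.e.
$$(1-\delta)(1-\ecost)+\delta\sum_{s\in S}\mon_1(s)[1-\vstrat(h,s)]\cval(h,s)\;\geq\;(1-\delta)+\delta\sum_{s\in S}\mon_0(s)[1-\vstrat(h,s)]\cval(h,s).$$
Rearranging, $\delta\sum_{s\in S}[\mon_1(s)-\mon_0(s)][1-\vstrat(h,s)]\cval(h,s)\geq(1-\delta)\ecost>0$. If $\vstrat(h,s)=1$ for every $s\in S^+$, then the only surviving terms on the left have $\mon_1(s)-\mon_0(s)\leq0$ and $[1-\vstrat(h,s)]\cval(h,s)\geq0$, so the left side is $\leq0$, a contradiction. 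Hence some $s\in S^+$ has $\vstrat(h,s)<1$, which means $(h,s)$ is on path (it is reached with positive probability after $h$, and the voter does not replace with certainty). This is the crux of the argument.

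The main obstacle I anticipate is tightening the chain of inequalities in Case 1 to get a \emph{strict} preference for retention — in particular establishing the ``book-keeping'' inequality $\eeff(h)\geq\eeff(\emptyset)-\cost$ for every on-path $h$, which needs an inductive argument along the realized path using voter optimality at each predecessor history (and the fact that a history's being on path means all predecessor retentions had positive probability, hence were optimal). Case 2 is essentially immediate from the incumbent's first-order condition and the full-support assumption $\mon_\heff(s)>0$. I would present Case 2 first as the clean case and then handle Case 1 with the auxiliary monotonicity lemma on $\eeff$ along the path.
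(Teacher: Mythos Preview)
Your proposal is correct and follows the same two-case split as the paper. One simplification for Case~1: the inequality $\eeff(h)\geq\eeff(\emptyset)-\cost$ requires no iteration along predecessors. Since $h\neq\emptyset$ is on path, the voter \emph{at $h$ itself} retained with positive probability (i.e., $\vstrat(h)<1$), so voter optimality at $h$ directly yields $\eeff(h)\geq\eeff(\emptyset)-\cost$; combined with $\pstrat(h)=0$ this gives $\belief(h)\geq\eeff(\emptyset)-\cost$, which is exactly what the paper uses. Your Case~2 matches the paper's argument verbatim.
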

\begin{proof}
Let $\Hist^*$ denote the set of on-path histories: $h\in\Hist$ if and only if some $t\in\Zpos$ has $\Prob\curlyb{\histrv_t=h \text{ and }\replacerv_t=0}>0$.
First, consider the case in which $\pstrat(h)=0$. That $h\in\Hist^*$ while $\pstrat(h)=0$ implies $\belief(h)\geq\ooption-\cost$, where $\ooption$ is the voters' outside option. Indeed, this inequality holds by voter incentives at $h$ if $h\neq\emptyset$, and it holds by definition (since $\cost\geq0$) for $h=\emptyset$. In this case, we can use any $s\in S$ with $\mon_1(s)>\mon_0(s)$---which exists because $\mon$ is informative.
The Bayesian property implies (since $\belief(\cdot)>0$ at on-path histories because $\mon_1$ has full support) that $\belief(h,s)>\belief(h)$. Hence $\belief(h,s)+\brac{1-\belief(h,s)}\pstrat(h,s)>\ooption-\cost$, and so voters have a strict incentive to keep the incumbent at history $(h,s)$. Thus, $(h,s)\in\Hist^*$.

Now, turn to the case in which $\pstrat(h)>0$. The fact that the incumbent optimally chooses to work at $h$ implies \begin{eqnarray*}
0 
&\leq& \curlyb{(1-\delta)(1-\ecost ) + \delta\sum_{s\in S}\mon_1(s)\brac{1-\vstrat(h,s)}\cval(h,s)}\\
&&- 
\curlyb{(1-\delta)1 + \delta\sum_{s\in S}\mon_0(s)\brac{1-\vstrat(h,s)}\cval(h,s)} \\
&=& -(1-\delta)\ecost+ \delta\sum_{s\in S}\brac{\mon_1(s)-\mon_0(s)}\brac{1-\vstrat(h,s)}\cval(h,s)\\
&<& \sum_{s\in S:\ \mon_1(s)>\mon_0(s)}\delta\brac{\mon_1(s)-\mon_0(s)}\brac{1-\vstrat(h,s)}\cval(h,s).
\end{eqnarray*}
Therefore, at least one summand is strictly positive. In particular, some $s\in S$ has $\mon_1(s)>\mon_0(s)$ and $\vstrat(h,s)<1$. Because $h\in\Hist^*$, the latter property tells us $(h,s)\in\Hist^*$ too.
\end{proof}

The next lemma says that when \condref{FEI} fails, on-path equilibrium reputation can become arbitrarily optimistic.

\begin{Lemma}\label{lem: beliefs reach arbitrarily high on path}
Suppose \condref{FEI} fails, and fix any equilibrium. Any $\hat\belief\in(0,1)$ has $\Prob\curlyb{\beliefrv_t>\hat\belief \text{ and } \replacerv_t=0}>0$ for some $t\in\Zpos$.
\end{Lemma}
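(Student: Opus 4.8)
The plan is to let $\bar\belief := \sup\curlyb{\belief(h) : h \text{ on path}} \in [\prior,1]$ and prove $\bar\belief = 1$. The lemma then follows immediately: given $\hat\belief \in (0,1)$, pick an on-path history $h$ with $\belief(h) > \hat\belief$; for $t := |h|$ the event $\curlyb{\histrv_t = h \text{ and } \replacerv_t = 0}$ has positive probability and is contained in $\curlyb{\beliefrv_t > \hat\belief \text{ and } \replacerv_t = 0}$. I would first record three elementary facts: every $h$ has $\cval(h) \ge (1-\delta)(1-\ecost) > 0$ (from the $\cval$-recursion and $\ecost < 1$); every on-path $h \neq \emptyset$ has $\vstrat(h) < 1$, so Bayes' rule pins down $\belief(h,s)$ at all of its successors; and on-path beliefs are strictly positive (induction, using full support of $\mon_1$). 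In particular $\bar\belief \ge \prior > 0$.

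Assume for contradiction $\bar\belief < 1$. The engine is a one-step estimate. Set $d := \min\curlyb{\mon_1(s) - \mon_0(s) : \mon_1(s) > \mon_0(s)} > 0$, $m := \min_{s\in S}\mon_1(s) > 0$ (positive by full support), $r := 1 + \tfrac1{d\bar\belief m} > 1$, and write $g(h) := \bar\belief - \belief(h)$. I claim: for every on-path $h$ and every $s \in S$, $g(h,s) \le r\,g(h)$; and moreover $\pstrat(h) > 0$ whenever in addition $g(h) < (1-\bar\belief)\,d\bar\belief$. For the first part, apply \cref{lem: good news doesn't uniformly cause firing} at $h$ to obtain $s^\ast$ with $\mon_1(s^\ast) > \mon_0(s^\ast)$ and $(h,s^\ast)$ on path, so $\belief(h,s^\ast) = \belief(h)\mon_1(s^\ast)/\mon_{\eeff(h)}(s^\ast) \le \bar\belief$; rearranging this and using $\mon_1(s^\ast) \le 1$ gives $1 - \eeff(h) \le g(h)/(d\bar\belief)$. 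Then $\mon_{\eeff(h)}(s) \le \mon_1(s) + (1-\eeff(h))$ for every $s$ yields $\belief(h,s) \ge \belief(h)\bigl(1 - \tfrac{1-\eeff(h)}{m}\bigr)$, hence $g(h,s) \le g(h) + \tfrac{1-\eeff(h)}{m} \le r\,g(h)$. For the second part, $\pstrat(h) = \tfrac{\eeff(h) - \belief(h)}{1-\belief(h)}$ with $\eeff(h) - \belief(h) \ge (1-\bar\belief) - g(h)/(d\bar\belief) > 0$ under the stated hypothesis.

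Now let $T$ be the horizon from \cref{lem: failure of FB is uniform} and choose (by definition of $\bar\belief$) an on-path $h_0$ with $g(h_0) \le \tfrac12(1-\bar\belief)\,d\bar\belief \cdot r^{-T}$. Since $\cval(h_0) > 0$, \cref{lem: continuation value can't keep growing} furnishes signals $s_0,\dots,s_{T-1}$, histories $h_{t+1} := (h_t,s_t)$ with $s_t \in \arg\max_{s}\curlyb{[1-\vstrat(h_t,s)]\cval(h_t,s)}$, and some $t^\ast \in \curlyb{0,\dots,T-1}$ with $\pstrat(h_{t^\ast}) = 0$. I would verify by induction on $t$ that each $h_t$ $(t \le T)$ is on path and $g(h_t) \le \tfrac12(1-\bar\belief)\,d\bar\belief \cdot r^{t-T}$: for the on-path step, given $h_t$ on path, \cref{lem: good news doesn't uniformly cause firing} plus $\cval(\cdot) \ge (1-\delta)(1-\ecost) > 0$ makes the maximand defining $s_t$ strictly positive, so $\vstrat(h_t,s_t) < 1$, so $(h_t,s_t)$ is on path; the gap step is the estimate's first clause. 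Since $g(h_t) < (1-\bar\belief)\,d\bar\belief$ for all $t \le T$, the estimate's second clause gives $\pstrat(h_t) > 0$ for every $t \le T-1$, contradicting $\pstrat(h_{t^\ast}) = 0$. Hence $\bar\belief = 1$.

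The step I expect to demand the most care is precisely this closing bookkeeping. The trajectory supplied by \cref{lem: continuation value can't keep growing} chases continuation values, not beliefs, so I must show separately that it neither exits the set of on-path histories nor drifts out of the near-supremum neighbourhood before the forced-shirk period $t^\ast$; that is why the initial gap must be shrunk by the geometric factor $r^{T}$ and the neighbourhood kept inside the region where the effort lower bound holds. The two underlying intuitions — ``near an interior supremum, \cref{lem: good news doesn't uniformly cause firing} together with Bayes forces expected effort near $1$'' and ``near-full effort plus full-support monitoring forces negligible belief movement'' — are each short; the substance is calibrating $d$, $m$, $r$, and the initial gap so the induction survives all $t \le T$.
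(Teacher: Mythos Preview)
Your proof is correct and shares the paper's overall architecture: assume $\bar\belief<1$, use \cref{lem: good news doesn't uniformly cause firing} together with the constraint $\belief(h,s^\ast)\le\bar\belief$ to force expected effort near $1$ at on-path histories close to $\bar\belief$, deduce that beliefs cannot drift far along the trajectory of \cref{lem: continuation value can't keep growing}, and arrive at a contradiction. The implementation differs in two respects. First, the paper quantifies ``near $\bar\belief$'' by constructing a nested sequence of thresholds $\underline\belief_0<\cdots<\underline\belief_T<\bar\belief$ via a lower-semicontinuity argument for the auxiliary map $\belief\mapsto\underline a(\belief)$; you instead derive the explicit inequality $1-\eeff(h)\le g(h)/(d\bar\belief)$ and propagate the gap $g(h)=\bar\belief-\belief(h)$ geometrically with ratio $r$. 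Second, the paper's contradiction runs through one more invocation of \cref{lem: good news doesn't uniformly cause firing}: at the first $h_\tau$ with $\pstrat(h_\tau)=0$, a good-news successor has belief exceeding $\bar\belief$. Your contradiction is more direct---$\pstrat(h_t)>0$ for all $t\le T-1$ already clashes with \cref{lem: continuation value can't keep growing}. Your route is somewhat more elementary (no continuity or compactness, only algebra with explicit constants) and yields a slightly shorter endgame; the paper's threshold construction is perhaps easier to generalize, e.g.\ to richer signal spaces. Both arguments use exactly the same pair of auxiliary lemmas.
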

\begin{proof}
We begin with some useful definitions and notation. 
Let $\Hist^*$ be the set on-path histories (as also defined at the start of the proof of \cref{lem: good news doesn't uniformly cause firing}), let $\Pi^*:=\{\belief(h):\ h\in\Hist^*\}$, and let $\bar\belief:=\sup(\Pi^*)\in[\prior,1]$. 
Define $S_1:=\{s\in S:\ \mon_1(s)>\mon_0(s)\}$, and note that the function $(a,\belief)\mapsto\min_{s\in S_1}\bupdate_a(\belief|s)$ is continuous in on $[0,1]\times(0,1)$ because each of the finitely many maximands is. Hence, because $[0,1]$ is compact and the function above takes $(1,\belief)\mapsto\belief$ for $\belief\in(0,1)$, the function $\underline{a}:(0,1)\to[0,1]$ given by $$\underline{a}(\belief):=\min\curlyb{a\in[0,1]:\ \min_{s\in S_1}\bupdate_a(\belief|s)\leq\bar\belief}$$
is the lower envelope of a compact-valued and upper hemicontinuous correspondence---hence is well-defined and lower semicontinuous.

Our goal is to prove $\bar\belief=1$; assume otherwise ($0<\bar\belief<1$) for a contradiction. Given this hypothesis, observe that any $a\in[0,1)$ and $s\in S_1$ have $\bupdate_a(\bar\belief|s)>\bar\belief$; thus, $\underline{a}(\bar\belief)=1$. This fact has two consequences. First, lower semicontinuity of $\underline{a}$ tells us $\lim_{\belief\to\bar\belief}\underline{a}(\belief)=1$ too. Second, consequently, some $\underline\belief_0\in(0,\bar\belief)$ has $\underline{a}(\underline\belief_0)>0$ and so $\min_{s\in S_1}\bupdate_0(\underline\belief_0|s)>\bar\belief$.

Now, let $T$ be as given by \cref{lem: failure of FB is uniform}.  
Recursively for $t\in\{1,\ldots,T\}$, let us find some $\underline\belief_{t}\in(\underline\belief_{t-1},\bar\belief)$ with the property that: Every $a\in[0,1]$ with $\inf_{\belief\in(\underline\belief_{t},\bar\belief),\ s\in S_1}\bupdate_a(\belief|s)\leq\bar\belief$ has $\inf_{\belief\in(\underline\belief_{t},\bar\belief),\ s\in S}\bupdate_a(\belief|s)\geq\underline\belief_{t-1}$. 
To do so, observe that $\lim_{\belief\to\bar\belief}\underline{a}(\belief)=1$, and so (by continuity) every $s\in S$ has 
$\bupdate_{\underline a(\belief)}(\belief|s)\to
\bupdate_{1}(\bar\belief|s)=\bar\belief$ as $\belief\to\bar\belief$. Hence, we can set $\underline\belief_{t}\in(\underline\belief_{t-1},\bar\belief)$ close enough to $\bar\belief$ to ensure $\bupdate_{\underline a(\underline\belief_{t})}(\underline\belief_{t}|s)>\underline\belief_{t-1}$ for every $s\in S$. Let us see that such a $\underline\belief_{t}$ has the desired property.
Indeed, any $a\in[0,1]$ with $\inf_{\belief\in(\underline\belief_{t},\bar\belief),\ s\in S_1}\bupdate_a(\belief|s)\leq\bar\belief$ will satisfy $\min_{s\in S_1}\bupdate_a(\underline\belief_{t}|s)\leq\bar\belief$ (by continuity and monotonicity of $\bupdate_a(\cdot|s)$), so that $a\geq\underline a(\underline\belief_{t})$. Hence, any $s\in S$ satisfies 
$\bupdate_a(\underline\belief_{t}|s) \in \mathrm{co}\curlyb{\bupdate_{\underline a(\underline\belief_{t})}(\underline\belief_{t}|s), \ \bupdate_1(\underline\belief_{t}|s)}=\brac{\bupdate_{\underline a(\underline\belief_{t})}(\underline\belief_{t}|s), \ \underline\belief_{t}}.$
In particular, $\bupdate_a(\underline\belief_{t}|s)>\underline\belief_{t-1}$. Then, since $s$ was arbitrary in this calculation and $\bupdate_a(\cdot|s)$ is increasing for every $a\in[0,1]$ and $s\in S$, it follows that $\inf_{\belief\in(\underline\belief_{t},\bar\belief),\ s\in S}\bupdate_a(\belief|s)>\underline\belief_{t-1}$. 

Let us turn now to our equilibrium. Let $h_0\in\Hist^*$ be some on-path history such that $\underline\belief_T<\belief(h_0)<\bar\belief$. Define $s_0,\ldots,s_{T-1}\in S$ and $h_1,\ldots,h_{T}\in \Hist$ as defined in the statement of \cref{lem: continuation value can't keep growing}. That lemma tells us some $\tau\in\{0,\ldots,T-1\}$ has $\pstrat(h_\tau)=0$; consider the smallest such $\tau$. For every $t\in\{0,\ldots,\tau-1\}$, the incumbent finds it optimal to exert effort at history $h_t$, which implies $\brac{1-\vstrat(h_t,s)}\cval(h_t,s)>0$ for some $s\in S$, hence for $s=s_t$. In particular, $\vstrat(h_{t+1})<1$. By induction on $t$, it then follows that $h_1,\ldots,h_{\tau}\in \Hist^*$.  

Next, we show that $\underline\belief_{T-t}<\belief(h_t)<\bar\belief$.
for every $t\in\{0,\ldots,\tau\}$. We will prove this by induction; the base case of $t=0$ holds by fiat. Toward the inductive step, suppose $t\in\{0,\ldots,\tau-1\}$ has $\underline\belief_{T-t}<\belief(h_t)<\bar\belief$. Because $h_t\in\Hist^*$, \cref{lem: good news doesn't uniformly cause firing} tells us some $s'_t\in S_1$ has $(h_t,s_t')\in\Hist^*$ too. But then, by definition of $\bar\belief$, it must be that $\belief(h_t,s_t')\leq\bar\belief$. It then follows from the defining property of $\underline\belief_{T-t}$ that $$\belief(h_{t+1})=\belief(h_t,s_t)>\underline\belief_{T-t-1}=\underline\belief_{T-(t+1)},$$
completing the inductive step.

Finally, we now know that $h_\tau\in\Hist^*$, that $\belief(h_\tau)\in(\underline\belief_{T-\tau},\bar\belief)\subseteq(\underline\belief_0,\bar\belief)$ and that $\pstrat(h_\tau)=0$. \cref{lem: good news doesn't uniformly cause firing} says some $s\in S_1$ has $(h_\tau,s)\in\Hist^*$, and the definition of $\underline\belief_0$ means $\belief(h_\tau,s)>\bar\belief$, which contradicts the definition of $\bar\belief$.
\end{proof}

Now we prove that when \condref{FEI} fails, any good-type officeholder has a positive probability of never being replaced.

\begin{Lemma}\label{lem: good types can survive}
Suppose \condref{FEI} fails. In any equilibrium, the probability of any given good-type officeholder never being replaced is strictly positive.
\end{Lemma}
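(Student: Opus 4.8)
The plan is to combine three earlier results: \cref{lem: beliefs reach arbitrarily high on path} (on-path reputations can be pushed arbitrarily close to $1$), \cref{lem: high reputation gives chance of tenure} (once an on-path reputation strictly exceeds the retention threshold $\ooption-\cost$, there is a positive probability of perpetual retention with limiting reputation at least that level), and \cref{lem: learning or firing} (almost surely, any officeholder is eventually replaced or has $\beliefrv_\infty\in\{0,1\}$). The only preliminary needed to glue these together is that $\ooption-\cost<1$.

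I would first establish $\ooption-\cost<1$. When $\cost>0$ this is immediate since $\ooption=\eeff(\emptyset)\leq 1$. When $\cost=0$, suppose for contradiction that $\ooption=1$. Then $\eeff(\emptyset)=\prior+(1-\prior)\pstrat(\emptyset)=1$ with $\prior<1$ forces $\pstrat(\emptyset)=1$, and at every on-path history $h\neq\emptyset$ with $\vstrat(h)<1$ the voter's optimality condition $\vstrat(h)\in\arg\max_x\curlyb{(1-x)\eeff(h)+x\ooption}$ together with $\ooption=1\geq\eeff(h)$ forces $\eeff(h)=1$, hence $\pstrat(h)=1$ whenever $\belief(h)<1$. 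So every incumbent who ever takes an action does so with effort probability $1$; that is, the equilibrium strategy profile attains full effort, and (being an equilibrium) satisfies politician incentive compatibility. \cref{lemma: FEI captures effort incentives} then delivers \condref{FEI}, contradicting the hypothesis.

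With $\ooption-\cost<1$ in hand, fix any $\hat\belief\in\paren{\max\{\ooption-\cost,0\},\,1}$; this interval is nonempty, and any such $\hat\belief$ satisfies $0<\hat\belief<1$ and $\hat\belief>\ooption-\cost$. By \cref{lem: beliefs reach arbitrarily high on path} (and since $S$ is finite and $\beliefrv_t=\belief(\histrv_t)$ on $\curlyb{\replacerv_t=0}$), some $t\in\Zpos$ and some on-path career history $h$ with $\belief(h)>\hat\belief$ satisfy $\Prob\curlyb{\histrv_t=h \text{ and } \replacerv_t=0}>0$. Since $\belief(h)>\ooption-\cost$, \cref{lem: high reputation gives chance of tenure} yields $\Prob\curlyb{\histrv_t=h,\ \beliefrv_\infty\geq\belief(h),\ \replacerv_\infty=0}>0$. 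On the event in this last display, $\replacerv_\infty=0$ and $\beliefrv_\infty\geq\belief(h)>0$, so \cref{lem: learning or firing} forces $\beliefrv_\infty=1$ almost surely there. Since the probability that $\beliefrv_\infty=1$ conditional on the officeholder being opportunistic is zero---the standard fact, invoked also at the end of the proof of \cref{lem: opportunistic types are replaced}, that a Bayesian observer almost surely never becomes certain of a false hypothesis---almost every such sample path has the good type. Hence the event above is, up to a null set, contained in $\curlyb{\text{officeholder is good}}\cap\curlyb{\replacerv_\infty=0}$, which therefore has positive probability, proving the lemma.

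I expect the first step---ruling out $\ooption=1$ when $\cost=0$---to be the main thing requiring care, since it is the only place the failure of \condref{FEI} enters and it is what keeps the retention threshold bounded away from $1$. The rest is routine once the cited lemmas are available, with the one subtlety being the final deduction, that ``never replaced with limiting reputation bounded away from $0$'' upgrades to ``the \emph{good} officeholder is never replaced,'' which rests on an opportunistic incumbent's reputation never converging to $1$.
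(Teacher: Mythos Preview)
Your proof is correct and follows essentially the same architecture as the paper's: first verify $\ooption-\cost<1$ (via \cref{lemma: FEI captures effort incentives} when $\cost=0$), then combine \cref{lem: beliefs reach arbitrarily high on path} and \cref{lem: high reputation gives chance of tenure} to get a positive probability of $\{\replacerv_\infty=0,\ \beliefrv_\infty\geq\belief(h)>0\}$, and finally convert this to a positive probability of the good type never being replaced.

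The one genuine difference is in that final conversion. You invoke \cref{lem: learning or firing} to upgrade $\beliefrv_\infty>0$ to $\beliefrv_\infty=1$ on the relevant event, and then appeal to the fact that $\Prob\{\beliefrv_\infty=1\mid\text{opportunistic}\}=0$. The paper instead uses the one-line identity $\E\brac{(1-\replacerv_\infty)\beliefrv_\infty}=\Prob\{\text{good type and }\replacerv_\infty=0\}$, which follows because $\beliefrv_\infty$ is the conditional probability of the good type given the limiting public information and $\replacerv_\infty$ is measurable with respect to that information. The paper's route is slightly more economical (it does not need \cref{lem: learning or firing}), while yours makes the selection mechanism more explicit by pinning down $\beliefrv_\infty=1$; both are valid. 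One small point: in your $\ooption=1$, $\cost=0$ argument, you should state that $\belief(h)<1$ at every on-path history (the paper notes $\belief(h)=\prior$ by induction since there is no learning), so that your implication ``$\pstrat(h)=1$ whenever $\belief(h)<1$'' actually covers all on-path histories.
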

\begin{proof}
Let us first show the result holds assuming the voters' outside option $\ooption\in[0,1]$ satisfies $\ooption-\cost<1$. Then we will verify this inequality holds.

By symmetry, we need only show the result for the first officeholder. By \cref{lem: beliefs reach arbitrarily high on path} (and since $\ooption-\cost<1$), there is a positive probability of reaching a history at which the first officeholder has not been replaced, and his reputation is strictly positive and strictly greater than $\ooption-\cost$. But then \cref{lem: high reputation gives chance of tenure} says the probability of never being replaced and having $\beliefrv_\infty>0$ is strictly positive too. Hence, the probability $\E\brac{(1-\replacerv_\infty)\beliefrv_\infty}$ of being a good type and never being replaced is strictly positive.  Consequently, the probability of never being replaced \emph{conditional on} being a good type is also strictly positive.

All that remains is to show $\ooption-\cost<1$. The inequality is trivial if either $\ooption<1$ or $\cost>0$, so assume towards contradiction that $\ooption=1$ and $\cost=0$. 
Voter incentives then require $\pstrat(h)=1$ at any history $h\in\Hist$ with $\belief(h)<1$ and $\vstrat(h)<1$. Because the Bayesian property then implies $\belief(h)=\prior<1$ at any history that is reached with positive probability, the opportunistic-type officeholder works at every on-path history. But this contradicts politician incentives, by \cref{lemma: FEI captures effort incentives}.
\end{proof}

Now we argue that when \condref{FEI} fails and replacement costs are low, a good-type politician is eventually retained in office forever.

\begin{proof}[Proof of \cref{prop: differential replacement}]
The first part (which relies on $\cost<\prior$) is exactly \cref{lem: opportunistic types are replaced}, and the second (which does not require our bound on replacement cost) is \cref{lem: good types can survive}.

Now, toward the proposition's last sentence (which again relies on $\cost<\prior$), suppose \condref{FEI} fails. We know every opportunistic officeholder is replaced with probability $1$, whereas every good-type officeholder has a positive (and by symmetry, the same positive) probability of never being replaced. By the Borel-Cantelli Lemma, with probability $1$, some officeholder is never replaced, and this officeholder is a good type.
\end{proof}

\subsubsection{Equilibrium Constructions}\label{sec: proof constructions}

The next lemma says \condref{FEI} is equivalent to a more stringent condition that imposes further restrictions on the continuation values. In terms of induced behavior, if full effort can be incentivized at all, it can be done with a stationary likelihood-ratio test in which sufficiently bad news results in certain replacement.
\begin{Lemma}\label{lem: cutoff rule for first-best equilibrium}
If \condref{FEI} holds, then it can be witnessed by $\cval\in\Rpos^S$ such that:
\begin{itemize}
    \item Some $\bar\cval\in(0,1-\ecost)$ exists such that $\cval(S)=\curlyb{\cval(s):\ s\in S}=\{0,\bar\cval\}$.
    \item Every $s_0$ with $\mon_0(s_0)\leq\mon_1(s_0)$ has $\cval(s_0)=\bar\cval$.
    \item Every $s_0,s_1\in S$ with $\cval(s_0)=0$ and $\cval(s_1)=\bar\cval$ has $\frac{\mon_0(s_0)}{\mon_1(s_0)}>\frac{\mon_0(s_1)}{\mon_1(s_1)}$.
    \item Inequality \eqref{PK for FB} holds with equality.
\end{itemize}
\end{Lemma}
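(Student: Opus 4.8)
The plan is to start from an arbitrary witness $\cval\in\Rpos^S$ of \condref{FEI} and reshape it, in three stages, into one of the asserted form. For a candidate $\cval$, write $\cval_0:=(1-\delta)(1-\ecost)+\delta\sum_{s\in S}\mon_1(s)\cval(s)$ for its ``on-path'' value, and note that \eqref{IC for FB} is equivalent to $\delta\sum_{s}\brac{\mon_1(s)-\mon_0(s)}\cval(s)\geq(1-\delta)\ecost$ (hence unaffected by adding a constant to $\cval$), while \eqref{PK for FB} says $\cval_0\geq\max_s\cval(s)$; combining them forces $\cval(s)\leq\cval_0\leq1-\ecost$, so the witness set is a compact subset of $[0,1-\ecost]^S$ and $\cval\mapsto\cval_0$ attains a maximum there. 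Fix a maximizer. \emph{Stage 1 (lift the good signals):} Replace the maximizer by the profile that keeps its values on signals with $\mon_0(s)>\mon_1(s)$ and equals $\cval_0$ on all other (``good'') signals. This only raises $\cval$ pointwise, hence weakly raises $\cval_0$; it preserves \eqref{IC for FB} because good signals carry weight $\mon_1-\mon_0\geq0$, and it preserves \eqref{PK for FB} because the new maximum is again $\cval_0$. Maximality then forces the lift to do nothing, so (as $\mon_1>0$) the maximizer already has $\cval(s)=\cval_0$ at every good signal; since informativeness supplies at least one good signal, \eqref{PK for FB} holds with equality.

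\emph{Stage 2 (rearrange along the likelihood ratio):} Partition $S$ into the classes $G_1,\dots,G_J$ on which $\mon_0(s)/\mon_1(s)$ takes its distinct values $r_1<\dots<r_J$ --- merging ties is the step that will make the third bullet's inequality strict --- and note that informativeness gives $r_1\leq1<r_J$, so the good signals are exactly $G_1\cup\dots\cup G_M$ with $1\leq M<J$; put $F_j:=\sum_{s\in G_j}\mon_1(s)$. Now minimize $\sum_s\mon_0(s)\cval'(s)=\sum_j r_j\sum_{s\in G_j}\mon_1(s)\cval'(s)$ over $\cval'\in[0,\cval_0]^S$ subject to $\sum_s\mon_1(s)\cval'(s)=\sum_s\mon_1(s)\cval(s)=:c$. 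A standard vertex/exchange argument, together with the fact that objective and constraint see each class only through $\sum_{s\in G_j}\mon_1(s)\cval'(s)$, yields a minimizer equal to $\cval_0$ on a prefix $G_1,\dots,G_k$ of classes, equal to a common $\xi\in[0,\cval_0]$ on $G_{k+1}$, and $0$ elsewhere. Because $c\geq\cval_0\sum_{j\leq M}F_j$ (Stage 1 put all good signals at $\cval_0$), the prefix contains every good class, so $k\geq M$; because $c\geq\cval_0F_1$, the minimizer's maximum and its on-path value are both still $\cval_0$, so \eqref{PK for FB} still binds; and because its $\mon_0$-mass did not increase while its $\mon_1$-mass is unchanged, \eqref{IC for FB} still holds. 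If $\xi\in\{0,\cval_0\}$, this profile already has the claimed form (the profile cannot be identically $\cval_0$, else \eqref{IC for FB} fails, so the cutoff index is $<J$ and $\bar\cval=\cval_0\in(0,1-\ecost)$; the other three bullets are read off the prefix structure).

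\emph{Stage 3 (round off the boundary class):} Suppose instead $\xi\in(0,\cval_0)$. For $K\in\{0,\dots,J\}$ let $\cval^{(K)}$ equal $\bar\cval^{(K)}:=\frac{(1-\delta)(1-\ecost)}{1-\delta\sum_{j\leq K}F_j}$ on $G_1\cup\dots\cup G_K$ and $0$ elsewhere; each such $\cval^{(K)}$ satisfies \eqref{PK for FB} with equality by construction. The key observation is that if the prefix $G_1,\dots,G_k$ is held at the value $V(\xi):=\frac{(1-\delta)(1-\ecost)+\delta\xi F_{k+1}}{1-\delta\sum_{j\leq k}F_j}$ forced by promise-keeping, then as $\xi$ ranges over $[0,\bar\cval^{(k+1)}]$ one gets a continuous family of profiles, all satisfying \eqref{PK for FB} with equality (one checks $V(\xi)\geq\xi$ there, with equality at the right end), interpolating between $\cval^{(k)}$ at $\xi=0$ and $\cval^{(k+1)}$ at $\xi=\bar\cval^{(k+1)}$, and along which the slack in \eqref{IC for FB} is an affine function of $\xi$. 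Our Stage 2 profile sits at an interior $\xi\in(0,\bar\cval^{(k+1)})$ --- if not, it already equals $\cval^{(k+1)}$ --- where this affine slack is $\geq0$, hence the slack is $\geq0$ at one of the two endpoints; the corresponding $\cval^{(k)}$ or $\cval^{(k+1)}$ is a witness of \condref{FEI} of precisely the claimed form, with cutoff index in $\{M,\dots,J-1\}$ (the upper bound holds because $\cval^{(J)}\equiv1-\ecost$ fails \eqref{IC for FB}), so $\bar\cval\in(0,1-\ecost)$.

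\emph{Main obstacle:} Stage 3 is the delicate step: pushing the leftover boundary class straight down to $0$ or up to $\cval_0$ would break either promise-keeping or incentive compatibility, so one cannot simply round. The resolution is to move along the one-parameter, promise-keeping-binding family joining the two rounded profiles, on which \eqref{IC for FB}'s slack is affine, so that nonnegativity at the interior starting point transfers to an endpoint. One must also be careful to work modulo likelihood-ratio classes throughout (the ``replication'' idea) so the cutoff never splits a tie and the third bullet's strict inequality holds.
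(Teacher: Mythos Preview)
Your proof is correct and follows the same three-stage plan as the paper: lift the good signals to the top, rearrange the remaining mass into a likelihood-ratio threshold with at most one leftover class, then use affinity of the \eqref{IC for FB} slack along a one-parameter family to round that class to $\{0,\bar\cval\}$. Two execution details differ, but neither changes the substance. First, you secure \eqref{PK for FB} with equality at the outset by picking a maximizer of $\cval_0$ over the (compact) witness set; the paper instead holds $\bar\cval=\max\cval(S)$ fixed throughout and rescales at the very end to make \eqref{PK for FB} bind. Second, your Stage~3 moves along the family parametrized by $\xi$ that keeps \eqref{PK for FB} binding (so the top value tracks $V(\xi)$), whereas the paper perturbs by a fixed direction $z$ that simultaneously shifts the top and threshold values; both families make the \eqref{IC for FB} slack affine, so the endpoint argument is identical in spirit. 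Your variant is arguably a bit cleaner because \eqref{PK for FB} binds at every stage, which lets you read off the final $\bar\cval=\bar\cval^{(K)}$ directly rather than via a separate scaling step.
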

\begin{proof}
Let $S_0:=\curlyb{s\in S:\ \mon_1(s)<\mon_0(s)}$, and let $\succsim$ be the complete transitive binary relation on $S$ given by $s'\succsim s \iff\frac{\mon_0(s')}{\mon_1(s')}\leq\frac{\mon_0(s)}{\mon_1(s)}$. Suppose \condref{FEI} holds and is witnessed by $\cval$. We will successively show that various properties of $\cval$ are without loss of generality. To that end, let $\bar\cval:=\max\cval(S)\in\Rpos$. Note that $0<\bar\cval<1-\ecost$, since any $\cval$ with all entries the same would violate \eqref{IC for FB}, any $\cval$ with $\max\cval(S)\geq1-\ecost$ would violate \eqref{PK for FB}.

First, raising $\cval(s)$ to $\bar\cval$ for each $s\in S\setminus S_0$ relaxes both constraints. So we may assume $\cval(s)=\bar\cval$ for every $s\in S\setminus S_0$. Second, if $\hat S\subseteq S_0$ is some $\sim$-equivalence class, then replacing $\cval(s)$ with $\tfrac{\sum_{\hat s\in\hat S}\mon_1(\hat s)\cval(\hat s)}{\sum_{\hat s\in\hat S}\mon_1(\hat s)}$ for every $s\in\hat S$ preserves both constraints. So we may assume $\cval$ is constant on every $\sim$-equivalence class.

Next, observe that---because $\cval\in[0,\bar\cval]^S$---some $s^*\in S_0$ exists such that $$\sum_{s\succ s^*}\mon_1(s)\bar\cval\leq \sum_{s\in S}\mon_1(s)\cval(s)\leq
\sum_{s\succsim s^*}\mon_1(s)\bar\cval;$$
and so some $\hat\cval\in[0,\bar\cval]$ exists such that 
$$\brac{\sum_{s\succ s^*}\mon_1(s)}\bar\cval
+
\brac{\sum_{s\sim s^*}\mon_1(s)}\hat\cval
=
\sum_{s\in S}\mon_1(s)\cval(s).
$$
Observe now that both inequalities in \condref{FEI} will still be satisfied if we replace $\cval$ with the vector whose $s$ entry is $\bar\cval$ for $s\succ s^*$, is $\hat\cval\in[0,\bar\cval]$ for $s\sim s^*$, and is $0$ for $s\prec s^*$. Indeed, this modification has no effect on \eqref{PK for FB} but relaxes \eqref{IC for FB}. (Moreover, because we had $\cval|_{S\setminus S_0}=\cval$ before the rearrangement and $S\setminus S_0 \succ S_0$, it follows that we still have this property.) So we may further assume $\cval$ takes this form. 

Now, define $\hat z:= 1-\delta\sum_{\tilde s\succ s^* } \mon_1(\tilde s)>0$ and $\bar z:= -\delta\sum_{\tilde s\sim s^* } \mon_1(\tilde s)<0$, and define
$z\in\R^S$ by letting $$z(s):=\begin{cases}
0&:\ s\prec s^* \\
\hat z &:\ s\sim s^* \\
\bar z &:\ s\succ s^* \\
\end{cases}$$
for each $s\in S$. Observe that the set of $\varepsilon\in\R$ with $\bar\cval+\varepsilon\bar z \geq \hat\cval+\varepsilon\hat z\geq 0$ is a closed interval containing zero. It is also bounded: the first inequality is violated as $\varepsilon \to \infty$, and the second is violated as $\varepsilon\to -\infty$. This interval therefore has two extreme points, and at least one of these $\varepsilon$ is such that $\cval+\varepsilon z$ satisfies \eqref{IC for FB} because the latter is affine. Finally, simple algebra shows that this $\cval+\varepsilon z$ satisfies \eqref{PK for FB} too because $\cval$ does. So replacing $\cval$ with this modification, we may assume that $\hat\cval\in\{0,\bar\cval\}$, or equivalently that $\cval(S)\subseteq\{0,\bar\cval\}$. And in fact, we have $\cval(S)=\{0,\bar\cval\}$ because any $\cval$ with all entries the same violates \eqref{IC for FB}.

Finally, because scaling up a vector $\cval$ that satisfies \eqref{IC for FB} relaxes this constraint but tightens \eqref{PK for FB}, we can scale it up to ensure \eqref{PK for FB} holds with inequality.
\end{proof}

We now prove that under \condref{FEI} and a small replacement cost, some equilibrium attains full effort. 
\begin{Lemma}\label{lem: first best equilibrium}
If $\cost\leq1-\prior$ and \condref{FEI} holds, then some equilibrium attains full effort.
\end{Lemma}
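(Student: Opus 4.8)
The plan is to exhibit the equilibrium explicitly, leveraging \cref{lem: cutoff rule for first-best equilibrium}. That lemma furnishes a vector $\cval\in\Rpos^S$ witnessing \condref{FEI} with $\cval(S)=\{0,\bar\cval\}$ for some $\bar\cval\in(0,1-\ecost)$, with every $s$ satisfying $\mon_0(s)\leq\mon_1(s)$ having $\cval(s)=\bar\cval$, and with \eqref{PK for FB} binding; writing $\cval_0:=(1-\delta)(1-\ecost)+\delta\sum_{s}\mon_1(s)\cval(s)$, this gives $\cval_0=\bar\cval$. Call $s$ a \emph{pass} if $\cval(s)=\bar\cval$ and a \emph{fail} if $\cval(s)=0$; both kinds exist. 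Say a career history is \emph{clean} if every signal in it is a pass (so $\emptyset$ is clean) and \emph{viable} if every signal in it except possibly the last is a pass. I would define the profile by: $\belief(h)=\prior$ if $h$ is viable and $\belief(h)=0$ otherwise; $\pstrat(h)=1$ if $h$ is clean and $\pstrat(h)=0$ otherwise; and, for $h\neq\emptyset$, $\vstrat(h)=0$ if $h$ is clean and $\vstrat(h)=1$ otherwise. Informally: on path everyone works; the incumbent is retained exactly as long as every realized signal is a pass; and once a fail occurs---or once a voter retains an incumbent she was supposed to replace---play switches to permanent shirking, zero reputation, and certain replacement.

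I would then verify the three equilibrium conditions in turn. For incumbent incentives, one first checks that the politician's continuation value (the unique bounded solution of the recursion) is $\cval_0=\bar\cval$ at every clean history and $1-\delta$ at every other history; the key simplification is that $[1-\vstrat(h,s)]\cval(h,s)=\cval(s)$ for every clean $h$, which is immediate from the definitions. At a non-clean history the opportunistic incumbent is replaced next period no matter what he does, so $\pstrat(h)=0$ is strictly optimal; at a clean history, working yields $\bar\cval$ while shirking yields $(1-\delta)+\delta\bar\cval\sum_{s\text{ pass}}\mon_0(s)$, so working is optimal by \eqref{IC for FB}. For voter incentives, note $\eeff(\emptyset)=\prior+(1-\prior)=1$; at a non-empty clean history $\eeff(h)=1\geq1-\cost=\eeff(\emptyset)-\cost$, so retention is optimal; at a viable non-clean history $\eeff(h)=\prior$ (the incumbent shirks there, and Bayesian updating along clean histories leaves the belief at $\prior$), so replacement is optimal precisely because $\cost\leq1-\prior$; and at a non-viable history $\eeff(h)=0$, so replacement is optimal since $\cost\leq1-\prior<1$. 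For the Bayesian condition, since both types work at every clean history, updating after any signal there returns $\belief=\prior$, matching the definition on viable histories; and at any $(h,s)$ following a history $h$ with $\vstrat(h)=1$ the condition imposes no restriction, so the assignment $\belief(h,s)=0$ is legitimate---this is the sole step that uses the weak-PBE latitude (with $\cost=0$ one could instead retain Bayesian beliefs and obtain a full PBE). Finally, the equilibrium attains full effort: on path the active incumbent always works---the good type by assumption, the opportunistic type because $\pstrat=1$ at $\emptyset$ and at all clean histories---and any replacement merely installs a fresh incumbent who likewise works.

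I do not anticipate a real obstacle; the argument is a verification once \cref{lem: cutoff rule for first-best equilibrium} is in hand. The one point that deserves care is the voter's incentive to replace immediately after a fail: there the incumbent is retained with probability zero, so his action at that node is off path and must be pinned down to shirking, which is exactly what makes the voter's one-period expected effort from retention equal $\prior$ rather than something near $1$, and hence makes replacement optimal under $\cost\leq1-\prior$. Relatedly, the reason one needs the full strength of \cref{lem: cutoff rule for first-best equilibrium}---the two-valued, likelihood-ratio-cutoff shape of $\cval$ together with \eqref{PK for FB} binding---rather than an arbitrary witness of \condref{FEI} is that when $\cost>0$ a voter strictly prefers to keep any incumbent she expects to work for sure, so every on-path retention probability must be exactly $0$ or $1$; a witness with $\cval(S)=\{0,\bar\cval\}$ and $\cval_0=\bar\cval$ delivers precisely such a profile.
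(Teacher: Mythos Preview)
Your construction is correct and essentially identical to the paper's: your ``clean'' histories are the paper's $\Hist_{\mathrm{Pass}}$, your ``viable'' histories are precisely the histories at which the paper assigns belief $\prior$, and the strategies and verification steps match. Your additional commentary on why the two-valued, binding-\eqref{PK for FB} form of the witness from \cref{lem: cutoff rule for first-best equilibrium} is needed (so that on-path retention probabilities are exactly $0$ or $1$, as voter strict preferences force when $\cost>0$) is a nice touch that the paper leaves implicit.
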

\begin{proof}
By \cref{lem: cutoff rule for first-best equilibrium}, some $\bar\cval\in(0,1)$ and $\lrat\geq1$ exist such that $S^*:=\curlyb{s\in S:\ \tfrac{\mon_0(s)}{\mon_1(s)}\leq\lrat}$ satisfies
$$
\bar\cval=(1-\delta)(1-\ecost)+\delta\sum_{s\in S^*}\mon_1(s)\bar\cval\geq (1-\delta)1+\delta\sum_{s\in S^*}\mon_0(s)\bar\cval.
$$
Let $\Hist_{\mathrm{Pass}}:=\bigcup_{t=0}^\infty (S^*)^t$, the histories in which the incumbent has never generated a signal outside of $S^*$. We will now construct an equilibrium that attains full effort. Let the politician strategy be given by $\pstrat(h)=\mathbf1_{h\in\Hist_{\mathrm{Pass}}}$, the voters' strategies be given by $\vstrat(h)=\mathbf1_{h\notin\Hist_{\mathrm{Pass}}}$, and let the belief map be given by 
$$\belief(h)=\begin{cases}
\prior & \text{ if } h \in \Hist_{\mathrm{Pass}}\\
\prior & \text{ if } h=(\tilde h,s) \text{ for some } \tilde h \in \Hist_{\mathrm{Pass}} \text{ and } s\in S\setminus S^* \\
0  & \text{otherwise.}
\end{cases}$$ 
This strategy profile clearly leads to effort always being chosen until an incumbent is replaced, so we need only show this construction is in fact an equilibrium.\footnote{\label{fn:Markovian-elaboration}The construction is ``nearly'' Markovian: play depends only on beliefs and the most recent signal. In fact, because beliefs do not change on path, the equilibrium is ``simple'' in the sense of \citet{BS93}.  Our construction in the proof of \autoref{lem: bad equilibrium construction} is also nearly Markovian, but because beliefs do change on path it is not simple in their sense. 
}

A incumbent's incentives at histories outside of $\Hist_{\mathrm{Pass}}$ are satisfied because he will be replaced immediately after any signal, and so it is optimal to shirk. Meanwhile, his incentives at histories in $\Hist_{\mathrm{Pass}}$ follow directly from the properties defining $\bar\cval$: his continuation value at such a history is $\bar\cval$, and he prefers this to the value of a one-shot deviation in which he shirks. A voter's incentives at histories inside $\Hist_{\mathrm{Pass}}$ are satisfied because she receives effort with probability $1$ whatever she does, so she prefers not to bear the nonnegative replacement cost. For her incentives at histories $h\notin \Hist_{\mathrm{Pass}}$, her payoff from not replacing is $\belief(h)\leq\prior$, and her payoff from replacing is $1-\cost\geq\prior$. Finally, the Bayesian property is straightforward: $\belief(\emptyset)=\prior$, and at every history $h$ with $\vstrat(h)<1$, we have $\pstrat(h)=1$ and $\belief(h,s)=\belief(h)$ for every $s\in S$.
\end{proof}

The next lemma shows how to construct an equilibrium that does not attain eventual full effort, given \condref{FEI} and a small replacement cost. This equilibrium has every incumbent being eventually replaced and a positive probability of shirking from new incumbents.

\begin{Lemma}\label{lem: bad equilibrium construction}
If $\cost<1-\prior$ and \condref{FEI} holds, then some equilibrium does not attain eventual full effort.
\end{Lemma}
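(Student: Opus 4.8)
The plan is to construct the equilibrium depicted in \autoref{fig: bad equilibrium construction} and verify both that it is an equilibrium and that it fails eventual full effort. First I would invoke \cref{lem: cutoff rule for first-best equilibrium}: since \condref{FEI} holds, it is witnessed by some $\cval\in\Rpos^S$ of the likelihood-ratio-cutoff form, so that, writing $\bar\cval:=\max\cval(S)\in(0,1-\ecost)$ and $S^*:=\{s\in S:\cval(s)=\bar\cval\}$ (which, by that lemma, contains every signal with $\mon_1(s)\geq\mon_0(s)$), we have $\bar\cval=(1-\delta)(1-\ecost)+\delta\sum_{s\in S^*}\mon_1(s)\bar\cval\geq(1-\delta)+\delta\sum_{s\in S^*}\mon_0(s)\bar\cval$ and $\emptyset\neq S^*\neq S$. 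Call a signal in $S^*$ a \emph{Pass} and any other a \emph{Fail}; Fails are strictly indicative of shirking. The strategy profile has three regimes: (i) at career histories consisting only of Fails, voters replace with a fixed probability $x$, and a retained opportunistic incumbent works with the probability that keeps the current voter indifferent between retention and replacement, i.e.\ with $\eeff(h)=\ooption-\cost$ (at the empty history the opportunist works with probability $\aprob_0\in(0,1)$, determining the outside option $\ooption=\prior+(1-\prior)\aprob_0$); (ii) from the first Pass onward, play proceeds exactly as in \cref{lem: first best equilibrium}'s full-effort equilibrium applied to the sub-career after that Pass---the incumbent works, the voter retains while every subsequent signal is a Pass, and replaces at the first subsequent Fail; (iii) off path, after any voter who retained an incumbent who should have been replaced, the incumbent shirks forever, has reputation $0$, and is replaced by every voter. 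Beliefs are given by Bayes' rule wherever required, and the latitude of weak PBE handles the remaining (measure-zero) off-path histories, in particular the first history of regime~(iii) when $\cost>0$.

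Next I would pin down the free quantities. In regime~(i), since the good type always works, $\eeff(h)=\ooption-\cost$ forces the opportunist's effort to be $\frac{\ooption-\cost-\belief(h)}{1-\belief(h)}$, which lies in $(0,1)$ because the reputation strictly falls along any Fail string (Fails are bad news and $\eeff<1$), so $\belief(h)<\prior$, while a suitably large $\aprob_0$ makes $\prior<\ooption-\cost<1$. The opportunist's continuation value $V$ is constant across regime-(i) histories (by indifference it equals his shirking value, which involves only primitives, $\bar\cval$, and the common post-Fail value), and setting him indifferent between working and shirking throughout regime~(i) pins down $x$ via $(1-x)V=W$, where $W:=\bar\cval-\frac{(1-\delta)\ecost}{\delta(\sum_{s\in S^*}\mon_1(s)-\sum_{s\in S^*}\mon_0(s))}$; one checks $W\geq0$ (this is equivalent to the full-effort incentive constraint, which holds by \condref{FEI}) and $W<V$ (a short computation using $\bar\cval<1$ and $\ecost>0$), so $x=1-W/V\in(0,1]$, with $x=1$ precisely when $W=0$, in which knife-edge case regime~(i) collapses to the single on-path history $\emptyset$.

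I would then verify the equilibrium conditions. Politician incentives: in regime~(ii) the continuation value is $\bar\cval$ and the full-effort constraint makes working optimal; in regime~(i) the opportunist is indifferent by construction; in regime~(iii) he is replaced after any signal, so shirking is optimal. Voter incentives: in regime~(i) the voter is indifferent (or, if $x=1$, weakly prefers to replace, as $\belief(h)<\prior<\ooption-\cost$); in regime~(iii) the reputation is $0<\ooption-\cost$; in regime~(ii) at an all-Pass history the voter obtains effort for sure and strictly prefers retention (as $\ooption<1$). The one delicate point---the step I expect to be the main obstacle---is the voter's incentive to replace at the first Fail following the first Pass: there the incumbent shirks, so the voter's retention payoff equals the reputation at that history, which (since the opportunist works throughout the Pass stretch) equals the reputation reached one step after the first Pass; replacement is optimal only if that reputation does not exceed $\ooption-\cost$. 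Guaranteeing this for every such history is what forces a careful choice of $\aprob_0$---equivalently, of the size of the outside option---and is where $\cost<1-\prior$, together with the informativeness of monitoring implied by \condref{FEI}, must be used, via a bound on the supremum of post-Pass reputations read off from the Bayesian recursion. (Should that supremum fail to stay below $\ooption-\cost$ under the simple regime~(ii) above, one would interpose at the offending post-Pass histories a transient sub-regime, structurally like regime~(i), in which the opportunist works with interior probability and the reputation erodes back into the range $\leq\ooption-\cost$ before the full-effort continuation takes over.)

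Finally, to see the equilibrium fails eventual full effort: because $\mon_1(s)>0$ for every $s$ and $S^*\neq S$, no incumbent generates only Passes forever, so every incumbent almost surely produces a Fail eventually and is replaced---in regime~(i) with per-period hazard $x>0$ (or at once if $x=1$), or in regime~(ii) at the first Fail after his first Pass. Hence almost surely infinitely many incumbents hold office, and since types are drawn independently, infinitely many are opportunists; each such incumbent shirks with probability $1-\aprob_0>0$ in his first period, so by the Borel--Cantelli lemma $\actrv_t=0$ infinitely often almost surely, and in fact $\limsup_{T\to\infty}\E[\tfrac1T\sum_{t=0}^{T-1}\actrv_t]<1$. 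Thus the constructed equilibrium does not attain eventual full effort, which proves the lemma.
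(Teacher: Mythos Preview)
Your construction is essentially the paper's own: the same three regimes, the same determination of $x$ via politician indifference (your $W$ is the paper's $\tilde\cval$, your $V$ its $\hat\cval$, and $x=1-\tilde\cval/\hat\cval$), and the same identification of the ``delicate point'' as voter optimality at the first Fail after the first Pass. Your argument for failure of eventual full effort is also the paper's.

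The one substantive difference is your parenthetical fallback about interposing a ``transient sub-regime.'' The paper never needs this: it shows the basic construction works outright by taking $\aprob_0$ sufficiently close to $1$. The key observation is that the offending reputation equals $\belief(h_1,s_1)$ for some first-regime $h_1$ and Pass $s_1\in S^*$; since $h_1$ consists only of Fails, $\belief(h_1)\leq\prior$, and since the opportunist's effort at $h_1$ is high (at least $\aprob_0$), the Bayesian update is small, giving $\belief(h_1,s_1)\leq\max_{s}\bupdate_{\aprob_0}(\prior\mid s)$. As $\aprob_0\to1$ this bound converges to $\prior$, while the threshold $\ooption-\cost=\prior+(1-\prior)\aprob_0-\cost$ converges to $1-\cost>\prior$ (this is exactly where $\cost<1-\prior$ enters). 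So for $\aprob_0$ near $1$ the required inequality holds directly. You should drop the hedge and carry out this limit argument instead.
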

\begin{proof}
We begin with some useful notation and calculations. First, by \cref{lem: cutoff rule for first-best equilibrium}, some $\bar\cval\in(0,1-\ecost)$ and $\lrat\in\left[1,\max_{s\in S}\frac{\mon_0(s)}{\mon_1(s)}\right)$ exist such that, letting $$S^*:=\curlyb{s\in S:\ \tfrac{\mon_0(s)}{\mon_1(s)}\leq\lrat}, \text{ and } \mon_a^*:=\sum_{s\in S^*}\mon_a(s)\in[0,1) \text{ for } a\in\{0,1\},$$ we have
$
\bar\cval=(1-\delta)(1-\ecost)+\delta\mon_1^*\bar\cval\geq (1-\delta)1+\delta\mon_0^*\bar\cval.
$
The previous inequality tells us $\mon_1^*>\mon_0^*$ so that $\tilde\cval:= \bar\cval-\tfrac{1-\delta}{\delta}\tfrac{\ecost}{\mon_1^*-\mon_0^*}<\bar\cval$. Observe also that \begin{eqnarray*}
\delta(\mon_1^*-\mon_0^*)\tilde\cval
&=& \delta(\mon_1^*-\mon_0^*)\bar\cval - (1-\delta)\ecost \\
&=& \brac{(1-\delta)(1-\ecost)+\delta\mon_1^*\bar\cval}- \brac{(1-\delta)1+\delta\mon_0^*\bar\cval} \\
&\geq& 0,
\end{eqnarray*}
so that $\tilde\cval\geq0$. 
Now, defining $\hat\cval:=(1-\delta)(1-\ecost)+\delta\brac{\mon_1^*\bar\cval+(1-\mon_1^*)\tilde\cval}$, note that $\hat\cval$ is a proper weighted average of the three terms $\tilde\cval<\bar\cval<1-\ecost$, and so is strictly higher than $\tilde\cval$. 
Having established that $0\leq\tilde\cval<\hat\cval$, we thus know that $x:=1-\frac{\tilde\cval}{\hat\cval}\in(0,1]$. 
Moreover, 
\begin{eqnarray*}
&& (1-\delta)+\delta\brac{\mon_0^*\bar\cval+(1-\mon_0^*)\tilde\cval} - \hat\cval\\
&=& (1-\delta)\brac{1-(1-\ecost)}+\delta\curlyb{
\brac{\mon_0^*\bar\cval+(1-\mon_0^*)\tilde\cval}
-
\brac{\mon_1^*\bar\cval+(1-\mon_1^*)\tilde\cval}
}\\
&=& (1-\delta)\ecost+\delta(\mon_0^*-\mon_1^*)(\bar\cval-\tilde\cval)\\
&=&0.
\end{eqnarray*}
In summary, given the above calculations and substituting in $\tilde\cval=(1-x)\hat\cval$, we have that 
\begin{eqnarray*}
x &\in& (0,1], \\
\bar\cval&=&(1-\delta)(1-\ecost)+\delta\mon_1^*\bar\cval\\ 
&\geq& (1-\delta)1+\delta\mon_0^*\bar\cval, \text{ and} \\
\hat\cval&=&(1-\delta)(1-\ecost)+\delta\brac{\mon_1^*\bar\cval+(1-\mon_1^*)(1-x)\hat\cval}\\
&=& (1-\delta)+\delta\brac{\mon_0^*\bar\cval+(1-\mon_0^*)(1-x)\hat\cval}.
\end{eqnarray*}
We will use these calculations to construct an equilibrium of the desired form.

To construct such an equilibrium, fix some $a_0\in\left(\frac{\cost}{1-\prior},\ 1\right)$, which exists by hypothesis. We will next describe a strategy profile and belief map that depends on $a_0$ (though we will suppress this dependence in our notation). We will then argue that the profile does not generate EFE, and that for $a_0$ close enough to $1$ it is an equilibrium. 

To describe the strategy profile and belief map, partition histories into three categories.\footnote{Recall that \autoref{fig: bad equilibrium construction} depicts the strategy profile for the special case of binary signals.} Say a history is a \emph{first-regime} history if no signal in $S^*$ has ever (with this incumbent) appeared; say a history is a \emph{second-regime} history if it is not a first-regime history, and every signal since the first $S^*$ signal has been in $S^*$; and say a history is a \emph{third-regime} history otherwise. 
Consider strategies and beliefs $(\pstrat,\vstrat,\belief)$ as follows:
\begin{itemize}
    \item \underline{In the first regime:} Both voter and incumbent mix. Specifically, the voter always chooses $\vstrat(h)=x$. We recursively construct politician behavior and voter beliefs at a history $h$ in this regime as follows. Start by setting $\pstrat(\emptyset)=a_0$ and $\belief(\emptyset)=\prior$.  Then consider any $h=(h_-,s)$ in this regime with $h_-\in\Hist$ and $s\in S$. Given $\pstrat(h_-)>0$, which holds by induction, set $\belief(h)=\bupdate_{\pstrat(h_-)}\paren{\belief(h_-)|s}\in\left(0,\belief(h_-)\right)$, as signal $s$ is strictly bad news (recall $\lrat\geq1$ and $s\notin S^*$). Set $\pstrat(h)\in[a_0,1)$ so that $\belief(h)+\brac{1-\belief(h)}\pstrat(h)=\prior+(1-\prior)a_0-\cost$; doing so is possible because the definition of $a_0$ ensures $\prior+(1-\prior)a_0-\cost>\prior\geq\belief(h)$.
    \item \underline{In the second regime:} The voter retains the incumbent, and the incumbent works. That is, the voter always chooses $\vstrat(h)=0$, and the politician always chooses $\pstrat(h)=1$. Write $h=(h_-,s)$ for some $h_-\in\Hist$ and $s\in S$. As $\pstrat(h_-)>0$ by construction, set $\belief(h)=\bupdate_{\pstrat(h_-)}\paren{\belief(h_-)|s}.$
    \item \underline{In the third regime:} The voter replaces the incumbent, and the incumbent shirks. That is, the voter always chooses $\vstrat(h)=1$, and the politician always chooses $\pstrat(h)=0$. Write $h=(h_-,s)$ for some $h_-\in\Hist$ and $s\in S$. If $h_-$ is a second-regime history, then $\pstrat(h_-)=1$, so set $\belief(h)=\bupdate_{\pstrat(h_-)}\paren{\belief(h_-)|s}=\belief(h_-)$. If $h_-$ is not a second-regime history, then $\vstrat(h_-)=1$ by construction, so we may set $\belief(h)=0$.\footnote{\label{fn:wPBE-use2}Here is the only other place (besides that noted in \autoref{fn:wPBE-use1}) where we use the flexibility of off-path beliefs afforded by \emph{weak} PBE.}
\end{itemize}
Below, we will argue that the constructed $(\pstrat,\vstrat,\belief)$ is an equilibrium for appropriate choice of the parameter $a_0$. Before doing so, we point out that the lemma would follow if it is. Specifically, 
EFE fails because every officeholder shirks with positive probability at his initial history, and almost surely every officeholder is eventually replaced. To confirm the latter point, note that (i) if an officeholder is not replaced at a first-regime history, then there is a probability bounded away from zero (at least $[\prior + (1-\prior)a_0 - \cost]\mon_1^* > \prior\mon_1^*$) of transitioning to the second regime in the next period; (ii) from the second regime, there is a constant positive probability of transitioning to the third regime in the next period; and (iii) the officeholder is replaced at every third-regime history. So the Borel-Cantelli lemma implies almost sure eventual replacement of any officeholder.

We now confirm that $(\pstrat,\vstrat,\belief)$ is an equilibrium for some $a_0$. The Bayesian property is immediate from the construction. Next, consider politician incentives. These are trivial in the third regime: it is optimal for him to shirk since he will be replaced immediately no matter which signal he generates. His incentives to work in the second regime follow directly from the definition of $\bar\cval$ and $S^*$. For the first regime, we can use the fact that 
$$
\hat\cval=(1-\delta)(1-\ecost)+\delta\brac{\mon_1^*\bar\cval+(1-\mon_1^*)(1-x)\hat\cval}\\
= (1-\delta)+\delta\brac{\mon_0^*\bar\cval+(1-\mon_0^*)(1-x)\hat\cval}.
$$
These equalities tell us that his continuation value from any first-regime history is $\hat\cval$, and that he is indifferent between his two effort choices there. Now, we turn to voter incentives. In the second regime, the voter gets payoff $1$ from keeping the incumbent, and so does so optimally. In the first regime, the voter is indifferent because $\belief(h)+\brac{1-\belief(h)}\pstrat(h)=\prior+(1-\prior)a_0-\cost$. 

All that remains is to verify voter optimality in the third regime. Because the voter's outside option is $\prior+(1-\prior)a_0$ and the opportunistic type always shirks in the third regime, it is optimal for the voter to replace at a third-regime history $h$ if and only if 
\begin{equation}
\belief(h)\leq\prior+(1-\prior)a_0-\cost.\label{e:thirdregimeoptimality}
\end{equation}
Toward that inequality, let us show that some $\bar s\in S$ has 
\begin{equation}
    \belief(h)\leq \bupdate_{a_0}(\prior|\bar s).\label{e:thirdregime}
\end{equation}
First, if the previous history was not a second-regime history, then $\belief(h)=0$ and so  
\eqref{e:thirdregime} holds for any $\bar s\in S$. So consider the complementary case, in which $h=(h_2,s_2)$ for some second-regime history $h_2$ and some $s_2\in S$. By definition of the second regime, there exists some first-regime history $h_1$ (which comprises only signals from $S\setminus S^*$), some signal $s_1\in S^*$, and some (possibly empty) string $h_+$ of signals from $S^*$ such that $h_2=(h_1,s_1,h_+)$. Every history $\tilde h$ between\footnote{That is, 
for any history $\tilde h$ that weakly succeeds $(h_1,s_1)$ and weakly precedes $h_2$.} $(h_1,s_1)$ and $h_2$ has $\pstrat(\tilde h)=1$, and so $\belief(\tilde h,\tilde s)=\belief(\tilde h)$ for every signal $\tilde s$, implying  $\belief(h)=\belief(h_2)=\belief(h_1,s_1)$. Since every $s\in S\setminus S^*$ has $\mon_1(s)<\mon_0(s)$, it follows that $0<\bupdate_a(\belief|s)\leq\belief$ for every $\belief\in(0,1]$ and $a\in[0,1]$; hence, by induction $\belief(h_1)\leq\belief(\emptyset)=\prior$. Hence, by construction, the first-regime history $h_1$ has $\pstrat(h_1)\geq\pstrat(\emptyset)=a_0$. Therefore: \begin{itemize}
    \item If $\mon_1(s_1)>\mon_0(s_1)$, we have $\belief(h_1,s_1)=\bupdate_{\pstrat(h_1)}\left(\belief(h_1)\mid s_1\right)
    \leq \bupdate_{a_0}\left(\belief(h_1)\mid s_1\right)\leq \bupdate_{a_0}\left(\prior\mid s_1\right)$. So 
    \eqref{e:thirdregime} holds for $\bar{s}=s_1$.
    \item If $\mon_1(s_1)\leq\mon_0(s_1)$, we have $\belief(h_1,s_1)=\bupdate_{\pstrat(h_1)}\left(\belief(h_1)\mid s_1\right)\leq \belief(h_1)\leq\prior$. So \eqref{e:thirdregime} holds for any $\bar s\in S$ with $\mon_1(\bar s)\geq\mon_0(\bar s)$. 
\end{itemize}

Given that \eqref{e:thirdregime} holds for some $\bar s\in S$, inequality \eqref{e:thirdregimeoptimality} follows if $$\max_{s\in S}\bupdate_{a_0}(\prior| s) \leq \prior+(1-\prior)a_0-\cost.$$
This inequality holds when $a_0$ is close enough to $1$ because as $a_0\to1$, the right-hand side converges to $1-\cost$ while the left-hand side converges to $\prior<1-\cost$.
\end{proof}

\subsubsection{The Equivalence Theorem}

\begin{proof}[Proof of \cref{thm: equivalence theorem}]
Recall that the theorem assumes our maintained assumption in the main text that \mbox{$\cost<\min\{\prior,1-\prior\}$}.

First, suppose \condref{FEI} fails. \cref{prop: differential replacement} then tells us that every equilibrium attains eventual full effort. 
Yet no equilibrium attains full effort because politician incentive-compatibility would be violated (\cref{{lemma: FEI captures effort incentives}}).

Next, suppose \condref{FEI} holds. \cref{lem: first best equilibrium} then implies that some equilibrium attains full effort; and \cref{lem: bad equilibrium construction} demonstrates that not every equilibrium attains eventual full effort.
\end{proof}

\subsection{Other Results}\label{sec: proof extra}

\subsubsection{Replacement Despite a Favorable Reputation}

\begin{proof}[Proof of \cref{prop: firing happens with a good reputation}]
Consider any equilibrium. We want to show it either attains eventual full effort or has $\Prob\curlyb{\beliefrv_\replacetimerv>\prior \text{ and } \replacerv_\infty=1}>0$. We consider three cases.

First, if $\Prob\curlyb{\replacetimerv = \infty}>0$, then the Borel-Cantelli Lemma and symmetry of the equilibrium tell us that, with probability 1, some officeholder is never replaced. But since (by \cref{lem: opportunistic types are replaced}) that officeholder is a good type with conditional probability 1, the equilibrium attains eventual full effort.

Second, if $\Prob\curlyb{\beliefrv_{\replacetimerv}=\prior}=1$, then (because $\mon_1\neq\mon_0$ and beliefs are a martingale) signals are uninformative of effort on path, so the equilibrium attains full effort, hence eventual full effort.

Consider now the remaining case in which  $\replacetimerv<\infty$ a.s.~and $\Prob\curlyb{\beliefrv_\replacetimerv=\prior}<1$. Then, Doob's optional stopping theorem implies $\E[\beliefrv_\replacetimerv]=\prior$. Hence, $\Prob\curlyb{\beliefrv_\replacetimerv>\prior}>0$, delivering the result.
\end{proof}

\subsubsection{Voter Welfare in the Long Run}

\begin{proof}[Proof of \cref{cor: long run voter welfare}]
Suppose \condref{FEI} fails, and that $\cost< \prior$. Fix an equilibrium, and let $B_t$ denote the event that the time-$t$ incumbent is a commitment type who will never be replaced. Observe $B_0\subseteq B_1\subseteq\cdots$ by definition, and \cref{prop: differential replacement} tells us $\Prob\paren{\bigcup_{t=0}^\infty  B_t}=1$. Continuity along chains then implies $\Prob(B_t)\to 1$ as $t\to\infty$. But the time-$t$ voter's ex-ante expected payoff is in $\brac{\Prob(B_t),1}$, delivering the corollary. \end{proof}

\subsubsection{Voters' Outside Option}\label{sec: proof outside option}

The following lemma bounds the growth on incumbent reputation, and hence the growth rate of a voter's value from keeping the incumbent, for a given lower bound on the opportunistic officeholder's expected effort.

\begin{Lemma}\label{lem: belief upper bound is well-behaved}
For any $t\in\Zpos$ and $\pbound\in(0,1)$, the map $\bbound_\pbound^t$ is increasing, and any $\belief\in(0,1)$ has $$\bbound_\pbound^t(\belief)+ \brac{1-\bbound_\pbound^t(\belief)}\pbound\leq \frac{\belief+(1-\belief)\pbound^{t+1}}{\belief+(1-\belief)\pbound^{t}}.$$
Moreover, the right-hand side of the above inequality is increasing in $t$.
\end{Lemma}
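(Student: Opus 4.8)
The plan is to reduce everything to a one-dimensional recursion by working with the likelihood ratio. First I would rewrite the one-step update, using that the denominator factors:
$$\bupdate_a(\belief\mid s)=\frac{\belief\mon_1(s)}{\belief\mon_1(s)+(1-\belief)\mon_a(s)}=\frac{1}{1+\tfrac{1-\belief}{\belief}\cdot\tfrac{\mon_a(s)}{\mon_1(s)}},$$
which is legitimate because $\mon_1(s)>0$ for every $s$ (no signal rules out effort). From the right-hand expression, $\bupdate_a(\cdot\mid s)$ is increasing in $\belief$ for every fixed $(a,s)$, since $\tfrac{1-\belief}{\belief}$ is decreasing. Hence $\bbound_\pbound$, a pointwise supremum of increasing maps over the finite set $S$ and the compact interval $[\pbound,1]$, is increasing, and therefore so is its $t$-fold composite $\bbound_\pbound^t$. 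This proves the first assertion.

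Next I would establish the clean pointwise majorant $\bbound_\pbound(\belief)\leq g(\belief):=\tfrac{\belief}{\belief+(1-\belief)\pbound}$. Indeed, for every $a\in[\pbound,1]$ and $s\in S$, $\mon_a(s)=a\mon_1(s)+(1-a)\mon_0(s)\geq a\mon_1(s)\geq\pbound\mon_1(s)$, and since $x\mapsto\tfrac{c}{c+x}$ is decreasing for $c>0$, we get $\bupdate_a(\belief\mid s)\leq\tfrac{\belief\mon_1(s)}{\belief\mon_1(s)+(1-\belief)\pbound\mon_1(s)}=g(\belief)$; taking the supremum over $(a,s)$ gives $\bbound_\pbound\leq g$. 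The map $g$ is increasing and sends $(0,1)$ into $(0,1)$; in odds form it multiplies $\tfrac{1-\belief}{\belief}$ by $\pbound$, so its iterates are $g^t(\belief)=\tfrac{\belief}{\belief+(1-\belief)\pbound^t}$. Since $\bbound_\pbound\leq g$ pointwise and both are increasing, a one-line induction — $\bbound_\pbound^t(\belief)=\bbound_\pbound\!\big(\bbound_\pbound^{t-1}(\belief)\big)\leq\bbound_\pbound\!\big(g^{t-1}(\belief)\big)\leq g\!\big(g^{t-1}(\belief)\big)=g^t(\belief)$ — yields $\bbound_\pbound^t\leq g^t$ on $(0,1)$ for all $t$; the iterates are well defined because $g(\belief)<1$ keeps them in $(0,1)$, and $t=0$ is trivial.

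Finally, because $x\mapsto x+(1-x)\pbound=\pbound+(1-\pbound)x$ is increasing, applying it to $\bbound_\pbound^t(\belief)\leq g^t(\belief)$ gives
$$\bbound_\pbound^t(\belief)+\brac{1-\bbound_\pbound^t(\belief)}\pbound\;\leq\;g^t(\belief)+\brac{1-g^t(\belief)}\pbound\;=\;\frac{\belief}{\belief+(1-\belief)\pbound^t}+\frac{(1-\belief)\pbound^{t+1}}{\belief+(1-\belief)\pbound^t}\;=\;\frac{\belief+(1-\belief)\pbound^{t+1}}{\belief+(1-\belief)\pbound^t},$$
which is exactly the claimed inequality. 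For the last sentence, I would observe that the right-hand side equals $\pbound+(1-\pbound)g^t(\belief)$, and $g^t(\belief)=\tfrac{\belief}{\belief+(1-\belief)\pbound^t}$ is increasing in $t$ because $\pbound^t$ is decreasing in $t$ while the numerator is fixed; hence the right-hand side is increasing in $t$. I do not anticipate a real obstacle: the only places needing a word of care are invoking $\mon_1(s)>0$ to cancel, checking that the iterates remain in $(0,1)$, and the trivial base case $t=0$.
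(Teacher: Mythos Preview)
Your proof is correct and follows essentially the same approach as the paper: bound $\bbound_\pbound$ by the map $g(\belief)=\tfrac{\belief}{\belief+(1-\belief)\pbound}$, iterate to obtain $\bbound_\pbound^t\leq g^t$, and then apply the affine map $x\mapsto\pbound+(1-\pbound)x$. The only cosmetic difference is that the paper first computes $\bbound_\pbound$ exactly (in terms of $\lambda=\min_s\mon_0(s)/\mon_1(s)$) before bounding it by $g$, whereas you bound each $\bupdate_a(\belief\mid s)$ directly via $\mon_a(s)\geq\pbound\mon_1(s)$; both routes land on the same one-step inequality and the rest is identical.
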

\begin{proof}
In the expression defining $\bupdate_a(\belief|s)$, the numerator does not depend on $a$, whereas the denominator is affine (hence monotone) in it, so that $\bupdate_a(\belief|s)$ is monotone in $a$ too. Therefore, 
$$\max_{a\in[\pbound,1]} \bupdate_a(\belief|s)
=\max\curlyb{\bupdate_\pbound(\belief|s), \ \bupdate_1(\belief|s)} = \max\curlyb{\bupdate_\pbound(\belief|s), 
\ \belief}.$$
For each $s\in S$, observe that
$$\bupdate_\pbound(\belief|s)=\frac{\belief}{\belief + (1-\belief)\brac{\pbound + (1-\pbound)\tfrac{\mon_0(s)}{\mon_1(s)}}},
$$
which is strictly decreasing in $\tfrac{\mon_0(s)}{\mon_1(s)}$. Letting $\lrat:=\min_{s\in S}\tfrac{\mon_0(s)}{\mon_1(s)}\in[0,1]$, it follows that 
$$\bbound_\pbound(\belief)=\max_{s\in S} \max\curlyb{\bupdate_\pbound(\belief|s), 
\ \belief}
=
\max\curlyb{\max_{s\in S} \bupdate_\pbound(\belief|s), 
\ \belief}
= 
\frac{\belief}{\belief + (1-\belief)\brac{\pbound + (1-\pbound)\lrat}} \in (0,1).
$$
This quantity is increasing in $\belief$ because $\pbound + (1-\pbound)\lrat\leq1$, and so the composition $\bbound_\pbound^t$ is also increasing.

We now pursue the upper bound. That $\lrat\geq0$ tells us $\bbound_\pbound(\belief)\leq \frac{\belief}{\belief + (1-\belief)\pbound}$, implying $$\frac{\bbound_\pbound(\belief)}{1-\bbound_\pbound(\belief)}\leq \frac{\belief}{(1-\belief)\pbound}.$$
Because $\bbound_\pbound$ is increasing, it follows from induction that $\frac{\bbound^t_\pbound(\belief)}{1-\bbound^t_\pbound(\belief)}\leq \frac{\belief}{(1-\belief)\pbound^t}$ for every $t\in\Zpos$. Rearranging this inequality yields 
$\bbound^t_\pbound(\belief)\leq \frac{\belief}{\belief+(1-\belief)\pbound^t},$
and so (since $\pbound\leq1$)
\begin{eqnarray*}
  \bbound^t_\pbound(\belief) + \brac{1-\bbound^t_\pbound(\belief)}\pbound
  &\leq& \frac{\belief}{\belief+(1-\belief)\pbound^t} + 
\frac{(1-\belief)\pbound^t}{\belief+(1-\belief)\pbound^t}\pbound  \\
&=& \frac{\belief+(1-\belief)\pbound^{t+1}}{\belief+(1-\belief)\pbound^t} \in (0,1).\end{eqnarray*}

All that remains now is to see that this bound increases with $t$. And indeed, the bound is equal to 
$$\frac{\tfrac{\belief}{1-\belief}+\pbound\cdot\pbound^{t}}{\tfrac{\belief}{1-\belief}+\pbound^t},$$
which (because $0<\pbound<1$) is strictly decreasing in $\pbound^t$ and so strictly increasing in $t$.
\end{proof}

The next lemma derives a parametric upper bound on the voters' outside option.
\begin{Lemma}\label{lem: bound on outside option}
Suppose \condref{FEI} fails, and let $T$ be as delivered by \cref{lem: failure of FB is uniform}. In any equilibrium, the voters' outside option is below $$\cost+\inf_{\pbound\in(0,1)}\frac{\prior+(1-\prior)\pbound^{T+1}}{\prior+(1-\prior)\pbound^{T}}.$$
\end{Lemma}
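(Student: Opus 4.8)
The plan is to pin down the outside option $\ooption:=\eeff(\emptyset)$ by producing, within $T$ periods of a fresh incumbent taking office, an on-path history at which an opportunistic incumbent shirks, and then squeezing the reputation there between a lower bound from voter incentives and an upper bound from Bayesian updating. Write $\mu:=\ooption-\cost$. If $\mu\le 0$ we are done, since the right-hand side of the claimed bound exceeds $\cost$ (the infimum there is at least $\prior>0$), so assume $\mu>0$. Since $\cval(\emptyset)\ge(1-\delta)(1-\ecost)>0$, \cref{lem: continuation value can't keep growing} applied at $h_0=\emptyset$ (with $T$ furnished by \cref{lem: failure of FB is uniform}) produces signals $s_0,\dots,s_{T-1}$ and histories $h_1,\dots,h_T$ with $h_{t+1}=(h_t,s_t)$, and a least index $\tau\in\{0,\dots,T-1\}$ with $\pstrat(h_\tau)=0$. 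For $t<\tau$ we have $\pstrat(h_t)>0$, so politician incentive compatibility at $h_t$ forces $\sum_{s}[\mon_1(s)-\mon_0(s)][1-\vstrat(h_t,s)]\cval(h_t,s)\ge(1-\delta)\ecost/\delta>0$; hence the maximand defining $s_t$ is strictly positive, so $\vstrat(h_{t+1})<1$, and inductively each $h_j$ with $j\le\tau$ is on path (every signal has $\mon_1>0$).

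I would then extract two bounds on $\belief(h_\tau)$. For $\tau\ge 1$: each $h_j$ with $1\le j\le\tau$ has $\vstrat(h_j)<1$, so voter incentive compatibility gives $\eeff(h_j)\ge\eeff(\emptyset)-\cost=\mu$; together with $\eeff(h_0)=\ooption\ge\mu$ this holds for all $j\le\tau-1$, while at $h_\tau$, since $\pstrat(h_\tau)=0$, we get $\belief(h_\tau)=\eeff(h_\tau)\ge\mu$. For the upper bound, the Bayesian property along the path gives $\belief(h_{j+1})=\belief(h_j)/[\eeff(h_j)+(1-\eeff(h_j))\tfrac{\mon_0(s_j)}{\mon_1(s_j)}]\le\belief(h_j)/\eeff(h_j)\le\belief(h_j)/\mu$ for $j<\tau$, so iterating from $\belief(h_0)=\prior$ yields $\belief(h_\tau)\le\prior/\mu^{\tau}$. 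Hence $\mu^{\tau+1}\le\prior$; and since $\belief(h_\tau)<1$ on path (because $\pstrat(h_{\tau-1})>0$) we have $\mu\le\belief(h_\tau)<1$, so, as $\tau+1\le T$, also $\mu^{T}\le\prior$. When $\tau=0$ instead, $\pstrat(\emptyset)=0$ forces $\ooption=\prior$, so $\mu\le\prior$ and again $\mu^{T}\le\prior$ with $\mu\in(0,1)$.

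It remains to check that $\mu^{T}\le\prior$ and $\mu\in(0,1)$ force $\ooption=\mu+\cost<\cost+g(\pbound)$ for every $\pbound\in(0,1)$, where $g(\pbound):=\tfrac{\prior+(1-\prior)\pbound^{T+1}}{\prior+(1-\prior)\pbound^{T}}$; this rearranges to $\prior(1-\mu)+(1-\prior)\pbound^{T}(\pbound-\mu)>0$. For $\pbound\ge\mu$ both summands are nonnegative and the first is strictly positive. For $\pbound<\mu$ one bounds $\pbound^{T}(\mu-\pbound)\le\max_{x\in(0,\mu)}x^{T}(\mu-x)=\tfrac{T^{T}}{(T+1)^{T+1}}\mu^{T+1}<\tfrac{\mu^{T+1}}{T+1}$, so it suffices that $\prior(1-\mu)(T+1)\ge(1-\prior)\mu^{T+1}$; using $\prior\ge\mu^{T}$, factoring $1-\mu^{T}=(1-\mu)\sum_{i=0}^{T-1}\mu^{i}$, and cancelling $1-\mu>0$, this reduces to $T+1\ge\mu+\mu^{2}+\dots+\mu^{T}$, which holds since the right side is a sum of $T$ terms each below $1$. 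Finally, $g$ extends continuously to $[0,1]$ with $g(0)=g(1)=1$ and $g<1$ on $(0,1)$, so $\inf_{\pbound\in(0,1)}g(\pbound)$ is attained at some $\pbound^{\ast}\in(0,1)$; hence $\mu<g(\pbound^{\ast})=\inf_{\pbound\in(0,1)}g(\pbound)$, that is, $\ooption<\cost+\inf_{\pbound\in(0,1)}g(\pbound)$.

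The step I expect to be the main obstacle is the upper bound on $\belief(h_\tau)$. One would like to iterate the sharp per-period reputation-growth bound $\bbound_{\mu}$ from \cref{lem: belief upper bound is well-behaved}, but that requires the \emph{opportunistic type's own} effort $\pstrat(h_j)$---not merely the voter-expected effort $\eeff(h_j)$---to be at least $\mu$, which need not hold. The resolution is to accept the cruder estimate $\eeff(h_j)+(1-\eeff(h_j))\tfrac{\mon_0(s_j)}{\mon_1(s_j)}\ge\eeff(h_j)\ge\mu$, which still suffices once combined with the elementary facts $\tfrac{T^{T}}{(T+1)^{T+1}}<\tfrac1{T+1}$ and $T+1>\mu+\mu^{2}+\dots+\mu^{T}$.
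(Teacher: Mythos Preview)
Your proof is correct and takes a genuinely different route from the paper's. The paper argues by contradiction: fixing $\pbound\in(0,1)$ with $\ooption-\cost>g(\pbound)$, it shows inductively that $\pstrat(h_t)\ge\pbound$ and $\belief(h_t)\le\bbound_\pbound^t(\prior)$ for every $t\in\{0,\ldots,T-1\}$ (using \cref{lem: belief upper bound is well-behaved} to pass from the belief cap to the required lower bound on the \emph{opportunistic type's} effort), which then contradicts \cref{lem: continuation value can't keep growing}. You instead apply \cref{lem: continuation value can't keep growing} at the outset, locate the first shirking history $h_\tau$, and squeeze $\belief(h_\tau)$ directly: voter IC pins $\belief(h_\tau)\ge\mu$ from below, and the identity $\belief(h_{j+1})=\belief(h_j)/\bigl[\eeff(h_j)+(1-\eeff(h_j))\tfrac{\mon_0(s_j)}{\mon_1(s_j)}\bigr]\le\belief(h_j)/\eeff(h_j)\le\belief(h_j)/\mu$ bounds it above. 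The key observation---that the Bayes denominator can be written in terms of the \emph{voter-expected} effort $\eeff$ rather than $\pstrat$---lets you bypass \cref{lem: belief upper bound is well-behaved} entirely and obtain the clean intermediate fact $(\ooption-\cost)^T\le\prior$, after which the result is pure algebra. The paper's route is shorter once \cref{lem: belief upper bound is well-behaved} is in hand and makes transparent why the bound takes the form it does; your route is more self-contained, delivers a strict inequality, and surfaces a tidy standalone estimate on the outside option.
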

\begin{proof}
Fixing an equilibrium, assume for a contradiction that the voters' outside option $\ooption$ has $\ooption-\cost > \frac{\prior+(1-\prior)\pbound^{T+1}}{\prior+(1-\prior)\pbound^{T}}$ for some $\pbound\in(0,1)$. 

Let $h_0:=\emptyset$, and then let $s_0,\ldots,s_{T-1}\in S$ and $h_1,\ldots,h_{T}\in \Hist$ be as in \cref{lem: continuation value can't keep growing}.

Now, let us show by induction that every $t\in\{0,\ldots,T-1\}$ has both $\pstrat(h_t)\geq\pbound$ and $\belief(h_t)\leq\bbound_\pbound^t(\prior)$. 
For the base case ($t=0$), observe that $\bbound_\pbound^0(\prior)=\prior=\belief(h_0)$, while 
$$\prior+(1-\prior)\pstrat(h_0)=\ooption
>\cost+\frac{\prior+(1-\prior)\pbound^{T}\pbound}{\prior+(1-\prior)\pbound^{T}}
\geq 0+\frac{\prior+(1-\prior)\pbound\cdot 1}{\prior+(1-\prior)1}=\prior+(1-\prior)\pbound\cdot 1,$$
so that $\pstrat(h_0)\geq\pbound$.

Let us turn next to the inductive step. Suppose $t\in\{0,\ldots,T-2\}$ has both $\pstrat(h_t)\geq\pbound$ and $\belief(h_t)\leq\bbound_\pbound^t(\prior)$; we aim to show $t+1$ enjoys the same properties. First, we have 
$$\belief(h_{t+1})\leq\bbound_\pbound\paren{\belief(h_{t})}\leq \bbound_\pbound\paren{\bbound_\pbound^t(\prior)}=\bbound_\pbound^{t+1}(\prior),
$$
where the first inequality holds because $\pstrat(h_t)\geq\pbound$, and the second inequality holds (given \cref{lem: belief upper bound is well-behaved}) because $\belief(h_t)\leq\bbound_\pbound^t(\prior)$. Second, observe that $\pstrat(h_t)\geq\pbound>0$ would not be incentive compatible if $\brac{1-\vstrat(h_t,s)}\cval(h_t,s)$ were zero for every $s\in S$. Hence, given the definition of $s_t$ and $h_{t+1}$, we necessarily have $\vstrat(h_{t+1})<1$. Voter incentives then imply 
\begin{eqnarray*}
\ooption-\cost 
&\leq& \belief(h_{t+1}) + \brac{1-\belief(h_{t+1})}\pstrat(h_{t+1}) \\
&\leq& \bbound_\pbound^{t+1}(\prior) + \brac{1-\bbound_\pbound^{t+1}(\prior)}\pstrat(h_{t+1}).
\end{eqnarray*}
Meanwhile, by hypothesis and given \cref{lem: belief upper bound is well-behaved}, we have
\begin{eqnarray*}
\ooption-\cost 
&>& \frac{\prior+(1-\prior)\pbound^{T+1}}{\prior+(1-\prior)\pbound^{T}}\\
&\geq&\bbound_\pbound^{t+1}(\prior) + \brac{1-\bbound_\pbound^{t+1}(\prior)}\pbound.
\end{eqnarray*}
Combining the two inequality chains yields $\pstrat(h_{t+1})\geq\pbound$, completing the inductive step.

That $\pstrat(h_t)>0$ for every $t\in\{0,\ldots,T-1\}$ contradicts \cref{lem: continuation value can't keep growing}.
\end{proof}

To prove \autoref{prop: outside option can be super low}, we show that the bound of \cref{lem: bound on outside option} can be made arbitrarily small if the prior and replacement costs are both small.\footnote{Although \cref{sec: model} assumes $\cost<\min\{\prior,1-\prior\}$, the restriction is evidently not needed for \cref{prop: outside option can be super low}: if \condref{FEI} fails and $\cost$ and $\prior$ are both very small, then the outside option is very small in all equilibria, irrespective of the relationship between $\cost$ and $\prior$.}

\begin{proof}[Proof of \cref{prop: outside option can be super low}]
Let $T$ be as given by \cref{lem: failure of FB is uniform}. 
Because \begin{eqnarray*}
0&\leq&\cost+\inf_{\pbound\in(0,1)}\frac{\prior+(1-\prior)\pbound^{T+1}}{\prior+(1-\prior)\pbound^{T}}\\
&\leq& \cost+\frac{\prior+(1-\prior)\prior^{(T+1)/(T+2)}}{\prior+(1-\prior)\prior^{T/(T+2)}}\\
&\to& 0 \text{ as } (\cost,\prior)\to(0,0),
\end{eqnarray*}
it follows that $\cost+\inf_{\pbound\in(0,1)}\frac{\prior+(1-\prior)\pbound^{T+1}}{\prior+(1-\prior)\pbound^{T}}$ converges to zero as $\cost$ and $\prior$ do. The proposition then follows from \cref{lem: bound on outside option}.
\end{proof}

Finally, although not stated in the main text, we record one further result that extends our main equivalence result to the voters' outside option being bounded away from first best, so long as the replacement cost is sufficiently small.

\begin{Corollary}\label{cor: equivalence theorem with extra condition}
For any $(\ecost,\delta,\mon,\prior)$, there exists $\bar\cost>0$ such that, if $\cost<\bar\cost$, then the following are equivalent:

\begin{enumerate}
  \item \label{cor, allEFE} All equilibria attain eventual full effort;
  \item \label{cor, noneFB} No equilibrium attains full effort;
  \item \label{cor, FBIfails} \condref{FEI} fails.
  \item \label{cor, OO} Across all equilibria, the voters' outside option is bounded away from $1$.
\end{enumerate}
\end{Corollary}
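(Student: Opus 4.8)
The plan is to reduce the claim to results already in hand. The equivalence of items \ref{cor, allEFE}, \ref{cor, noneFB}, and \ref{cor, FBIfails} is precisely \cref{thm: equivalence theorem}, which already presupposes $\cost<\min\{\prior,1-\prior\}$; so it suffices to arrange $\bar\cost\le\min\{\prior,1-\prior\}$ and then separately show that item \ref{cor, OO} (the outside option bounded away from $1$ across all equilibria) is equivalent to a failure of \condref{FEI} once $\cost$ is small.

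For the direction ``\condref{FEI} fails $\Rightarrow$ item \ref{cor, OO}'', I would invoke \cref{lem: bound on outside option}: with $T$ as delivered by \cref{lem: failure of FB is uniform}, every equilibrium has outside option below $\cost+\theta$, where $\theta:=\inf_{\pbound\in(0,1)}\tfrac{\prior+(1-\prior)\pbound^{T+1}}{\prior+(1-\prior)\pbound^{T}}$. Since $\pbound^{T+1}<\pbound^{T}$ for every $\pbound\in(0,1)$, each term in this infimum is strictly below $1$, hence $\theta<1$; and $\theta$ depends only on $(\ecost,\delta,\mon)$ (through $T$) and on $\prior$. Thus, provided $\cost<1-\theta$, the quantity $\cost+\theta$ is a single constant strictly below $1$ that dominates the outside option of every equilibrium, which is exactly item \ref{cor, OO}. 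For the converse (in contrapositive form), if \condref{FEI} holds and $\cost\le 1-\prior$, then \cref{lem: first best equilibrium} produces an equilibrium that attains full effort; in that equilibrium $\actrv_0=1$ almost surely, so its outside option $\eeff(\emptyset)=\E[\actrv_0]$ equals $1$, and item \ref{cor, OO} fails. Together these give \ref{cor, FBIfails}$\Leftrightarrow$\ref{cor, OO}.

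To assemble the corollary, take $\bar\cost:=\min\{\prior,1-\prior\}$ when \condref{FEI} holds and $\bar\cost:=\min\{\prior,1-\prior,1-\theta\}$ when it fails; either way $\bar\cost>0$ and depends only on $(\ecost,\delta,\mon,\prior)$. For $\cost<\bar\cost$ we then have $\cost<\min\{\prior,1-\prior\}$, so \cref{thm: equivalence theorem} yields \ref{cor, allEFE}$\Leftrightarrow$\ref{cor, noneFB}$\Leftrightarrow$\ref{cor, FBIfails}, while the two implications of the previous paragraph supply \ref{cor, FBIfails}$\Leftrightarrow$\ref{cor, OO}, completing the chain. I do not expect a substantive obstacle: the analytic content sits entirely in \cref{lem: bound on outside option} and \cref{lem: first best equilibrium}, and the only genuinely new points are the elementary observation that the bound of \cref{lem: bound on outside option} is a fixed number strictly below $1$, and the bookkeeping needed to check that $\bar\cost$ can be chosen as a function of the primitives alone (in particular, that the threshold $1-\theta$ is itself pinned down by $(\ecost,\delta,\mon,\prior)$).
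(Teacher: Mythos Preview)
Your proposal is correct and follows essentially the same approach as the paper: both invoke \cref{thm: equivalence theorem} for the equivalence of \ref{cor, allEFE}, \ref{cor, noneFB}, \ref{cor, FBIfails}, use \cref{lem: bound on outside option} for \ref{cor, FBIfails}$\Rightarrow$\ref{cor, OO}, and observe that a full-effort equilibrium has outside option $1$ for the remaining direction (the paper states ``obviously \ref{cor, OO} implies \ref{cor, noneFB}'' while you spell out the contrapositive via \cref{lem: first best equilibrium}, but these are the same argument). One small remark: your inference ``each term in the infimum is strictly below $1$, hence $\theta<1$'' is correct in conclusion but slightly imprecise in reasoning---an infimum of numbers strictly below $1$ can equal $1$ in general---so you should phrase it as the paper does, namely that $\theta\le\tfrac{\prior+(1-\prior)\pbound_0^{T+1}}{\prior+(1-\prior)\pbound_0^{T}}<1$ for any fixed $\pbound_0\in(0,1)$.
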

\begin{proof}
Let $\hat\cost:=\min\{\prior,1-\prior\}$. If $(\ecost,\delta,\mon)$ satisfy \condref{FEI}, then let $\bar\cost:=\hat\cost$. If $(\ecost,\delta,\mon)$ violate \condref{FEI}, then let $\bar\cost\in(0,\hat\cost)$ be such that $\bar\cost+\inf_{\pbound\in(0,1)}\frac{\prior+(1-\prior)\pbound^{T+1}}{\prior+(1-\prior)\pbound^{T}}<1$, where $T$ is as given by \cref{lem: failure of FB is uniform}. Note that such $\bar\cost$ exists because $\frac{\prior+(1-\prior)\pbound^{T+1}}{\prior+(1-\prior)\pbound^{T}}<1$ for any given $\pbound\in(0,1)$.

\cref{thm: equivalence theorem} tells us conditions \ref{cor, allEFE}, \ref{cor, noneFB}, and \ref{cor, FBIfails} are all equivalent; and obviously condition \ref{cor, OO} implies \ref{cor, noneFB}. Lastly, \cref{lem: bound on outside option} tells us condition \ref{cor, FBIfails} implies \ref{cor, OO}.
\end{proof}

\afterpage{
    \clearpage 
    \begin{table}[htbp]
        \centering
        \makebox[\textwidth][c]{
        \begin{threeparttable}[htbp]
            \centering
            \renewcommand{\arraystretch}{1.5}
            \begin{tabular}{l p{13.5cm} @{\extracolsep{2em}} l}
                \toprule
                \textbf{Symbol} & \textbf{Description} & \textbf{Defined In} \\
                
                \midrule
                \multicolumn{3}{l}{\textit{Histories, Strategies, and Values}} \\
                $\Hist$ & Set of all (public) career histories & \secref{sec: model} \\
                $\mon_{\aprob}(s)$ & Probability of signal $s$ given $\aprob$ expected effort: equals $\aprob\mon_1(s)+(1-\aprob)\mon_0(s)$ & \secref{sec: model} \\
               $\pstrat(h)$ & Probability an opportunistic incumbent exerts effort at history $h$ & \secref{sec: model} \\
                $\vstrat(h)$ & Probability the voter replaces the incumbent at history $h$ & \secref{sec: model} \\
                $\belief(h)$ & Incumbent's reputation (belief that he is the {good} type) at history $h$ & \secref{sec: model} \\
                $\eeff(h)$ & Expected effort at history $h$: equals $\belief(h) + \brac{1-\belief(h)}\pstrat(h)$ & \secref{sec: model} \\
                $\cval(h)$ & Opportunistic incumbent's continuation value at history $h$ & \secref{sec: model} \\
                $\ooption$ & Voters' outside option: equals $\prior+(1-\prior)\pstrat(\emptyset)$ & \secref{subsec:good-longrun} \\    
                \midrule
                \multicolumn{3}{l}{\textit{Belief Operators}} \\
                $\bupdate_a(\belief \mid s)$ & Bayesian posterior from belief $\belief$, opportunistic type's effort $a$, and signal $s$ & \secref{sec: proof preliminaries} \\
                $\bbound_\eta(\belief)$ & Highest possible Bayesian posterior from belief $\belief$ and opportunistic type's effort at least $\eta$ & \secref{sec: proof preliminaries} \\
                $\bbound_\pbound^t$ & The $t$-fold composition of the operator $\bbound_\pbound$ & \secref{sec: proof preliminaries} \\
                
                \midrule
                \multicolumn{3}{l}{\textit{Random Variables}} \\
                $\replacerv_t$ & Indicator variable for replacement: equals 1 if first\tnote{a} \ officeholder has been replaced by time $t$ & \secref{sec: proof EFE}\\
                $\replacetimerv$ & Replacement time of first officeholder: equals $\inf\{t : \replacerv_t = 1\}\in\Zpos\cup\{\infty\}$ & \secref{sec: proof EFE} \\
                $\beliefrv_t$ & First officeholder's reputation at time $t$ (stopped at $\replacetimerv$) & \secref{sec: proof EFE} \\
                $\actprv_t$ & Officeholder's expected effort at time $t$ conditional on being opportunistic & \secref{sec: proof EFE} \\
                $\beliefrv_\infty, \replacerv_\infty, \actprv_\infty$ & Almost-sure limits of the stochastic processes $\beliefrv_t$, $\replacerv_t$, $\actprv_t$ & \secref{sec: proof EFE} \\
                \bottomrule
            \end{tabular}
        
        \begin{tablenotes}
        \footnotesize
        \item[a] By equilibrium symmetry, an analog applies to other officeholders for this and other relevant random variables
        \end{tablenotes}

            \caption{Summary of Key Notation used in the Proof Appendix}
        \label{table:notation}    
        \end{threeparttable}
        }
    \end{table}
    \clearpage 
}

\newpage
\bibliographystyle{ecta}
\bibliography{replacement.bib}
\end{document}